\documentclass{article}

\usepackage[preprint]{neurips_2026}

\usepackage[utf8]{inputenc} % allow utf-8 input
\usepackage[T1]{fontenc}    % use 8-bit T1 fonts
\usepackage{hyperref}       % hyperlinks
\usepackage{url}            % simple URL typesetting
\usepackage{booktabs}       % professional-quality tables
\usepackage{amsfonts}       % blackboard math symbols
\usepackage{nicefrac}       % compact symbols for 1/2, etc.
\usepackage{microtype}      % microtypography
\usepackage{xcolor}         % colors
\usepackage{soul}
\usepackage{graphicx}
\usepackage{amsmath, amssymb, amsthm}
\usepackage{adjustbox}

\newtheorem{theorem}{Theorem}
\newtheorem{lemma}[theorem]{Lemma}
\newtheorem{proposition}[theorem]{Proposition}
\newtheorem{corollary}[theorem]{Corollary}

\theoremstyle{definition}

\newtheorem{assumption}[theorem]{Assumption}
\newtheorem{remark}[theorem]{Remark}

\newcommand{\E}{\mathbb{E}}

\newcommand{\R}{\mathbb{R}}

\newcommand{\dfn}{:=}
\newcommand{\dfnrev}{=:}

\newcommand{\A}{\mathcal{A}}
\newcommand{\B}{\mathcal{B}}
\newcommand{\Z}{\mathcal{Z}}
\newcommand{\W}{\mathcal{W}}
\newcommand{\I}{\mathbf{I}}
\newcommand{\bw}{\mathbf{w}}

\newcommand{\clmapx}{{\Phi}_x}
\newcommand{\clmapu}{{\Phi}_u}

\newcommand{\clmapxaff}{{\phi}_x}
\newcommand{\clmapuaff}{{\phi}_u}

\newcommand{\genDstro}{\mathcal{D}}
\newcommand{\posterior}{\rho}
\newcommand{\prior}{\pi}

\renewcommand{\Re}{\mathbb{R}}

\DeclareMathOperator{\tr}{tr}
\DeclareMathOperator{\KL}{KL}

\title{PAC-Bayesian Certificates for Quadratic Closed-Loop Control}
\author{%
  Domagoj Herceg\thanks{Fallback address: doma.herceg@gmail.com} \\
  Department of Mechanical Engineering\\
  Eindhoven University of Technology\\
  The Netherlands \\
  \texttt{d.herceg@tue.nl} \\
}

\begin{document}

\maketitle

\begin{abstract}
% PAC-Bayesian generalization bounds have found immense success in the machine learning field. However, in learning for control, they have largely been restricted to bounded or norm-type losses at best, since the natural quadratic cost yields controller-dependent Chernoff radii that collapse under broad posterior families. However, squared type losses are natural in control due to deep ties with quadratic Lyapunov forms and dissipativity. We tackle this issue by deriving a posterior-localized PAC-Bayes-Chernoff surrogate that imposes the admissibility condition on the deployed controller distribution as a whole rather than pointwise over its support. Applied to finite-horizon affine System Level Synthesis parameterizations, this yields explicit, tractable certificates for two disturbance models: exact and quadratic-upper-bound certificates under Gaussian disturbances, as well as Hoeffding-type bound under bounded noise. The resulting bounds depend on the closed-loop sensitivity map of the deployed posterior, giving a direct perturbation-robustness interpretation yet unexplored to the best of our knowledge.
% We conclude with numerical experiments to illustrate posterior localization and closed-loop sensitivity.
PAC-Bayesian bounds provide finite-sample guarantees for data-dependent randomized predictors, 
but applying them to learning-based control is difficult because the natural objective is a quadratic trajectory cost. 
Such losses are unbounded, non-Lipschitz , and lead to response-dependent Chernoff terms. 
We employ System Level Synthesis parameterization, which exposes the closed-loop trajectory map of a linear system directly 
and makes the quadratic control loss amenable to explicit certification.
Moreover, we provide a set of PAC-Bayes-Chernoff certificates for posterior distributions over feasible closed-loop responses. 
For Gaussian disturbance trajectories with arbitrary covariance, we derive an exact one-sided Gaussian transform and 
a tractable quadratic upper bound expressed through closed-loop sensitivity quantities. 
%The resulting certificate 
%is valid for all nonnegative Chernoff parameters, exploiting the one-sided upper-risk direction that is natural for deployment certification. 
We also derive %bounded-disturbance certificates and present 
a posterior-localized surrogate for settings where pointwise closed-loop response certificates are unavailable or have support related admissibility issues. 
Although PAC-Bayes certifies a non-degenerate posterior, the convex quadratic form of the SLS loss transfers the certificate to the posterior mean response. 
We present a deterministic mean response deployment result that is particularly suitable for control while retaining the stochastic posterior in the bound.
Additionally, we provide a data-driven bound for this deployment, transitioning away from an oracle bound. Minimizing this 
bound naturally results in a learning algorithm for control selection from data.
Numerical experiments on a double integrator show that the algorithm acts as a sensitivity-aware finite-sample regularizer, 
improving held-out cost and reducing closed-loop sensitivity in the low-data regime.
% Numerical experiments on a double-integrator system show that minimizing the certificate learns
% posterior distributions with lower empirical cost, lower certified risk, and substantially reduced
% closed-loop sensitivity across random seeds. Moreover, deployed mean controller shows remarkable robustness and performance in low-data regime.
\end{abstract}

% PAC-Bayesian bounds provide finite-sample guarantees for data-dependent randomized predictors, but their use in control is complicated by the native quadratic trajectory costs of linear control. These losses are unbounded, non-Lipschitz in the disturbance trajectory, and induce response-dependent Chernoff terms. We study finite-sample certification of data-selected linear controllers under these native quadratic costs. By parameterizing feasible closed-loop responses with finite-horizon System Level Synthesis, the weighted trajectory becomes an affine function of the disturbance and the control loss becomes a squared affine response. This structure enables exact one-sided Gaussian Chernoff calculations and tractable PAC-Bayes certificates expressed through closed-loop sensitivity quantities. Although the learned object is a non-degenerate posterior over feasible SLS responses, convexity of the quadratic risk transfers the certificate to the deterministic posterior mean response, avoiding the infinite-KL cost of a Dirac posterior. We further give a covariance-inflated data-driven refinement for the case where the disturbance covariance is unknown, yielding a computable deterministic certificate. Numerical experiments on a double integrator show that posterior optimization followed by mean deployment acts as a sensitivity-aware finite-sample regularizer, improving held-out cost and reducing closed-loop sensitivity in the low-data regime.

\section{Introduction}
%In the machine learning (ML) community, 
Learning methods have made great strides in the last decade, mostly due to the enormous availability of data. 
Due to the complexity of modern control problems, there has been notable interest in data-driven/learning methods in the control community~\citep{vale_data_driven_control,Dorfler_bridging,Recht2019,Markovsky2021}.
%and the control community has adopted many of them~\citep{bemporad2022recurrent}.
Unlike the usual bounded costs, such as zero-one training loss in machine learning, squared unbounded costs are the norm in control applications for the most part.
This stems from a connection to stability via Lyapunov quadratic forms~\citep{grune2016nonlinear}, the explicit solution for the canonical Linear Quadratic Regulator problem~\citep{athans2013optimal}, and the general convexity of the cost~\citep{boyd2004convex}, which enables fast and reliable solution algorithms. 
%Moreover, researchers in control usually assume more structure, hence the emergence of more specialized results.
We are often given an approximate model of the system, and the trajectory of the control actions is optimized w.r.t
the loss induced by that model. In what follows, we will interlace the terminology from both control and ML, which may result 
in slightly unfamiliar notation for both sides. However, we lean more toward the machine learning notation borrowed from PAC-Bayesian literature.

Classical linear control typically certifies a single controller, while robust control certifies against a worst-case set. In learning-based control, however, controllers are often selected from finite rollout data, and the pressing question is whether the selected controller generalizes well to unseen disturbance trajectories. As a consequence, it is important to offer guarantees that the learned controller will perform well on unseen data. 
In the Probably Approximately Correct (PAC) framework~\citep{valiant1984theory}, the goal is to provide that guarantee with high probability. In our view, a particular instance of the PAC framework, the PAC-Bayes theory\citep{friendlyPACBayes,Guedj2019_primer_OAC}, provides a natural intermediate guarantee as it certifies the expected deployment cost of a posterior distribution over hypotheses, with a KL penalty measuring how far this posterior moves from a data-independent prior. This gives a finite-sample certificate for a \emph{distribution} of hypotheses rather than a single point estimate and can be interpreted as a robustness measure regarding perturbations. Other PAC (or PAC style) methods usually found in control include conformal predictions~\citep{lindemann2024formal,anastasions_cp,vlahakis2024conformal} and the scenario approach~\citep{campi2018introduction,campi2009scenario}.

Unlike the traditional view of designing a controller directly, System Level Synthesis~\citep{wang2019_SLS} (SLS) makes this PAC-Bayesian viewpoint tractable for linear systems. By parameterizing feasible closed-loop responses directly, the weighted trajectory can be written as an affine function of the disturbance trajectory, so the finite-horizon quadratic control cost is exactly the squared norm of this response. This exposes the closed-loop sensitivity maps entering the certificate and allows explicit one-sided Chernoff calculations for Gaussian % and bounded 
disturbance models. The SLS parameterization makes this certificate remarkably interpretable: posterior uncertainty is penalized in directions that strongly amplify disturbances into quadratic state-input costs and can remain spread in directions that are benign for closed-loop performance.
Moreover, it certifies that the learned posterior over closed-loop responses generalizes from sampled disturbance trajectories to new trajectories drawn from the same deployment law. 

% The PAC-Bayes certificate does not provide worst-case robustness to arbitrary disturbances. Rather, it certifies that the learned posterior over closed-loop responses generalizes from sampled disturbance trajectories to new trajectories drawn from the same deployment law. 

\paragraph{Related work}
Despite its remarkable success in the field of machine learning, the adoption of PAC-Bayesian methods in learning control
has remained  somewhat muted. PAC-Bayesian methods have only recently been used to certify learning-based controllers by translating
finite-sample generalization guarantees from supervised learning to control. Work by~\citep{Majumdar_PACB_generalize_novel_env,majumdar2020pacbayescontrollearningpolicies}
introduced PAC-Bayes control for robotic policies that generalize across environments, with algorithms that
optimize PAC-Bayes bounds over finite or continuously parameterized policy classes. More recent work~\citep{boroujeni2025pac_new,BoroujeniPACSNOC}
has developed PAC-Bayesian optimal control frameworks for stochastic nonlinear systems, including
posterior-based controller design and stabilizing neural controller parameterizations. These works
show that PAC-Bayes is a natural tool for certifying data-dependent randomized controllers, but they
primarily focus on bounded costs. The latter works also focus on stability-preserving nonlinear control, 
assuming a known stabilizing controller, in addition to considering bounded disturbances.

Rather than targeting general nonlinear policy classes, we exploit the
finite-horizon affine SLS parameterization for linear systems, which exposes the closed-loop trajectory
as an affine function of the disturbance. This allows us to treat the native quadratic state-input cost
directly. In particular, we derive exact and tractable PAC-Bayes-Chernoff certificates for unbounded
quadratic losses under Gaussian disturbance trajectories, with arbitrary covariance.%, and bounded-noisecertificates. 
A related work to ours is a recent preprint~\citep{herceg2026distributionally} 
that combines PAC-Bayes with distributional robustness to address the mismatch between
training and deployment environments via SLS parameterization. However, they focus on norm type losses, heavily exploit the Lipschitz nature of the loss,
 and use SLS as a means to an end without explaining the structural benefits of such a choice.
%Distributionally robust PAC-Bayesian control uses Wasserstein ambiguity sets and SLS to obtain performance guarantees under distribution shift, 
%with robustness and concentration proxies tied to the closed-loop operator norm. 
In contrast, our focus is on the characterization of the true quadratic closed-loop loss under a fixed deployment
disturbance law, which was highlighted as a challenge in their work. The Gaussian certificate developed here uses the one-sided 
quadratic transform rather than a Lipschitz or sub-Gaussian proxy, yielding sensitivity terms based on the induced
closed-loop covariance. Additionally, we provide a data-driven algorithm transitioning away from oracle bounds and fully justify the use of the SLS framework.

\paragraph{Contributions.}
We make four contributions:
(i)  We show that finite-horizon SLS provides a natural setting for the \emph{stochastic} PAC-Bayes certification of quadratic closed-loop control losses.
(ii) Building on the above, we derive exact and tractable one-sided PAC-Bayes-Chernoff certificates for unbounded quadratic closed-loop losses under Gaussian disturbance trajectories with arbitrary covariance. 
(iii) We also prove a deterministic mean-response certificate. The non-degenerate posterior makes the complexity term well-defined, while the quadratic SLS identity transfers the randomized certificate to the deployed posterior mean, up to a curvature mismatch term (empirical to population mismatch), blending deterministic deployment with stochastic certification.
(iv) We derive a data-driven certificate by inflating the empirical covariance. Hence, we turn an oracle bound into a computable one.
Additionally, we show empirically on a double integrator that posterior optimization followed by mean deployment outperforms 
point optimization in the low to mid data regime, yielding lower held-out cost and lower closed-loop sensitivity.

\section{PAC-Bayes Setup and Quadratic SLS Losses}
In this section we introduce the necessary preliminaries regarding PAC-Bayes theory and the SLS parameterization.
Moreover, we discuss how the SLS parametrization naturally blends with PAC-Bayes, a connection not made before in the literature.

\subsection{PAC-Bayes}
\label{sec:pac-bayes-prelims}
PAC-Bayesian methods~\citep{Catoni2007,mcallester1998some} have garnered a lot of attention due to their powerful ability to 
provide generalization guarantees for \emph{randomized} predictors, with the most notable 
showcase being the first non-vacuous bounds for deep neural networks~\citep{DziugaiteR17,perez2021tighter}.
A typical PAC-Bayesian inequality aims  to bound the \emph{population risk}, 
the performance after deployment, by a sum of expected 
empirical performance and a complexity term. 
%An easy to read introduction to PAC-Bayes 
%theory are recent overview articles~\cite{friendlyPACBayes,Guedj2019_primer_OAC}. 
Let us denote the space of all hypothesis/control policies/closed-loop responses with $\Theta \subseteq \Re^d$. 
We will emphasize this overloaded nomenclature often throughout the paper to highlight the bridge 
between the disciplines.
Our goal is to infer 
a \emph{distribution} $\posterior \in \mathcal{M}_1(\Theta)$, called the \emph{posterior} distribution. 
We briefly note that the posterior need not be the Bayesian posterior, see~\citep{germain2016pac_meets_bayesian}.
This \emph{inference} is based on a dataset  $S  = \{w_i\}_{i=1}^n$ where the data is assumed to be i.i.d. 
sampled from an unknown data-generating distribution $\genDstro$ supported on $\W\subseteq \Re^w$. % Let $\bar \Re_+ \dfn \Re_+ \cup \infty$,
Then the abstract loss function can be stated as
\begin{align}
\ell: \Theta \times \W \mapsto \Re_{+}.
\label{eq:loss_abstract}
\end{align}
We define the population risk and the empirical risk (computable from data) as
\begin{align} 
L(\theta) \dfn \E_{w \sim \genDstro} \left[ \ell(\theta, w)\right]
\qquad 
\widehat L(\theta) \dfn \frac{1}{n} \sum_{i=1}^n \ell(\theta,w_i).
\label{eq:population_and_empricial_risk}
\end{align}
PAC-Bayes inequality then bounds the difference between the two after taking the expectation over 
the posterior $\theta \sim \posterior$. To ease the notation, we will often write $\E_\posterior[ L(\theta)]$ 
and $\E_\posterior [\widehat L(\theta)]$ with the implicit understanding that we are averaging over the variable in $\Theta$, or  $\theta \sim \rho$ in the above. 
In addition, we need to specify a \emph{data-independent prior} distribution $\prior \in \mathcal{M}_1(\Theta)$ 
that will act as an anchor against overfitting by penalizing the deviation of the posterior w.r.t the prior.
Finally, we specify the confidence  $\delta \in (0,1]$, indicating that the bound holds with a probability of no less than 
$1-\delta$. 
\paragraph{PAC-Bayes Chernoff bounds}
There are different types of PAC-Bayes bounds, Langford-Seeger-Maurer type~\citep{langford2001bounds,maurer2004note}, 
McAllester style~\cite{mcallester2003pac}, and the literature on the subject is vast. We focus on typically 
more modern results for unbounded costs and, more specifically, on the results in~\citep{PAC-Bayes-Chernoff-Bounds-Unbounded-Losses}.
The core idea of their approach is to bound the cumulant generating function (CGF) by a function $\psi$.
\begin{assumption}[$\theta$-dependent bounded CGF]
\label{ass:bounded_cgf}
For every $\theta\in\Theta$, define the centered CGF
\begin{equation}
\Lambda_\theta(\lambda)
:=
\log \E_{w \sim \genDstro} \left[
\exp\!\left(
\lambda\bigl(L(\theta)-\ell(\theta,w)\bigr)
\right)\right].
\label{eq:centered_cgf}
\end{equation}
There exist $b>0$ and a nonnegative convex function  $\psi(\theta,\lambda)$ such that $\psi(\theta,0) =\psi'(\theta,0) = 0$ and
\begin{equation}
\Lambda_\theta(\lambda)\le \psi(\theta,\lambda),
\qquad
\forall \lambda\in[0,b).
\label{eq:cgf_domination}
\end{equation}
\end{assumption}

Assumption~\ref{ass:bounded_cgf} is essential for the derivation of the following theorem.
\begin{theorem}[PAC-Bayes bound under bounded CGF~\citep{PAC-Bayes-Chernoff-Bounds-Unbounded-Losses}]
\label{thm:pb_cgf}
Suppose Assumption~\ref{ass:bounded_cgf} holds. Then for any $\delta\in(0,1)$, 
with probability at least $1-\delta$ over the draw of $S\sim\mathcal{D}^n$, 
the following holds simultaneously for all posteriors $\posterior\ll\pi$:
\begin{equation}
\E_{\theta\sim\posterior}[L(\theta)]
\le
\E_{\theta\sim\posterior}[\widehat{L}(\theta)]
+
\underbrace{
\inf_{\lambda \in [0,b)}
\left\{
\frac{\KL(\posterior \| \prior)+\log(n/\delta)}{\lambda(n-1)}
+
\frac{\E_{\theta\sim\posterior}[\psi(\theta,\lambda)]}{\lambda}
\right\}}_{\mathrm{Comp}(\rho,S,\delta)},
\label{eq:pb_cgf_main}
\end{equation}
\end{theorem}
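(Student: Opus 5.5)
The plan follows the route of \citet{PAC-Bayes-Chernoff-Bounds-Unbounded-Losses}. The infimum over $\lambda$ in \eqref{eq:pb_cgf_main} may be data-dependent, so fixing $\lambda$ in advance will not do. Instead I will apply the change-of-measure argument to a $\lambda$-free quantity, namely the Legendre-type transform of $\psi$. For each $\theta$, define the generalization gap and the transform
\[
\Delta_S(\theta)\dfn L(\theta)-\widehat L(\theta),
\qquad
\psi^*_\theta(x)\dfn\sup_{\lambda\in[0,b)}\{\lambda x-\psi(\theta,\lambda)\}.
\]
Since $\psi(\theta,0)=0$, we have $\psi^*_\theta\ge 0$. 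Because the supremum is over $\lambda\ge 0$, $\psi^*_\theta$ is nondecreasing. By definition, for every $\lambda\in[0,b)$,
\[
\lambda\Delta_S(\theta)-\psi(\theta,\lambda)\le \psi^*_\theta(\Delta_S(\theta)).
\]
This last inequality is what will later let the infimum over $\lambda$ be taken after the high-probability event has been fixed.

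\textbf{Step 1: concentration for a fixed $\theta$.} The samples are i.i.d., so $\E_S[\exp(n\lambda\Delta_S(\theta))]=\exp(n\Lambda_\theta(\lambda))\le\exp(n\psi(\theta,\lambda))$ by Assumption~\ref{ass:bounded_cgf}. Markov's inequality, optimized over $\lambda\in[0,b)$, then gives the Chernoff tail
\[
P_S\bigl(\Delta_S(\theta)\ge x\bigr)\le \exp\bigl(-n\psi^*_\theta(x)\bigr)
\quad\text{for all } x.
\]
By monotonicity of $\psi^*_\theta$, this becomes $P_S(\psi^*_\theta(\Delta_S(\theta))\ge t)\le e^{-nt}$ for $t\ge 0$. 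At discontinuities I will take $x=\inf\{x:\psi^*_\theta(x)\ge t\}$ and use lower semicontinuity of the supremum of affine functions. Writing $X=\psi^*_\theta(\Delta_S(\theta))\ge 0$, the layer-cake formula gives
\[
\E_S\bigl[e^{(n-1)X}\bigr]=1+\int_0^\infty (n-1)e^{(n-1)t}P(X\ge t)\,dt\le 1+(n-1)\int_0^\infty e^{-t}dt=n.
\]
This is the source of the $n$ in $\log(n/\delta)$ and the factor $n-1$ in the denominator.

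\textbf{Step 2: uniformity over posteriors via the prior.} The prior is data-independent, so Fubini gives $\E_S\E_{\theta\sim\prior}[e^{(n-1)\psi^*_\theta(\Delta_S(\theta))}]\le n$. Markov's inequality then gives, with probability at least $1-\delta$,
\[
\E_{\prior}\bigl[e^{(n-1)\psi^*_\theta(\Delta_S(\theta))}\bigr]\le n/\delta.
\]
On this event, the Donsker--Varadhan variational formula yields, simultaneously for all $\posterior\ll\prior$,
\[
(n-1)\,\E_\posterior\bigl[\psi^*_\theta(\Delta_S(\theta))\bigr]\le \KL(\posterior\|\prior)+\log(n/\delta).
\]

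\textbf{Step 3: reintroducing $\lambda$.} Still on the same event, fix any $\posterior$ and any $\lambda\in(0,b)$. Taking $\E_\posterior$ of the pointwise inequality from the first paragraph gives $\lambda\E_\posterior[\Delta_S]-\E_\posterior[\psi(\theta,\lambda)]\le(\KL+\log(n/\delta))/(n-1)$. Dividing by $\lambda$ and rearranging yields \eqref{eq:pb_cgf_main} for that $\lambda$. Since the event does not depend on $\lambda$, we may take the infimum. The value $\lambda=0$ contributes $+\infty$ and is harmless. Integrability of $L$ under $\posterior$ is implicit; if $\E_\posterior L=\infty$, the same argument forces the right-hand side to be infinite.

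I expect the main obstacle to be Step 1, specifically the passage from the Chernoff tail for $\Delta_S$ to the tail of $\psi^*_\theta(\Delta_S)$. This requires the monotonicity and semicontinuity care noted there, and it is where the restriction $\lambda\ge 0$ (a one-sided bound) is essential. Measurability of $\theta\mapsto\psi^*_\theta(\Delta_S(\theta))$ for Fubini is a minor technical point, which I would handle by restricting the supremum to rational $\lambda$ using convexity of $\psi$ in $\lambda$.
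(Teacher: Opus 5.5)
Your proposal is correct and is essentially the argument of the source from which the paper imports this theorem without proof \citep{PAC-Bayes-Chernoff-Bounds-Unbounded-Losses}: a Cram\'er-transform tail bound yielding $\E_S[e^{(n-1)\psi^*_\theta(\Delta_S(\theta))}]\le n$ (the origin of $\log(n/\delta)$ and $n-1$), then Markov over the data-independent prior with Donsker--Varadhan for uniformity in $\posterior$, and finally a pointwise Fenchel--Young step that reinstates $\lambda$ after the event is fixed; working with $\psi^*_\theta$ directly instead of the exact transform $\Lambda^*_\theta$ followed by $\Lambda_\theta\le\psi$ makes no difference. One small repair is needed in Step~1: for the nondecreasing $\psi^*_\theta$, lower semicontinuity amounts to left-continuity, so it only gives $\psi^*_\theta(x_t)\le t$ at $x_t=\inf\{x:\psi^*_\theta(x)\ge t\}$; when $\psi^*_\theta$ jumps (possible when $b=\infty$), the event is $\{\Delta_S(\theta)>x_t\}=\bigcup_k\{\Delta_S(\theta)\ge x_t+1/k\}$, and you should obtain $e^{-nt}$ by continuity of $P_S$ from below rather than by evaluating the Chernoff tail at $x_t$.
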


There are a few things to note here that distinguish \eqref{eq:pb_cgf_main} from a more typical PAC Bayes bound of this type. 
First, the minimization in $\lambda$ is exact and avoids any clumsy union bound arguments over a predetermined grid. 
%The price to pay upfront is the logarithmic term $\log(n)$ hence, it can be considered cheap. 
Even more crucially, this bound avoids taking the expectation of the exponential of $\psi$ 
under the prior and directly uses the posterior $\posterior$ in the term $\E_{\theta\sim\posterior}[\psi(\theta,\lambda)]$. 
This is particularly well suited for control, as the cost will depend on the hypothesis/controller/closed-map distribution
and not on the prior. Hence, we obtain a bound that depends on the performance of the actual optimized posterior we deploy.
The term under the $\inf$ operator is the complexity term in which we can recognize $\KL(\posterior \| \prior)$, 
an information-centric measure that punishes the information gain of the posterior with respect to the prior. 

% \paragraph{Drawback for typical control setting of squared loss function}
% However, there is a drawback to this bound for our use case in control when considering the natural setting of squared loss for
% the state-input pair. 
% The main hurdle is that the $\lambda$ interval of validity $[0,b)$ would typically be model dependent as well, i.e., 
% $[0, b(\theta) )$. Unfortunately, in a typical unconstrained parameterization, $b(\theta)$ 
% would shrink to zero as it necessarily encompasses the range where the posterior puts any nonnegative probability. 
% To avoid this issue, we develop a novel bound that sits between the 
% one proposed in~\cite{PAC-Bayes-Chernoff-Bounds-Unbounded-Losses} and the "old-style" that averages $\Lambda$ 
% under the prior. We aim to offer a bound made for control that does not avoid exponential averaging but
% does so under the posterior, retaining the crucial property that losses should depend on the deployed controller/hypothesis.

\subsection{System Level Synthesis}
\label{sec:sls-prelims}
To enable a tractable closed-loop and learning-based analysis, we adopt a finite-horizon 
System Level Synthesis (SLS)~\citep{wang2019_SLS} parameterization for control. The key idea is to directly 
parameterize the closed-loop response from disturbances to state and input trajectories 
instead of the standard mapping through control policies. The original SLS formulation was 
developed for linear state feedback policies, which were later extended to affine policies~\citep{SLS_Tube_MPC_Zeilinger,schuepp2025_SLS_affine}.
%Here, we adopt notation from~\citep{schuepp2025_SLS_affine}.
The motivation was the well known fact that quadratic  Model Predictive Control (MPC)~\citep{rawlings2020model}
with polyhedral constraints admits an affine 
feedback law depending on the region of the state space~\citep{bemporad2002explicit}.
For more details about SLS parametrization, see Appendix~\ref{app:sls}.

\paragraph{Linear dynamical system}
Consider a linear time invariant (LTI) system over a finite horizon
\begin{equation}
\bar{x}_{t+1} = A \bar{x}_t + B \bar{u}_t + \bar{w}_t, \qquad t=0,\dots,T-1.
\label{eq:LTI_basic}
\end{equation}
with $\bar{x}_t\in\mathbb R^{n_x}$, $\bar{u}_t\in\mathbb R^{n_u}$, $\bar{w}_t\in\mathbb R^{n_x}$.
The standard control approach  would be to parameterize the controller as state (static for simplicity) 
feedback $\bar{u}_t = K\bar{x}_t$ , $K \in \mathbb R^{n_u \times n_x}$, and rewrite the dynamics as
$
\bar{x}_{t+1} = (A  + BK) \bar{x}_t + \bar{w}_t. %\qquad t=0,\dots,T-1.
%label{eq:LTI_controller}
$
However, this makes the bound less explicit. Even though it seems natural, this 
introduces a non-convex landscape concerning system design, as trajectories depend on polynomials of $(A + BK)$.
\paragraph{SLS parametrization for the stacked dynamics}
The evolution of system~\eqref{eq:LTI_basic} can be written more compactly using stacked notation. 
Define the stacked vectors 
$$
x :=
\begin{bmatrix}
\bar{x}_0 \; x_1 \; \hdots \; \bar{x}_T
\end{bmatrix}^\top, 
\quad
u :=
\begin{bmatrix}
\bar{u}_0 \; u_1 \; \cdots \; \bar{u}_{T-1}
\end{bmatrix}^\top,
\quad
w :=
\begin{bmatrix}
\bar{x}_0 \; \bar{w}_0 \; \cdots\; \bar{w}_{T-1}
\end{bmatrix}^\top.
$$
of appropriate dimensions.
% With the matrix
% $$
% \mathbf{K} = 
% \begin{bmatrix}
% K^{0,0} & 0 & 0 & \cdots & 0 \\
% K^{1,1} & K^{1,0} & 0 & \cdots & 0\\
% \vdots & \vdots & \ddots & \ddots & 0 \\
% K^{T,T} & K^{T,T-1} & K^{T,T-2} & \cdots & K^{T,0}\\
% \end{bmatrix}.
% $$
% define a linear time-varying control law. 
Each sample $w_i$ is a finite-horizon disturbance trajectory. We note that the coordinates of a trajectory need not be independent.
Then, the system \emph{closed-loop} maps from disturbance $w$ to 
the state and input trajectories can be written as
\begin{equation}
\begin{bmatrix}
x \\ u
\end{bmatrix}
=
\begin{bmatrix}
\clmapx \\
\clmapu \\
\end{bmatrix}
w\\
+
\begin{bmatrix}
\clmapxaff \\
\clmapuaff \\
\end{bmatrix}
=
\begin{bmatrix}
\clmapx & \clmapxaff \\
\clmapu & \clmapuaff \\
\end{bmatrix}
\begin{bmatrix}
w\\
1\\
\end{bmatrix}
\label{eq:sls_affine}
\end{equation}
where $\clmapx\in\mathbb R^{(T+1)n_x\times (T+1)n_x}$ and $\clmapu\in\mathbb R^{Tn_u\times (T+1)n_x}$ are block lower-triangular operators encoding causality. 
Affine terms are $\clmapxaff\in\mathbb R^{(T+1)n_x},\clmapuaff\in\mathbb R^{Tn_u}$.
%Setting $\clmapxaff, \clmapuaff$ to zero matrices recovers the linear parameterization.
Instead of solving for the \emph{control policies} , we are interested in optimizing directly 
over the \emph{closed-loop} response map represented by a tuple $\theta \dfn \{\clmapx, \clmapu, \clmapxaff,\clmapuaff\}$.
% which now becomes an optimization variable. 
However, response maps cannot be freely chosen as desired.
Any closed-loop response tuple must obey the dynamics imposed by~\eqref{eq:LTI_basic}. 
More precisely, the closed-loop response tuple must be consistent with the linear \emph{achievability constraint}
\begin{align}
\begin{bmatrix}
(\I-\Z\A) & - \Z\B
%I-\mathcal Z_A & -\mathcal Z_B
\end{bmatrix}
\begin{bmatrix}
\clmapx & \clmapxaff \\
\clmapu & \clmapuaff \\
\end{bmatrix}
=
\begin{bmatrix}
\I & d 
\end{bmatrix},
\label{eq:sls_affine_constraint}
\end{align}
where $\mathcal{Z}$ is the block-downshift operator, and $\mathcal{A},\mathcal{B}$ denotes the block-diagonal lifted system matrices. In our setting
$d = 0$, but in general, it need not be so.
What is important is that \eqref{eq:sls_affine_constraint} is a linear constraint in the tuple, and it parameterizes all closed-loop responses. 
Moreover, the linear part of a \emph{time-varying} control policy can be recovered from the closed loop responses, see Appendix~\ref{app:sls} for 
further explanations and definitions of $\Z,\mathcal{A}$ and $\mathcal{B}$.
\subsection{System Level Synthesis for Stochastic Quadratic Control}
In this subsection, we take a step further than the standard literature and describe how 
SLS is the natural coordinate system for stochastic deployment in the native quadratic cost setting.
To the best of our knowledge, we are the first to do so.

\paragraph{Nullspace coordinates and feasible response directions.}
Vectorize the response tuple as
\begin{equation}
    \vartheta
    :=
    \begin{bmatrix}
        \operatorname{vec}(\Phi_x)\\
        \operatorname{vec}(\Phi_u)\\
        \varphi_x\\
        \varphi_u
    \end{bmatrix}.
\end{equation}
After incorporating the causal sparsity constraints, the achievability
condition \eqref{eq:sls_affine_constraint} can be written as a finite
dimensional linear system $H\vartheta=h$. This was already noted in~\citep{herceg2026distributionally} and the authors stopped there 
as their goal was to have a basis that can be used with posteriors that have unbounded support.
Here, instead, we dive deeper into the particular properties of the quadratic cost and offer a much more 
complete characterization.
Let $\vartheta_0$ be one feasible solution of
$H\vartheta=h$, and let
\begin{equation}
    N=
    \begin{bmatrix}
        N_1 & N_2 & \cdots & N_p
    \end{bmatrix}
\end{equation}
be a fixed basis for $\ker(H)$. Then every feasible causal affine
response can be written as
\begin{equation}
    \vartheta(\alpha)
    =
    \vartheta_0+N\alpha
    =
    \vartheta_0+\sum_{k=1}^p\alpha_kN_k,
    \qquad
    \alpha\in\mathbb{R}^p.
    \label{eq:nullspace-sls}
\end{equation}
Indeed,
$
    H\vartheta(\alpha)
    =
    H\vartheta_0+\sum_{k=1}^p\alpha_kHN_k
    =
    h.
$
Thus, every $\alpha\in\mathbb{R}^p$ corresponds to a feasible causal
closed-loop response. We place the PAC-Bayes prior and posterior over
these free coordinates. The basis $N$ fixes the coordinate system in
which prior and posterior complexity are measured and is chosen before
observing the data. Further, each nullspace direction $N_k$ can be partitioned as
\begin{equation}
    N_k
    =
    \begin{bmatrix}
        \operatorname{vec}\!\left(\Phi_x^{[k]}\right)\\
        \operatorname{vec}\!\left(\Phi_u^{[k]}\right)\\
        \varphi_x^{[k]}\\
        \varphi_u^{[k]}
    \end{bmatrix},
    \qquad
    k=1,\ldots,p,
    \label{eq:nullspace-direction-partition}
\end{equation}
while $\vartheta_0$ determines the corresponding objects
$\Phi_x^{[0]},\Phi_u^{[0]},\varphi_x^{[0]},\varphi_u^{[0]}$ and
\begin{equation}
\begin{aligned}
    \Phi_x(\alpha)
    &=
    \Phi_x^{[0]}
    +
    \sum_{k=1}^p\alpha_k\Phi_x^{[k]},
    &
    \Phi_u(\alpha)
    &=
    \Phi_u^{[0]}
    +
    \sum_{k=1}^p\alpha_k\Phi_u^{[k]},
    \\
    \varphi_x(\alpha)
    &=
    \varphi_x^{[0]}
    +
    \sum_{k=1}^p\alpha_k\varphi_x^{[k]},
    &
    \varphi_u(\alpha)
    &=
    \varphi_u^{[0]}
    +
    \sum_{k=1}^p\alpha_k\varphi_u^{[k]}.
    \label{eq:affine-response-components}
\end{aligned}
\end{equation}

\paragraph{Weighted response maps and the two affine structures.}
Let
$$
    \bar{Q}
    :=
    \operatorname{blkdiag}(Q_0,\ldots,P),
    \qquad
    \bar{R}
    :=
    \operatorname{blkdiag}(R_0,\ldots,R_{T-1}),
$$
where $P$ may be the terminal weight. Define the weighted
closed-loop trajectory
\begin{equation}
    y_\alpha(w)
    :=
    \begin{bmatrix}
        \bar{Q}^{1/2}x\\
        \bar{R}^{1/2}u
    \end{bmatrix}
    =
    M(\alpha)w+m(\alpha),
    \label{eq:weighted-response}
\end{equation}
where
\begin{equation}
    M(\alpha)
    :=
    \begin{bmatrix}
        \bar{Q}^{1/2}\Phi_x(\alpha)\\
        \bar{R}^{1/2}\Phi_u(\alpha)
    \end{bmatrix},
    \qquad
    m(\alpha)
    :=
    \begin{bmatrix}
        \bar{Q}^{1/2}\varphi_x(\alpha)\\
        \bar{R}^{1/2}\varphi_u(\alpha)
    \end{bmatrix}.
    \label{eq:weighted-response-maps}
\end{equation}
The quadratic finite-horizon control loss is therefore
\begin{equation}
    \ell(\alpha,w)
    :=
    \|y_\alpha(w)\|_2^2
    =
    \|M(\alpha)w+m(\alpha)\|_2^2.
    \label{eq:quadratic-sls-loss}
\end{equation}

By \eqref{eq:affine-response-components}, the weighted response maps
are affine in $\alpha$:
\begin{equation}
    M(\alpha)
    =
    M_0+\sum_{k=1}^p\alpha_kM_k,
    \qquad
    m(\alpha)
    =
    m_0+\sum_{k=1}^p\alpha_km_k,
    \label{eq:affine-cost-maps}
\end{equation}
where
\begin{equation}
    M_k
    :=
    \begin{bmatrix}
        \bar{Q}^{1/2}\Phi_x^{[k]}\\
        \bar{R}^{1/2}\Phi_u^{[k]}
    \end{bmatrix},
    \qquad
    m_k
    :=
    \begin{bmatrix}
        \bar{Q}^{1/2}\varphi_x^{[k]}\\
        \bar{R}^{1/2}\varphi_u^{[k]}
    \end{bmatrix},
    \qquad
    k=0,\ldots,p.
    \label{eq:weighted-nullspace-directions}
\end{equation}
The pair $(M_k,m_k)$ has a direct closed-loop interpretation as it is 
the cost weighted change in the disturbance induced map in the $k$-th feasible nullspace direction
$N_k$. 
% In particular, for fixed $w$,
% $$
%     M_kw+m_k
% $$
% is the change in the weighted state-input trajectory generated by
% changing $\alpha_k$ by one unit. Thus, the matrices $M_k$ are not
% arbitrary features; they are dynamically feasible response directions
% seen through the quadratic control cost.
The representation above gives two complementary affine structures.
First, for fixed $\alpha$, the map
$$
    w\mapsto M(\alpha)w+m(\alpha)
$$
is affine in the disturbance trajectory. Therefore,
$\ell(\alpha,w)$ is quadratic in $w$. Under Gaussian disturbances,
this is precisely the structure used in Section~\ref{sec:gaussian_cert} to
derive an exact one-sided Gaussian Chernoff transform.
Second, for fixed $w$, \eqref{eq:affine-cost-maps} gives
\begin{equation}
\begin{aligned}
    M(\alpha)w+m(\alpha)
    =
    M_0w+m_0
    +
    \sum_{k=1}^p\alpha_k(M_kw+m_k) =
    a_0(w)+\mathcal A(w)\alpha,
    \label{eq:affine-alpha-trajectory}
\end{aligned}
\end{equation}
where
\begin{equation}
    a_0(w):=M_0w+m_0,
    \qquad
    \mathcal A(w)
    :=
    \begin{bmatrix}
        M_1w+m_1 &
        \cdots &
        M_pw+m_p
    \end{bmatrix}.
    \label{eq:alpha-feature-map}
\end{equation}
Consequently,
\begin{equation}
    \ell(\alpha,w)
    =
    \|a_0(w)+A(w)\alpha\|_2^2
    \label{eq:quadratic-alpha-loss}
\end{equation}
is convex quadratic in $\alpha$. Hence, both the population and
empirical risks are convex quadratic functions of the SLS coordinate.
This second structure is used in Section~\ref{sec:mean-response-deployment} to
transfer a randomized PAC-Bayes certificate to the deterministic mean
response. 

\paragraph{Takeaway} In summary, SLS is central to our approach because it simultaneously guarantees multiple desirable properties:
(i) the nullspace
parameterization guarantees closed-loop feasibility for posterior 
distributions with unbounded support,
(ii) it makes the loss quadratic in the disturbance trajectory for each fixed response, which enables the certificate in Section~\ref{sec:gaussian_cert}, 
(iii) it makes the loss quadratic in the
free response coordinate for each fixed disturbance realization, which enables deterministic deployment in Section~\ref{sec:mean-response-deployment}
, and finally 
(iv) the feasible basis directions $(M_k,m_k)$ describe how posterior uncertainty over response
coordinates maps the disturbances to cost, which is used in the data-driven certificate to quantify the
additional covariance-mismatch penalty (see Section~\ref{sec:mean-response-deployment}).

\section{Posterior-localized PAC-Bayes bounds}
Theorem~\ref{thm:pb_cgf} is directly useful when one has a pointwise certificate
$
\Lambda_\alpha(\lambda)\le \psi(\alpha,\lambda)
$
valid on a common interval $[0,b)$. 
If the posterior distribution has unbounded support, the common admissible range over the posterior support may
collapse even when most posterior mass is concentrated on "good" responses.
For sub-Gaussian disturbances, Hanson-Wright~\citep{hanson1971bound} yields such a
certificate but only on a posterior-dependent interval
$\lambda \in [0,\,(C\sigma^2\|M(\alpha)\|_\mathrm{op}^2)^{-1})$,
 which may collapse to zero under broad unbounded posteriors whenever the posterior assigns mass
to directions along which $\|M(\alpha)\|_{\rm op}$ is unbounded. Corollary~\ref{thm:posterior-localized-pb-bound} 
targets exactly this setting.

\subsection{A posterior-localized Chernoff surrogate}
Instead of first certifying each
response and then averaging the certificate, we upper-bound the posterior average of the 
response-wise
CGF by a single joint exponential moment under the deployed posterior.

\begin{corollary}[Posterior-localized Chernoff surrogate]
\label{thm:posterior-localized-pb-bound} 
Let $S=(w_1,\ldots,w_n)\sim\mathcal D^n$ with $n\geq 2$, and suppose
that $\ell(\alpha,w)\geq 0$ and $L(\alpha)<\infty$ for every
$\alpha\in\Theta$. Define
$$
\widehat L_S(\alpha)
:=
\frac{1}{n}\sum_{i=1}^n \ell(\alpha,w_i),
$$
and, for $\lambda\geq 0$,
$$
\Lambda_\alpha(\lambda)
:=
\log
\mathbb E_{w\sim\mathcal D}
\left[
\exp\left(
\lambda\bigl(L(\alpha)-\ell(\alpha,w)\bigr)
\right)
\right].
$$
For a posterior $\rho\ll\pi$, define the joint posterior-localized CGF
$$
\widehat\Lambda_\rho(\lambda)
:=
\log
\mathbb E_{\alpha\sim\rho,\;w\sim\mathcal D}
\left[
\exp\left(
\lambda\bigl(L(\alpha)-\ell(\alpha,w)\bigr)
\right)
\right],
$$
and its effective domain
$$
\widehat I_\rho
:=
\left\{
\lambda>0:
\widehat\Lambda_\rho(\lambda)<\infty
\right\}.
$$
Then, with a probability of at least $1-\delta$ over $S\sim\mathcal D^n$,
simultaneously for all $\rho\ll\pi$ with
$\widehat I_\rho\neq\emptyset$,
$$
\mathbb E_{\alpha\sim\rho}[L(\alpha)]
\leq
\mathbb E_{\alpha\sim\rho}[\widehat L_S(\alpha)]
+
\inf_{\lambda\in\widehat I_\rho}
\left\{
\frac{
\operatorname{KL}(\rho\|\pi)+\log(n/\delta)
}{
(n-1)\lambda
}
+
\frac{
\widehat\Lambda_\rho(\lambda)
}{
\lambda
}
\right\}.
$$
\end{corollary}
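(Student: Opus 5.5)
The plan is to apply Theorem~\ref{thm:pb_cgf} with the response-wise CGF itself as the dominating function, $\psi(\alpha,\lambda):=\Lambda_\alpha(\lambda)$, and then pass from the posterior average $\E_{\alpha\sim\rho}[\Lambda_\alpha(\lambda)]$ to the joint quantity $\widehat\Lambda_\rho(\lambda)$ by Jensen's inequality. The point of the corollary is that no pointwise surrogate with a $\theta$-dependent admissible interval is needed. The nonnegativity of the loss already makes the exact CGF finite on the whole half-line, for every response.

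\emph{Step 1: Assumption~\ref{ass:bounded_cgf} holds with $\psi=\Lambda_\alpha$ and $b=\infty$.} For $\lambda\ge 0$ and $\ell\ge 0$ we have $\exp(\lambda(L(\alpha)-\ell(\alpha,w)))\le e^{\lambda L(\alpha)}$, so
\[
\Lambda_\alpha(\lambda)\le \lambda L(\alpha)<\infty
\quad\text{for all }\lambda\in[0,\infty).
\]
Next I will verify the required shape properties.
\begin{itemize}
\item Convexity: $\Lambda_\alpha$ is a CGF, so it is convex (H\"older).
\item Value at zero: $\Lambda_\alpha(0)=0$.
\item Nonnegativity: by Jensen, $\Lambda_\alpha(\lambda)\ge \lambda\,\E_w[L(\alpha)-\ell(\alpha,w)]=0$.
\item Right derivative at zero: it equals $\E_w[L(\alpha)-\ell(\alpha,w)]=0$. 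This follows by differentiating under the expectation. The difference quotients $(e^{\lambda Z}-1)/\lambda$, with $Z=L-\ell\le L(\alpha)$, are monotone in $\lambda$ and bounded above by an integrable function for $\lambda\le 1$. Monotone or dominated convergence then applies, using only $L(\alpha)<\infty$.
\end{itemize}
Since only $\lambda\ge0$ is used, the one-sided derivative is all that Assumption~\ref{ass:bounded_cgf} needs. Theorem~\ref{thm:pb_cgf} therefore yields, with probability at least $1-\delta$ and simultaneously for all $\rho\ll\pi$,
\[
\E_\rho[L]\le \E_\rho[\widehat L_S]+\inf_{\lambda>0}\Big\{\tfrac{\KL(\rho\|\pi)+\log(n/\delta)}{\lambda(n-1)}+\tfrac{\E_\rho[\Lambda_\alpha(\lambda)]}{\lambda}\Big\}.
\]
The point $\lambda=0$ can be dropped from the infimum because the term there is $+\infty$.

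\emph{Step 2: Jensen in $\alpha$.} Since $\log$ is concave, for every $\lambda>0$,
\[
\E_{\alpha\sim\rho}\big[\log \E_w e^{\lambda(L(\alpha)-\ell(\alpha,w))}\big]\le \log \E_{\alpha\sim\rho}\E_w e^{\lambda(L(\alpha)-\ell(\alpha,w))}=\widehat\Lambda_\rho(\lambda).
\]
The joint expectation is the iterated one by Tonelli, since the integrand is nonnegative. Hence the objective inside the infimum is pointwise dominated by the one in the statement. Restricting the infimum from $(0,\infty)$ to the subset $\widehat I_\rho$ can only increase its value, and outside $\widehat I_\rho$ the stated objective is $+\infty$ anyway. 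This gives the claimed bound on the same high-probability event, uniformly over all $\rho$ with $\widehat I_\rho\neq\emptyset$.

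\emph{Main obstacle.} The delicate step is Step~1: making sure Theorem~\ref{thm:pb_cgf} genuinely permits $b=\infty$ and a $\psi$ that is only known implicitly, is $\theta$-dependent, and may be unbounded in $\theta$. If the cited result required a bounded or explicit $\psi$, I would instead re-run its proof directly. That proof uses Donsker--Varadhan with the change of measure $\pi\to\rho$ applied to $\lambda(n-1)(L-\widehat L_S)-(n-1)\Lambda_\alpha(\lambda)$, Markov's inequality under the prior, and the i.i.d. factorisation $\E_S e^{\lambda n(L-\widehat L)}=e^{n\Lambda_\alpha(\lambda)}$. It then handles the $\lambda$-uniformity through the $\log n$ discretisation/convexity argument of \citep{PAC-Bayes-Chernoff-Bounds-Unbounded-Losses}. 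Only the finiteness of $\Lambda_\alpha$ on $[0,\infty)$ is used there, and that finiteness is exactly what Step~1 secures. I would also check measurability of $\alpha\mapsto\Lambda_\alpha(\lambda)$, which follows from joint measurability of $\ell$ and Tonelli.
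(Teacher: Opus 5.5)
Your proposal is correct and follows the paper's proof essentially step for step. Both take $\psi(\alpha,\lambda)=\Lambda_\alpha(\lambda)$ with $b=\infty$, which is admissible because $\ell\ge 0$ gives $\Lambda_\alpha(\lambda)\le\lambda L(\alpha)<\infty$; both then invoke Theorem~\ref{thm:pb_cgf} and use Tonelli and Jensen to get $\E_\rho[\Lambda_\alpha(\lambda)]\le\widehat\Lambda_\rho(\lambda)$ before taking the infimum over $\widehat I_\rho$. Your dominated-convergence check that the right derivative at zero vanishes supplies a detail the paper only asserts.
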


\begin{proof}
Since $\ell(\alpha,w)\geq 0$, for every $\alpha\in\Theta$ and
$\lambda\geq 0$,
$$
\Lambda_\alpha(\lambda)
\leq
\log
\mathbb E_{w\sim\mathcal D}
\left[
\exp\bigl(\lambda L(\alpha)\bigr)
\right]
=
\lambda L(\alpha)
<
\infty.
$$
Moreover, $\Lambda_\alpha$ is convex, nonnegative, and satisfies
$\Lambda_\alpha(0)=\Lambda_\alpha'(0)=0$. Hence, Assumption~\ref{ass:bounded_cgf} holds
with $b=\infty$ by trivially choosing 
$
\psi(\alpha,\lambda)=\Lambda_\alpha(\lambda).
$
Applying Theorem~\ref{thm:pb_cgf} yields, with probability at least $1-\delta$,
simultaneously for all $\rho\ll\pi$,
$$
\mathbb E_{\alpha\sim\rho}[L(\alpha)]
\leq
\mathbb E_{\alpha\sim\rho}[\widehat L_S(\alpha)]
+
\inf_{\lambda>0}
\left\{
\frac{
\operatorname{KL}(\rho\|\pi)+\log(n/\delta)
}{
(n-1)\lambda
}
+
\frac{
\mathbb E_{\alpha\sim\rho}[\Lambda_\alpha(\lambda)]
}{
\lambda
}
\right\}.
$$

Now fix $\lambda\in\widehat I_\rho$. By the definitions of
$\Lambda_\alpha$ and $\widehat\Lambda_\rho$, and using Tonelli~\citep{schilling2017measures} to exchange 
expectations, we obtain
$$
\begin{aligned}
\mathbb E_{\alpha\sim\rho}
\left[
\exp\bigl(\Lambda_\alpha(\lambda)\bigr)
\right]
&=
\mathbb E_{\alpha\sim\rho,\;w\sim\mathcal D}
\left[
\exp\left(
\lambda\bigl(L(\alpha)-\ell(\alpha,w)\bigr)
\right)
\right] = 
\exp\bigl(\widehat\Lambda_\rho(\lambda)\bigr)
<
\infty.
\end{aligned}
$$
Therefore, Jensen's inequality gives
$$
\mathbb E_{\alpha\sim\rho}
\left[
\Lambda_\alpha(\lambda)
\right]
\leq
\log
\mathbb E_{\alpha\sim\rho}
\left[
\exp\bigl(\Lambda_\alpha(\lambda)\bigr)
\right]
=
\widehat\Lambda_\rho(\lambda).
$$
Combining this inequality with the bound from Theorem~\ref{thm:pb_cgf} gives, for every
$\lambda\in\widehat I_\rho$,
$$
\mathbb E_{\alpha\sim\rho}[L(\alpha)]
\leq
\mathbb E_{\alpha\sim\rho}[\widehat L_S(\alpha)]
+
\frac{
\operatorname{KL}(\rho\|\pi)+\log(n/\delta)
}{
(n-1)\lambda
}
+
\frac{
\widehat\Lambda_\rho(\lambda)
}{
\lambda
}.
$$
Taking the infimum over $\lambda\in\widehat I_\rho$ proves the result.
\end{proof}

We call the bound \emph{posterior-localized} because admissibility is imposed on the joint law
$(\alpha,w)\sim\rho\times \genDstro$, rather than pointwise over every response in the posterior support.
When explicit pointwise certificates are available, the averaged-CGF route in
Theorem~\ref{thm:pb_cgf} is sharper. The role of Corollary~\ref{thm:posterior-localized-pb-bound}
is to provide a fallback when response certificates are unavailable or have support 
admissibility issues.

\begin{remark}[Posterior-level admissibility]
The admissibility requirement in Corollary~\ref{thm:posterior-localized-pb-bound}
is posterior-localized. In particular, it does not require a common
Chernoff domain over the entire feasible SLS class. Rather, the bound is
meaningful for those posteriors $\rho$ for which
$
\widehat I_\rho
$
% $=
% \left\{
% \lambda>0:
% \log
% \mathbb E_{\alpha\sim\rho,w\sim \genDstro}
% \exp\{\lambda(L(\alpha)-\ell(\alpha,w))\}
% <\infty
% \right\}
% $$
is nonempty. The feasible SLS class may contain responses with poor
behavior, while the learned posterior may concentrate on a particularly desirable 
neighborhood of closed-loop responses. The joint posterior-CGFs
form is naturally conservative relative to the posterior average of pointwise CGFs since it uses
Jensen's inequality.

% gives
% $$
% \mathbb E_{\alpha\sim\rho}
% \left[
% \log
% \mathbb E_{w\sim \genDstro}
% \exp\{\lambda(L(\alpha)-\ell(\alpha,w))\}
% \right]
% \le
% \log
% \mathbb E_{\alpha\sim\rho,w\sim \genDstro}
% \exp\{\lambda(L(\alpha)-\ell(\alpha,w))\}.
% $$
\end{remark}

\paragraph{Quadratic SLS specialization.}
If
$
L(\alpha)=\alpha_0+2g^\top\alpha+\alpha^\top G\alpha,
\, G\succeq0,
$
and $\rho=\mathcal N(\mu_\rho,\Sigma_\rho)$, then
$
\mathbb E_\rho e^{\lambda L(\alpha)}<\infty
$
whenever
$
I-2\lambda\Sigma_\rho^{1/2}G\Sigma_\rho^{1/2}\succ0.
$
This condition depends on the deployed posterior covariance, not on the prior.
However, it is conservative because it uses $L-\ell\le L$ and is mainly used as an example 
to highlight how to avoid radius collapse. Sharper certificates are given in the next section.

\section{Gaussian certificates}
\label{sec:explicit-certificates}
Corollary~\ref{thm:posterior-localized-pb-bound} is distribution-free at the PAC-Bayes level, 
but it is defined for values of $\lambda$ for which the posterior-localized exponential moment is finite.
To obtain explicit, computable certificates, we now specialize for 
%two disturbance models: 
the Gaussian disturbance, which yields an exact
closed-form expression for $\Lambda_\alpha(\lambda)$ by exploiting the
quadratic-Gaussian structure of the loss. A similar approach can be taken for bounded disturbances,
which yield a distribution-free certificate via Hoeffding's lemma~\citep{boucheron2013concentration}, which we do not cover here.
In both cases, the resulting bound is obtained by substituting the
corresponding certificate into Theorem~\ref{thm:pb_cgf}.

% \begin{remark}[Why Hanson-Wright does not suffice]
% The Hanson-Wright inequality~\citep{Rudelson2013} requires the
% coordinates of $w$ to be \emph{independent} sub-Gaussian,
% covering the special case $w \sim \mathcal{N}(0,\sigma^2 I)$
% but not our general trajectory-level covariance $\Sigma_w$.
% Even in the iid Gaussian case where Hanson-Wright applies,
% it yields a two-sided CGF bound valid only on a
% controller-dependent interval
% $\lambda \in [0,\,(C\sigma^2\|M(\alpha)\|_{\mathrm{op}}^2)^{-1})$,
% which collapses to zero under any posterior with unbounded support.
% Proposition~\ref{prop:exact-gaussian-integrand} avoids both obstacles.
% \end{remark}

\subsection{Gaussian disturbance certificates}
\label{sec:gaussian_cert}
Next, we derive an exact expression for the  inner exponential moment under a Gaussian
disturbance model. Unlike standard Hanson-Wright bounds~\citep{rudelson2013hanson,ziemann2025elementaryproofhansonwrightinequality}, 
which are typically stated for quadratic forms of vectors
with independent sub-Gaussian coordinates, our Gaussian certificate allows a full trajectory-level
covariance $\Sigma_w$.
 Thus, the disturbance sequence may be temporally correlated, and the certificate
depends on the induced closed-loop covariance
$
\Sigma_y(\alpha)=M(\alpha)\Sigma_wM(\alpha)^\top .
$

\begin{proposition}[Exact Gaussian integrand]
\label{prop:exact-gaussian-integrand}
Fix $\alpha\in\Theta$ and let $w\sim\mathcal D=\mathcal N(\mu_w,\Sigma_w)$, so that
$y_\alpha:=M(\alpha)w+m(\alpha)\sim\mathcal N(\mu_y(\alpha),\Sigma_y(\alpha))$ with
$\mu_y(\alpha)=M(\alpha)\mu_w+m(\alpha)$ and
$\Sigma_y(\alpha)=M(\alpha)\Sigma_w M(\alpha)^\top$, hence $
\ell(\alpha,w)=\|y_\alpha\|_2^2,  L(\alpha)=\operatorname{tr}(\Sigma_y(\alpha))+\|\mu_y(\alpha)\|_2^2
$. Define a shorthand $\overline{\Sigma}(\alpha) = I+2\lambda\Sigma_y(\alpha)$, then for every $\lambda\ge0$,
\begin{align}
  \mathbb E_{w\sim\mathcal D}\left [
  e^{\lambda(L(\alpha)-\ell(\alpha,w))}\right]
  = e^{\lambda L(\alpha)}
   %\det(I+2\lambda\Sigma_y(\alpha))^{-1/2}
    \det \left(\overline{\Sigma}(\alpha)\right)^{-1/2}
    %\exp\!\big(-\lambda\,\mu_y(\alpha)^\top(I+2\lambda\Sigma_y(\alpha))^{-1}\mu_y(\alpha)\big).
    e^{\big(-\lambda\,\mu_y(\alpha)^\top \overline{\Sigma}(\alpha)^{-1}\mu_y(\alpha)\big)}.
     \label{eq:exact-gaussian-integrand}
\end{align}
Equivalently, the centered CGF is
\begin{align}
  \Lambda_\alpha(\lambda)
  = \lambda L(\alpha)
    - \tfrac12\log\det(I+2\lambda\Sigma_y(\alpha))
    - \lambda\,\mu_y(\alpha)^\top(I+2\lambda\Sigma_y(\alpha))^{-1}\mu_y(\alpha).
    \label{eq:exact-gaussian-cgf}
\end{align}
\end{proposition}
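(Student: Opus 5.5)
The plan is to reduce everything to the Gaussian moment generating function of a squared norm of a Gaussian vector. First, I would record the distributional facts stated in the proposition. Since $y_\alpha=M(\alpha)w+m(\alpha)$ is affine in $w\sim\mathcal N(\mu_w,\Sigma_w)$, it follows that $y_\alpha\sim\mathcal N(\mu_y,\Sigma_y)$, with $\mu_y,\Sigma_y$ as given. The risk is then
\[
L(\alpha)=\E\|y_\alpha\|_2^2=\tr(\Sigma_y)+\|\mu_y\|_2^2 .
\]
Because $L(\alpha)$ is deterministic, the integrand factors as $e^{\lambda L(\alpha)}\,\E[e^{-\lambda\|y_\alpha\|_2^2}]$. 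It therefore suffices to show that for $\lambda\ge 0$
\[
\E\bigl[e^{-\lambda\|y\|_2^2}\bigr]=\det(I+2\lambda\Sigma_y)^{-1/2}\exp\bigl(-\lambda\mu_y^\top(I+2\lambda\Sigma_y)^{-1}\mu_y\bigr).
\]
Taking logarithms then gives \eqref{eq:exact-gaussian-cgf} immediately.

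To handle a possibly singular $\Sigma_y$ (e.g.\ rank-deficient $M(\alpha)$), I would avoid densities of $y$. Instead I would write $y=\mu_y+\Sigma_y^{1/2}z$ with $z\sim\mathcal N(0,I)$, which yields the same law. Next, diagonalize $\Sigma_y=U\,\mathrm{diag}(s_j)\,U^\top$ with $s_j\ge0$. By rotation invariance of $z$,
\[
\|y\|^2=\sum_j\bigl(c_j+\sqrt{s_j}\,z_j\bigr)^2,\qquad c=U^\top\mu_y,
\]
with the $z_j$ independent. The expectation then factors into one-dimensional terms $\E\bigl[e^{-\lambda(c_j+\sqrt{s_j}z_j)^2}\bigr]$. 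For $s_j=0$ this term is just $e^{-\lambda c_j^2}$. For $s_j>0$ I would complete the square in the Gaussian integral
\[
\frac{1}{\sqrt{2\pi}}\int e^{-\lambda(c+\sqrt{s}z)^2-z^2/2}\,dz .
\]
The quadratic coefficient is $\tfrac12(1+2\lambda s)$, which is positive for all $\lambda\ge0$; this is exactly why the one-sided direction needs no admissibility restriction. The integral evaluates to $(1+2\lambda s)^{-1/2}\exp\bigl(-\lambda c^2/(1+2\lambda s)\bigr)$.

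Multiplying the factors gives $\prod_j(1+2\lambda s_j)^{-1/2}=\det(I+2\lambda\Sigma_y)^{-1/2}$. The exponents combine as $\sum_j c_j^2/(1+2\lambda s_j)=\mu_y^\top U(I+2\lambda\,\mathrm{diag}(s))^{-1}U^\top\mu_y=\mu_y^\top\overline\Sigma^{-1}\mu_y$. This establishes \eqref{eq:exact-gaussian-integrand}. Finiteness is automatic, since the integrand is bounded by $e^{\lambda L(\alpha)}$ because $\ell\ge0$. The $s_j=0$ case is consistent with the general formula, so no case split survives in the final statement.

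The main obstacle is this completing-the-square step together with the bookkeeping for degenerate covariance. The rotation plus the factor-by-factor computation handles both cleanly.
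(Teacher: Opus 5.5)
Your proof is correct. It shares the paper's skeleton: factor out $e^{\lambda L(\alpha)}$, then evaluate $\mathbb E[e^{-\lambda\|y\|_2^2}]$ using a spectral decomposition of $\Sigma_y(\alpha)$. The key step, however, is handled differently.

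The paper cites Mathai--Provost (Corollary 3.2a.2) for the identity when $\Sigma_y\succ0$. It then extends to singular $\Sigma_y$ by a separate reduction. It splits $\mu_y$ into range and nullspace components and applies the cited nonsingular formula on the $r$-dimensional range of $\Sigma_y$. Finally it observes that $I+2\lambda\Sigma_y$ acts as the identity on the nullspace.

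You instead derive the identity from first principles. The representation $y=\mu_y+\Sigma_y^{1/2}z$, a full eigendecomposition, and rotation invariance reduce it to independent one-dimensional Gaussian integrals in the standard normal variable. Their quadratic coefficient $\tfrac12(1+2\lambda s_j)$ is positive for every $s_j\ge0$.

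This makes your argument self-contained and treats singular and nonsingular covariances uniformly. The separate $s_j=0$ case you mention is not even needed, since the completed-square formula in $z$ already reduces to $e^{-\lambda c_j^2}$ there. The paper's route is shorter only because it outsources the core Gaussian computation to the reference. It then needs the extra range/nullspace step to remove the $\Sigma\succ0$ hypothesis of the cited result.
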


\begin{proof}
Since 
$
y_\alpha=M(\alpha)w+m(\alpha)\sim \mathcal N(\mu_y(\alpha),\Sigma_y(\alpha)),
$
we can write
$$
\ell(\alpha,w)=\|y_\alpha\|_2^2,
\qquad
L(\alpha)=\operatorname{tr}(\Sigma_y(\alpha))+\|\mu_y(\alpha)\|_2^2.
$$
Hence
$$
\mathbb E_w e^{\lambda(L(\alpha)-\ell(\alpha,w))}
=
e^{\lambda L(\alpha)}
\mathbb E_{y_\alpha} e^{-\lambda \|y_\alpha\|_2^2}.
$$
For a Gaussian vector $y\sim\mathcal N(\mu,\Sigma)$ with $\Sigma \succ 0$ and any $\lambda\ge 0$, 
$$
\mathbb E e^{-\lambda \|y\|_2^2}
=
\det(I+2\lambda \Sigma)^{-1/2}
\exp\!\Big(
-\lambda\,\mu^\top (I+2\lambda\Sigma)^{-1}\mu
\Big)
$$
by \citep[Corollary 3.2a.2]{mathai1992quadratic}, but the same formula holds for 
$\Sigma \succeq 0$ due to $I + 2\lambda \Sigma$ being well defined,
which is a consequence of the one-sided CGF we consider (see Appendix \ref{app-sec:prop:exact-gaussian-integrand}).
Applying this identity with $(\mu,\Sigma)=(\mu_y(\alpha),\Sigma_y(\alpha))$ yields
\eqref{eq:exact-gaussian-integrand}. Taking logarithms gives \eqref{eq:exact-gaussian-cgf}.
\end{proof}

% For $y\sim\mathcal N(\mu,\Sigma)$, we use the standard Gaussian quadratic-form identity
% $$
% \mathbb E\exp(-y^\top A y)
% =
% \det(I+2\Sigma A)^{-1/2}
% \exp\!\left(
% -\mu^\top A(I+2\Sigma A)^{-1}\mu
% \right),
% \qquad A\succeq 0.
% $$
% Taking $A=\lambda I$ gives
% $$
% \mathbb E e^{-\lambda\|y\|_2^2}
% =
% \det(I+2\lambda\Sigma)^{-1/2}
% \exp\!\left(
% -\lambda\mu^\top(I+2\lambda\Sigma)^{-1}\mu
% \right).
% $$

Proposition~\ref{prop:exact-gaussian-integrand} gives an exact expression for the one-sided CGF of the squared affine-SLS loss under Gaussian disturbances. 
Its importance is twofold. First, it preserves the damping induced by the negative loss term $-\ell(\alpha,w)$. 
Second, the resulting expression is finite for every $\lambda\ge0$, since it depends on $I+2\lambda\Sigma_y(\alpha)$. Thus, in the Gaussian disturbance case, the one-sided Chernoff transform avoids a posterior-dependent finite radius. The certificate depends on the mean and covariance of the weighted closed-loop trajectory, $\mu_y(\alpha)$ and $\Sigma_y(\alpha)$, providing a direct closed-loop sensitivity interpretation.
% Note that in the above proof we exploit the fact that we are bounding a one sided loss, which enables us to avoid the 
% problem with $\lambda$ interval validity. This problem would emerge if we tried to use a standard quadratic 
% loss concentration like Hanson-Wright that controls deviations of quadratic forms around it's mean.

% \begin{remark}[Why Hanson-Wright does not apply]
% Hanson-Wright-type inequalities control two-sided deviations
% $|\ell(\alpha,w) - L(\alpha)|$ and yield a CGF bound valid only on
% a controller-dependent interval
% $\lambda \in [0, (C\sigma^2\|M(\alpha)\|_\mathrm{op}^2)^{-1})$,
% which collapses to zero for any posterior with unbounded support.
% The one-sided structure of our certificates --- bounding
% $L(\alpha) - \ell(\alpha,w)$ rather than its absolute value ---
% avoids this collapse: since $\ell(\alpha,w) \geq 0$,
% the one-sided exponential moment is bounded above by
% $e^{\lambda L(\alpha)}$, which is finite for every fixed $\alpha$
% and every $\lambda \geq 0$, with no controller-dependent radius.
% \end{remark}

\begin{corollary}[Quadratic upper bound for the Gaussian CGF]
\label{cor:gaussian-quadratic}
Under the assumptions of Proposition~\ref{prop:exact-gaussian-integrand}, for every $\alpha\in\Theta$
and every $\lambda\ge 0$,
\begin{align}
\Lambda_\alpha(\lambda)
\le
\lambda^2 \left( \|\Sigma_y(\alpha)\|_F^2
+
2 \mu_y(\alpha)^\top\Sigma_y(\alpha)\mu_y(\alpha) \right)\dfnrev \psi(\alpha,\lambda)
\label{eq:gaussian-quadratic-pointwise}
\end{align}
where
$$
\mu_y(\alpha):=M(\alpha)\mu_w+m(\alpha),
\qquad
\Sigma_y(\alpha):=M(\alpha)\Sigma_w M(\alpha)^\top.
$$
and therefore Theorem~\ref{thm:pb_cgf} yields the bound
% \begin{align}
% \mathbb E_{\rho}[L(\alpha)] \hspace{-0.15em}
% \le \hspace{-0.15em}
% \mathbb E_{\rho}[\widehat L(\alpha)]
% \hspace{-0.15em}
% +
% \hspace{-0.15em}
% \inf_{\lambda>0}
% \left\{
% \frac{\mathrm{KL}(\rho\Vert\pi)+\log(n/\delta)}{\lambda(n-1)}
% \hspace{-0.15em}
% +\hspace{-0.15em}\lambda
% \mathbb E_{\rho}\!\Big[
% \|\Sigma_y(\alpha)\|_F^2
% \hspace{-0.15em}
% +
% \hspace{-0.15em}
% 2\mu_y(\alpha)^\top\Sigma_y(\alpha)\mu_y(\alpha)
% \Big]
% \right\}.
% %\label{eq:gaussian-cor}
% \end{align}
\begin{align}
\mathbb E_{\rho}[L(\alpha)] \hspace{-0.15em}
\le
\mathbb E_{\rho}[\widehat L(\alpha)]
+
\inf_{\lambda>0}
\left\{
\frac{\mathrm{KL}(\rho\Vert\pi)+\log(n/\delta)}{\lambda(n-1)}
+
\lambda
\mathbb E_{\rho}\!\Big[
\phi(\alpha)
\Big]
\right\},
%\label{eq:gaussian-cor}
\end{align}
where $\phi(\alpha) \dfn \|\Sigma_y(\alpha)\|_F^2
+
2 \mu_y(\alpha)^\top\Sigma_y(\alpha)\mu_y(\alpha) $.
\end{corollary}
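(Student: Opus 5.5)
The plan is to start from the exact one-sided CGF in Proposition~\ref{prop:exact-gaussian-integrand}, diagonalize the closed-loop covariance, and bound the trace (log-determinant) part and the mean part separately by elementary scalar inequalities. Throughout, fix $\alpha$ and $\lambda\ge 0$, and write $\Sigma:=\Sigma_y(\alpha)\succeq 0$ and $\mu:=\mu_y(\alpha)$. Take an eigendecomposition $\Sigma=U\operatorname{diag}(s_1,\dots,s_m)U^\top$ with $s_i\ge 0$. Using $L(\alpha)=\operatorname{tr}(\Sigma)+\|\mu\|_2^2$, split \eqref{eq:exact-gaussian-cgf} as
\begin{align*}
\Lambda_\alpha(\lambda)
=\underbrace{\sum_{i}\Bigl(\lambda s_i-\tfrac12\log(1+2\lambda s_i)\Bigr)}_{\text{(I)}}
+\underbrace{\lambda\,\mu^\top\Bigl(I-(I+2\lambda\Sigma)^{-1}\Bigr)\mu}_{\text{(II)}}.
\end{align*}

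For term (I), I would use the scalar inequality $\log(1+x)\ge x-x^2/2$ for $x\ge 0$. Applied with $x=2\lambda s_i\ge 0$, it gives $\tfrac12\log(1+2\lambda s_i)\ge \lambda s_i-\lambda^2 s_i^2$. Hence each summand is at most $\lambda^2 s_i^2$, and summing gives (I)$\le\lambda^2\sum_i s_i^2=\lambda^2\|\Sigma\|_F^2$.

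For term (II), note that $I-(I+2\lambda\Sigma)^{-1}=2\lambda\Sigma(I+2\lambda\Sigma)^{-1}$. This matrix is diagonal in the eigenbasis $U$, with eigenvalues $2\lambda s_i/(1+2\lambda s_i)\le 2\lambda s_i$. It is therefore dominated in Loewner order by $2\lambda\Sigma$, which gives (II)$\le 2\lambda^2\mu^\top\Sigma\mu$. Adding the two bounds yields \eqref{eq:gaussian-quadratic-pointwise}. This step is also where one uses that we work with the one-sided CGF and only $\lambda\ge 0$: the matrix $I+2\lambda\Sigma$ is then invertible even when $\Sigma$ is singular, so no positive-definiteness of $\Sigma$ is needed.

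It remains to check Assumption~\ref{ass:bounded_cgf} for $\psi(\alpha,\lambda)=\lambda^2\phi(\alpha)$. Since $\phi(\alpha)\ge 0$ (because $\Sigma\succeq0$), $\psi$ is nonnegative and convex in $\lambda$, and it satisfies $\psi(\alpha,0)=\partial_\lambda\psi(\alpha,0)=0$. Moreover the domination $\Lambda_\alpha\le\psi$ holds for all $\lambda\ge0$, so we may take $b=\infty$. Theorem~\ref{thm:pb_cgf} then applies, and because $\mathbb E_\rho[\psi(\alpha,\lambda)]/\lambda=\lambda\,\mathbb E_\rho[\phi(\alpha)]$ its conclusion is exactly the stated bound. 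The only genuinely nontrivial point is choosing the right scalar inequality for the log-determinant term; everything else is bookkeeping in the eigenbasis.
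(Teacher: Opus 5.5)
Your proposal is correct and follows the paper's proof: the same split of the exact CGF into the log-determinant part and the mean part, and the same identity $I-(I+2\lambda\Sigma)^{-1}=2\lambda\Sigma(I+2\lambda\Sigma)^{-1}$ for the mean term. Your scalar bound via $\log(1+x)\ge x-x^2/2$ is exactly the paper's $x-\tfrac12\log(1+2x)\le x^2$, and your eigenbasis argument for the Loewner step is a slightly more explicit version of the paper's appeal to $(I+2\lambda\Sigma)^{-1}\preceq I$.
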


\begin{proof}
We sketch the proof here and provide the full one in Appendix~\ref{app-sec:proof-gaussian-quadratic}. We start from eq.~\eqref{eq:exact-gaussian-cgf}
% $$
% \Lambda_\alpha(\lambda)
% =
% \lambda \operatorname{tr}(\Sigma_y(\alpha))
% -\frac12 \log\det(I+2\lambda \Sigma_y(\alpha))
% +
% \lambda \|\mu_y(\alpha)\|_2^2
% -\lambda\,\mu_y(\alpha)^\top
% (I+2\lambda\Sigma_y(\alpha))^{-1}
% \mu_y(\alpha),
% $$
and group the terms with $\mu_y(\alpha)$ and $\Sigma_y(\alpha)$ separately.
For the mean term, we use the identity %\cite[Section 3.25, eq.~(165)]{cookbook} which for our case reads as
$
I-(I+2\lambda \Sigma_y(\alpha))^{-1}
=
2\lambda \Sigma_y(\alpha)(I+2\lambda\Sigma_y(\alpha))^{-1}
$
to obtain
\begin{align*}
% &\lambda \|\mu_y(\alpha)\|_2^2
% -\lambda\,\mu_y(\alpha)^\top
% (I+2\lambda\Sigma_y(\alpha))^{-1}
% \mu_y(\alpha) 
% =
2\lambda^2\,
\mu_y(\alpha)^\top
\Sigma_y(\alpha)(I+2\lambda\Sigma_y(\alpha))^{-1}
\mu_y(\alpha) \le
2\lambda^2
\mu_y(\alpha)^\top
\Sigma_y(\alpha)
\mu_y(\alpha).
\end{align*}
In the above, we used the fact that $(I+2\lambda\Sigma_y(\alpha))^{-1} \preceq I$.
For the covariance term, let $s_1,\dots,s_r$ denote the eigenvalues of $\Sigma_y(\alpha)$.
Then%, due to basic properties of the trace and the determinant of a PSD matrix, we have
\begin{align}
\lambda \operatorname{tr}(\Sigma_y(\alpha))
-\frac12 \log\det(I+2\lambda \Sigma_y(\alpha))
=
\sum_{i=1}^r \left(
\lambda s_i - \frac12 \log(1+2\lambda s_i)
\right).
\label{eq:gauss_PSD_trace_and_det}
\end{align}
A valid scalar inequality for $x \ge 0$ is $x-\frac12\log(1+2x)\le x^2$ (see Appendix~\ref{app-sec:proof-gaussian-quadratic}).
% \begin{align}
% \qquad \forall x\ge 0,
% \label{eq:ad-hoc-bound}
% \end{align}
% The validity of the bound can be checked by defining $f(x) = x^2 - x +\frac12\log(1+2x) $.
% We have $f(0) = 0$, hence the inequality is tight at zero. Furthermore, $f'(x)  =\frac{4x^2}{1+2x}$, 
% which is clearly nonnegative for any $x \ge 0$, proving our ad-hoc bound. It is also clear that the bound is tight 
% around zero, but it gets progressively worse for higher values.
Substituting $x=\lambda s_i$, we can bound~\eqref{eq:gauss_PSD_trace_and_det} by $\lambda^2 \|\Sigma_y(\alpha)\|_F^2,$
% $$
% \lambda \operatorname{tr}(\Sigma_y(\alpha))
% -\frac12 \log\det(I+2\lambda \Sigma_y(\alpha))
% \le
% \lambda^2 \sum_{i=1}^r s_i^2
% =
% \lambda^2 \|\Sigma_y(\alpha)\|_F^2,
% $$
which follows from the definition of the Frobenius norm.
Combining the two estimates proves \eqref{eq:gaussian-quadratic-pointwise}. Moreover, $\psi(\alpha,0) = \psi'(\alpha,0) = 0$ and $\psi(\alpha,\lambda)$ is clearly convex in $\lambda$ satisfying 
Assumption~\ref{ass:bounded_cgf} with $b = \infty$.
\end{proof}
Both terms of the certificate are closed-loop sensitivity quantities expressed through the induced covariance $\Sigma_y(\alpha)$.
The covariance term $\|\Sigma_y(\alpha)\|_F^2$  penalizes the overall disturbance amplification, while the mean term $\mu_y(\alpha)^\top \Sigma_y(\alpha)\mu_y(\alpha)$
penalizes the mean offset in the same induced metric.
\label{sec:Numerical}

\section{Deterministic mean-response deployment and data-driven bounds}
\label{sec:mean-response-deployment}
PAC-Bayes certifies randomized posterior risks, whereas control applications
typically deploy a deterministic controller. This stems from stringent testing and repeatability analysis.
In this section, we will therefore use the posterior $\rho$ as a certification object during learning but deploy only its
mean response $\mu_\rho$. First, note that the controller that achieves it is $K_{\rm mean} =\Phi_u(\mu_\rho)\Phi_x(\mu_\rho)^{-1}.$ %with 
%$\mu_\rho:=\mathbb E_{\alpha\sim\rho}[\alpha]$.
% $$
% \mu_\rho:=\mathbb E_{\alpha\sim\rho}[\alpha],
% \quad
% K_{\rm mean}=\Phi_u(\mu_\rho)\Phi_x(\mu_\rho)^{-1}.
% $$
Furthermore, it needs to be said that we will keep the \textit{posterior distribution} in the complexity term, 
blending stochastic certification and deterministic deployment, keeping the best of both worlds (so to speak) in a single formulation.
This also avoids applying PAC-Bayes directly to the Dirac posterior
$\delta_{\mu_\rho}$, which would generally have infinite KL divergence with
respect to a continuous prior. In the affine SLS parameterization, $M(\alpha)$ and $m(\alpha)$ 
are affine in $\alpha$. Hence, for every fixed disturbance trajectory $w$, the loss
$\ell(\alpha,w)=\|M(\alpha)w+m(\alpha)\|_2^2$
is convex quadratic in $\alpha$. 
Assuming
$\mathbb E_\genDstro\|w\|^2<\infty$, the population risk is
\begin{align}
L(\alpha)
=
\mathbb E_{w\sim \genDstro}[\ell(\alpha,w)]
=
\alpha^\top G_\alpha \alpha+2g_\alpha^\top\alpha+\alpha_0,
\qquad
G_\alpha \succeq0.
\end{align}
Similarly, the empirical risk has a quadratic form
\begin{align}
\widehat L_S(\alpha)
=
\alpha^\top \widehat G_S\alpha
+
2\widehat g_S^\top\alpha
+
\widehat\alpha_{0,S},\qquad
\widehat G_S \succeq 0.
\end{align}
We can now state the following proposition.
\begin{proposition}[Deterministic mean-response certificate]
\label{prop:mean_response_certificate}
Let $\rho$ be a posterior with a finite second moment, and denote the mean as $\mu_\rho$ and
the covariance as $\Sigma_\rho$. Suppose that, with probability at least
$1-\delta$, the randomized PAC-Bayes certificate
$
\mathbb E_{\alpha\sim\rho}[L(\alpha)]
\le
\mathbb E_{\alpha\sim\rho}[\widehat L_S(\alpha)]
+
\mathrm{Comp}(\rho,S,\delta)
$
holds. Then, on the same event,
\[
L(\mu_\rho)
\le
\widehat L_S(\mu_\rho)
+
\mathrm{Comp}(\rho,S,\delta)
+
\operatorname{tr}\!\left((\widehat G_S-G)\Sigma_\rho\right).
\]
\end{proposition}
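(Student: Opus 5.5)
The plan is to use the exact bias--variance decomposition of a convex quadratic under a distribution with finite second moment, applied separately to the population risk $L$ and the empirical risk $\widehat L_S$. Throughout, $G$ denotes the population curvature $G_\alpha$. For any quadratic $q(\alpha)=\alpha^\top H\alpha+2h^\top\alpha+c$ and any $\rho$ with mean $\mu_\rho$ and covariance $\Sigma_\rho$, write $\alpha=\mu_\rho+(\alpha-\mu_\rho)$ and expand. The linear cross terms vanish in expectation, which gives
\[
\mathbb E_{\alpha\sim\rho}[q(\alpha)] = q(\mu_\rho)+\operatorname{tr}(H\Sigma_\rho).
\]
Finiteness of the second moment of $\rho$ makes all these expectations finite, and $\mathbb E_{\genDstro}\|w\|^2<\infty$ ensures that $L$ is a well-defined quadratic with $G\succeq 0$.

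Applying the identity with $q=L$ gives $\mathbb E_\rho[L(\alpha)]=L(\mu_\rho)+\operatorname{tr}(G\Sigma_\rho)$. Applying it with $q=\widehat L_S$ gives $\mathbb E_\rho[\widehat L_S(\alpha)]=\widehat L_S(\mu_\rho)+\operatorname{tr}(\widehat G_S\Sigma_\rho)$.

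Next, work on the event of probability at least $1-\delta$ on which the randomized certificate holds. Substituting both identities into it gives
\[
L(\mu_\rho)+\operatorname{tr}(G\Sigma_\rho)\le \widehat L_S(\mu_\rho)+\operatorname{tr}(\widehat G_S\Sigma_\rho)+\mathrm{Comp}(\rho,S,\delta).
\]
Moving $\operatorname{tr}(G\Sigma_\rho)$ to the right and using linearity of the trace yields the claim. The event is unchanged, since the manipulation is deterministic given $S$.

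The only delicate point is justifying the expansion. Specifically, the quadratic-form expressions for $L$ and $\widehat L_S$ must hold with symmetric $G,\widehat G_S$, and exchanging expectations must be legitimate. Both follow from the affine-in-$\alpha$ structure \eqref{eq:quadratic-alpha-loss}: we have $G=\mathbb E_{\genDstro}[\mathcal A(w)^\top\mathcal A(w)]$, which is finite by the second-moment assumption on $w$, and Fubini applies because the integrand is nonnegative. No concentration argument is needed.
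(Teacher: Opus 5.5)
Your proposal is correct and follows essentially the same route as the paper: apply the identity $\mathbb E_{\alpha\sim\rho}[q(\alpha)]=q(\mu_\rho)+\operatorname{tr}(H\Sigma_\rho)$ (the paper's Proposition~\ref{prop:quadratic_decomposition}) to both $L$ and $\widehat L_S$, substitute into the randomized certificate on the same event, and rearrange. Your Fubini remark is harmless but unnecessary: $L$ is already a deterministic quadratic in $\alpha$, so only the finite second moment of $\rho$ is needed.
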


\begin{proof}
Since $L$ is quadratic,
$
\mathbb E_{\alpha\sim\rho}[L(\alpha)]
=
L(\mu_\rho)+\operatorname{tr}(G\Sigma_\rho).
$
Likewise,
$
\mathbb E_{\alpha\sim\rho}[\widehat L_S(\alpha)]
=
\widehat L_S(\mu_\rho)
+
\operatorname{tr}(\widehat G_S\Sigma_\rho).
$
Substituting both identities into the randomized PAC-Bayes certificate and
rearranging gives the claim.
\end{proof}
Note that this bound is still an oracle bound, but the unknown term can be bounded via standard 
concentration inequalities~\citep{vershynin2018high,wainwright2019high}.
Proposition~\ref{prop:mean_response_certificate} shows that posterior learning
can be seen as a distributional regularization of deterministic empirical risk minimization (ERM). During training,
the empirical posterior risk decomposes as
$
\mathbb E_{\alpha\sim\rho}[\widehat L_S(\alpha)]
=
\widehat L_S(\mu_\rho)
+
\operatorname{tr}(\widehat G_S\Sigma_\rho)$,
so the posterior regularizes the deployed mean by encouraging a finite KL
neighborhood around it to have low empirical cost. After transferring the
certificate to the deterministic mean response, the full empirical spread
penalty is not paid as a deployment cost, it is only the empirical to population
curvature mismatch along $\Sigma_\rho$ that contributes. 
In particular, if the empirical and population curvatures agree along the
posterior covariance, i.e.
$
\operatorname{tr} (\widehat G_S-G)\Sigma_\rho =0,
$
then
$
L(\mu_\rho)
\le
\widehat L_S(\mu_\rho)
+
\mathrm{Comp}(\rho,S,\delta).
$

Thus, the deployed controller is deterministic, but its certificate is inherited
from the non-degenerate posterior surrounding it. This is the mechanism by
which PAC-Bayes provides a finite-sample certificate for mean deployment without
incurring the infinite KL cost of a Dirac posterior.

\begin{proposition}[Data-driven deterministic certificate]
\label{prop:data_driven_mean_certificate}
Consider the zero mean Gaussian disturbance setting
$
\bw_i \overset{\mathrm{i.i.d.}}{\sim} \mathcal N(0,\Sigma_w)$ with 
$
m(\alpha)=0,
$
and suppose that the covariance event
$
\mathcal E_\Sigma
=
%\left
\{
\|\widehat\Sigma_w-\Sigma_w\|_{\rm op}
\le
\epsilon_\Sigma
%\right
\}
$
holds. Define the data-driven sensitivity coefficient, analogous to Corollary~\ref{cor:gaussian-quadratic}, as
% \[
% \widetilde \Sigma_w
% :=
% \widehat\Sigma_w+\epsilon_\Sigma I
% \]
%and
$$
\widehat C_\rho
:=
\mathbb E_{\alpha\sim\rho}
\left[
\left\|
M(\alpha)(\widehat\Sigma_w+\epsilon_\Sigma I) M(\alpha)^\top
\right\|_F^2
\right].
$$
Let
% $
% A_\rho
% :=
% \sum_{i,j=1}^p
% (\Sigma_\rho)_{ij}
% (M_{i})^\top M_{j}.
% $
$
(M_G)_{ij} = \tr{(M_i^\top M_j)}
$
be the Gram matrix.
Then, on the intersection of the PAC-Bayes event and the covariance event, the deterministic mean response $\mu_\rho=\mathbb E_\rho[\alpha]$ satisfies
\begin{align}
L(\mu_\rho)
\le
\widehat L_S(\mu_\rho)
+
2\sqrt{
\frac{
\widehat C_\rho
\left(
\mathrm{KL}(\rho\|\pi)+\log(n/\delta_{\rm PB})
\right)
}{n-1}
}
+
\epsilon_\Sigma\operatorname{tr}(\Sigma_\rho M_G).
\label{eq:data-drive-bound-full}
\end{align}
Finally, if $\mathbb P(\mathcal E_\Sigma)\ge 1-\delta_\Sigma$, then the bound~\eqref{eq:data-drive-bound-full} holds with a probability of at least
$
1-\delta_{\rm PB}-\delta_\Sigma.
$
\end{proposition}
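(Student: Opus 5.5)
The plan is to chain three earlier results on the intersection of the two events: Corollary~\ref{cor:gaussian-quadratic} for the randomized certificate, the covariance event to make its sensitivity term computable, and Proposition~\ref{prop:mean_response_certificate} to transfer the bound to $\mu_\rho$. The curvature-mismatch term is then controlled by the same covariance event.

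\emph{Step 1 (randomized certificate).} With $\mu_w=0$ and $m(\alpha)=0$ we have $\mu_y(\alpha)=0$. So in Corollary~\ref{cor:gaussian-quadratic} the mean term vanishes and
\[
\phi(\alpha)=\|M(\alpha)\Sigma_wM(\alpha)^\top\|_F^2 .
\]
On $\mathcal E_\Sigma$ we have $\Sigma_w\preceq\widehat\Sigma_w+\epsilon_\Sigma I$. Congruence by $M(\alpha)$ preserves the Loewner order, so
\[
0\preceq M(\alpha)\Sigma_wM(\alpha)^\top\preceq M(\alpha)(\widehat\Sigma_w+\epsilon_\Sigma I)M(\alpha)^\top .
\]
For PSD matrices $0\preceq A\preceq B$, the eigenvalues satisfy $\lambda_i(A)\le\lambda_i(B)$ by Weyl's monotonicity, so $\|A\|_F\le\|B\|_F$. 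Averaging over $\rho$ gives $\E_\rho[\phi(\alpha)]\le\widehat C_\rho$.

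Substituting into the corollary and using the elementary identity $\inf_{\lambda>0}\{a/\lambda+\lambda c\}=2\sqrt{ac}$, with $a=(\KL(\rho\|\pi)+\log(n/\delta_{\rm PB}))/(n-1)$, gives $\mathrm{Comp}(\rho,S,\delta_{\rm PB})\le 2\sqrt{\widehat C_\rho\, a}$ on the PAC-Bayes event.

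\emph{Step 2 (transfer to the mean).} Apply Proposition~\ref{prop:mean_response_certificate} on the same event. This leaves the term $\operatorname{tr}((\widehat G_S-G)\Sigma_\rho)$ to bound.

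Since $m(\alpha)=0$, we have $\ell(\alpha,w)=\operatorname{tr}(M(\alpha)ww^\top M(\alpha)^\top)$. Hence the quadratic coefficients are
\[
G_{ij}=\operatorname{tr}(M_i\Sigma_wM_j^\top),\qquad (\widehat G_S)_{ij}=\operatorname{tr}(M_i\widehat\Sigma_wM_j^\top),
\]
where $\widehat\Sigma_w=\frac1n\sum_i w_iw_i^\top$. I will take $\widehat\Sigma_w$ to be this uncentered empirical second moment, which matches the zero-mean setting; this is the point where the definition must be pinned down. Then
\[
\operatorname{tr}((\widehat G_S-G)\Sigma_\rho)=\operatorname{tr}\bigl((\widehat\Sigma_w-\Sigma_w)B\bigr),\qquad B:=\sum_{i,j}(\Sigma_\rho)_{ij}M_j^\top M_i .
\]
The matrix $B$ is PSD: for any $v$, $v^\top Bv=\sum_{ij}(\Sigma_\rho)_{ij}(M_jv)^\top(M_iv)$, which is the trace of $\Sigma_\rho$ times a Gram matrix and hence nonnegative. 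By the trace H\"older inequality,
\[
\operatorname{tr}\bigl((\widehat\Sigma_w-\Sigma_w)B\bigr)\le\|\widehat\Sigma_w-\Sigma_w\|_{\rm op}\operatorname{tr}(B)\le\epsilon_\Sigma\operatorname{tr}(B).
\]
Finally, $\operatorname{tr}(B)=\sum_{ij}(\Sigma_\rho)_{ij}\operatorname{tr}(M_j^\top M_i)=\operatorname{tr}(\Sigma_\rho M_G)$, using symmetry of $M_G$.

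\emph{Step 3 (probability).} A union bound over the complements of the PAC-Bayes event (probability at most $\delta_{\rm PB}$) and $\mathcal E_\Sigma$ (probability at most $\delta_\Sigma$) gives probability at least $1-\delta_{\rm PB}-\delta_\Sigma$.

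The main obstacles are the two order arguments: the Frobenius monotonicity in Step~1 and the PSD-ness of $B$ in Step~2, together with the index bookkeeping. The rest is substitution.
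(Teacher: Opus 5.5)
Your proposal is correct and follows the paper's proof essentially step for step. You use the covariance inflation $\Sigma_w\preceq\widehat\Sigma_w+\epsilon_\Sigma I$ to replace $C_\rho$ by $\widehat C_\rho$, and the quadratic mean/covariance decomposition to transfer the bound to $\mu_\rho$. You then use the identity $\operatorname{tr}((\widehat G_S-G)\Sigma_\rho)=\operatorname{tr}(A_\rho(\widehat\Sigma_w-\Sigma_w))$ with $A_\rho\succeq 0$ (your $B$), bound it by $\epsilon_\Sigma\operatorname{tr}(A_\rho)=\epsilon_\Sigma\operatorname{tr}(\Sigma_\rho M_G)$, and finish with a union bound. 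The differences are cosmetic: you compute $G_{ij}$ entrywise rather than via $r_\rho(w)=w^\top A_\rho w$, and you explicitly derive the $2\sqrt{\cdot}$ term from Corollary~\ref{cor:gaussian-quadratic} using $\mu_y(\alpha)=0$ and $\inf_{\lambda>0}\{a/\lambda+\lambda c\}=2\sqrt{ac}$, which the paper takes as given.
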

The above bound is no longer an oracle bound but a data-driven one.
A valid $\epsilon_\Sigma$ may be obtained from standard Gaussian sample
covariance concentration bounds, and we provide an explicit choice in Appendix~\ref{app:covariance_radius}.
For a common choice of a diagonal Gaussian posterior,
$
\Sigma_\rho=\operatorname{diag}(\sigma_1^2,\ldots,\sigma_p^2),
$
the mismatch term becomes
$
% \epsilon_\Sigma\operatorname{tr}(A_\rho)
% =
\epsilon_\Sigma
\sum_{k=1}^p
\sigma_k^2\|M_{k}\|_F^2.
$
Thus, the additional derandomization term is a weighted posterior-variance
penalty, with weights determined entirely by the SLS basis directions.
Moreover, even in the general case,  $M_{\mathrm{G}}$ can be computed offline.
\section{Numerical experiments}
\label{sec:experiments}

All experiments are conducted on a laptop with a 12th Gen Intel i7-12700H processor, 32\,GB of
RAM, running Ubuntu 24.04.4 LTS. Code is implemented in Julia~\citep{bezanson2017julia}
(version~1.12.6) using automatic differentiation via ForwardDiff~\citep{RevelsLubinPapamarkou2016}.

\subsection{Posterior learning versus point optimization in the low-data regime}
\label{app:double-integrator-low-data}

We consider the double-integrator system as in~\eqref{eq:LTI_basic} with
$$
A =
\begin{bmatrix}
1 & 1\\
0 & 1
\end{bmatrix},
\quad
B =
\begin{bmatrix}
0.5\\
1
\end{bmatrix},
\quad Q=\operatorname{diag}(1,0.1), \quad P=\operatorname{diag}(2,0.2), \quad R = 0.5,
$$
with horizon $T=10$ and Gaussian disturbance trajectory
$
w=(\bar x_0,\bar w_0,\ldots,\bar w_{T-1}) \sim \mathcal N(0,0.1I).
$
The finite-horizon quadratic cost is
$
\ell(x,u)
=
\sum_{t=0}^{T-1} \bar x_t^\top Q \bar x_t
+
\sum_{t=0}^{T-1} \bar u_t^\top R \bar u_t
+
\bar x_T^\top P \bar x_T.
$
% where
% $$
% Q=\operatorname{diag}(1,0.1),
% \qquad
% R=0.5,
% \qquad
% P=\operatorname{diag}(2,0.2).
% $$
The SLS parameterization has nullspace dimension $p=110$.
% Writing the weighted
% closed-loop response as
% $$
% M(\alpha)=M_0+\sum_{k=1}^p \alpha_k M_k,
% $$
The loss in SLS coordinates is 
$
\ell(\alpha,w)=\|M(\alpha)w\|_2^2$. 
Moreover, assume access to a dataset $S = \{w_i\}_{i =1}^n$.
\paragraph{Low-data mean-response deployment.}
Next, we test whether the deterministic mean response obtained from a PAC-Bayes
posterior acts as a useful finite-sample regularizer. We use the double-integrator
system described above. The SLS nullspace has dimension $p=110$. Thus, for small sample sizes $n\ll p$,
unregularized point optimization is expected to be statistically fragile. Hence, we treat the derandomized
PAC-Bayes version as a synthesis method for control algorithms.

We compare four deterministic deployments. The raw ERM baseline solves
$$
    \alpha_{\rm ERM}
    \in
    \arg\min_{\alpha}
    \widehat L_S(\alpha),
$$
while ridge (Tikhonov) ERM solves
$$
    \alpha_{\rm ridge}
    \in
    \arg\min_{\alpha}
    \widehat L_S(\alpha) + \gamma \|\alpha\|_2^2 .
$$
The proposed method optimizes a diagonal Gaussian posterior
$\rho=\mathcal N(\mu_\rho,\operatorname{diag}(\sigma_1^2,\cdots,\sigma_p^2))$ using the data-driven
deterministic mean-response certificate
$$
    \widehat L_S(\mu_\rho)
    +
    \operatorname{Comp}_{\rm data}(\rho)
    +
    \varepsilon_\Sigma\sum_{i=1}^p
\sigma_i^2\|M_{i}\|_F^2,
$$
and deploys the deterministic response $\mu_\rho$. As a prescient (clairvoyant) reference, we
also report an oracle derandomized PAC-Bayes baseline that uses the true covariance
$\Sigma_w$ and population curvature $G$ inside the certificate. This oracle is
not available in a data-driven setting, but it indicates how much conservatism is
introduced by covariance estimation. The prior is set to $\mathcal N(0,I).$ 
Moreover, for the epsilon term $ \varepsilon_\Sigma $, we need 
a covariance scale, which is assumed to be $\bar \sigma^2 = 0.1$.

All methods are evaluated on $N_{\rm test}=5000$ newly drawn disturbance trajectories.
We report the held-out test cost and the deterministic closed-loop sensitivity coefficient
$C(\alpha) := \|M(\alpha)\Sigma_w M(\alpha)^\top\|_F^2 = \|\Sigma_y(\alpha)\|_F^2$,
the squared Frobenius norm of the induced closed-loop covariance, evaluated at the deployed
response (the point solution for ERM/ridge, and the mean response $\mu_\rho$ for the PAC-Bayes
deployments). This is the quantity whose posterior average is the sensitivity
coefficient $C_\rho = \mathbb{E}_{\alpha\sim\rho}[C(\alpha)]$ used in the Gaussian certificate.

\begin{figure}[t]
    \centering
    \includegraphics[width=\linewidth]{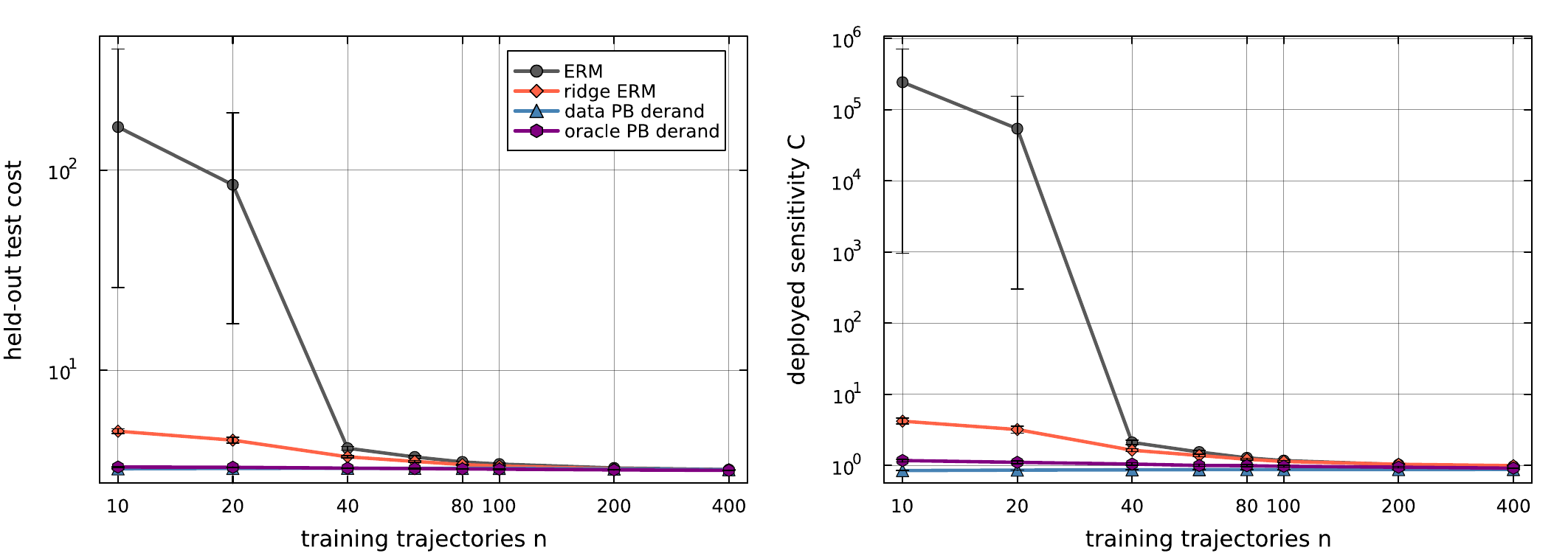}
    \caption{Double-integrator low-data sweep with $p=110$. Left: held-out test
    cost. Right: deterministic deployed sensitivity. Raw ERM is highly unstable
    when $n\ll p$. Ridge regularization stabilizes point optimization, but the
    data-driven derandomized PAC-Bayes mean response achieves substantially lower
    test cost and lower sensitivity in the low-data regime. As $n$ increases,
    all methods converge, indicating that the advantage is primarily finite-sample
    regularization.}
    \label{fig:di-data-driven-n-sweep}
\end{figure}

\begin{table}[t]
\centering
\caption{Double-integrator low-data sweep. Results are mean $\pm$ one standard
deviation over $20$ seeds. The PAC-Bayes columns report deterministic deployment
of the posterior mean response.}
\label{tab:di-data-driven-n-sweep}
\scriptsize
\setlength{\tabcolsep}{2.5pt}
\begin{tabular}{rcccc}
\toprule
& \multicolumn{4}{c}{Held-out test cost} \\
\cmidrule(lr){2-5}
$n$ & raw ERM & ridge ERM & data PB derand & oracle PB derand \\
\midrule
10  & $164.01 \pm 489.35$ & $4.96 \pm 0.33$ & $3.22 \pm 0.02$ & $3.28 \pm 0.03$ \\
20  & $84.37 \pm 216.30$  & $4.47 \pm 0.35$ & $3.23 \pm 0.02$ & $3.27 \pm 0.02$ \\
40  & $4.08 \pm 0.19$     & $3.69 \pm 0.11$ & $3.24 \pm 0.02$ & $3.24 \pm 0.03$ \\
100 & $3.40 \pm 0.09$     & $3.32 \pm 0.07$ & $3.23 \pm 0.03$ & $3.20 \pm 0.03$ \\
400 & $3.19 \pm 0.02$     & $3.18 \pm 0.02$ & $3.18 \pm 0.02$ & $3.16 \pm 0.02$ \\
\midrule
& \multicolumn{4}{c}{Deployed sensitivity $C(\alpha)$} \\
\cmidrule(lr){2-5}
$n$ & raw ERM & ridge ERM & data PB derand & oracle PB derand \\
\midrule
10  & $2.42{\times}10^5 \pm 1.03{\times}10^6$ & $4.19 \pm 1.01$ & $0.85 \pm 0.00$ & $1.17 \pm 0.13$ \\
20  & $5.39{\times}10^4 \pm 2.07{\times}10^5$ & $3.20 \pm 0.87$ & $0.86 \pm 0.01$ & $1.11 \pm 0.12$ \\
40  & $2.11 \pm 0.35$ & $1.64 \pm 0.18$ & $0.87 \pm 0.01$ & $1.05 \pm 0.10$ \\
100 & $1.17 \pm 0.11$ & $1.14 \pm 0.08$ & $0.88 \pm 0.01$ & $0.97 \pm 0.07$ \\
400 & $0.97 \pm 0.02$ & $0.99 \pm 0.02$ & $0.88 \pm 0.01$ & $0.92 \pm 0.01$ \\
\bottomrule
\end{tabular}
\end{table}

The advantage is most obvious when $n<p$. At $n=10$ and $n=20$, raw ERM
often finds responses that incur very large held-out variance. Ridge
regularization removes severe failures but remains substantially worse
than the data-driven PAC-Bayes mean response. The data-driven derandomized method
beats ridge ERM on every seed through $n=100$, and the gap closes as $n$ grows.
This is the expected behavior. With enough trajectories, point optimization becomes
well-conditioned, while in the low-data regime, the PAC-Bayes posterior acts as a
sensitivity-aware regularizer whose mean response gives a stable deterministic
controller.

The oracle derandomized baseline is useful as a reference but is not a deployable
method since it uses population quantities that are unavailable from data. The fact that
the data-driven method is close to this oracle in held-out cost, despite having a
more conservative certificate, suggests that the learned response is very good and that the data-driven 
PAC-Bayes bound can be thought of as a good synthesis method for low-data learning control algorithms.
The remaining gap is mainly a certification gap rather than a performance gap.

\paragraph{Numerical certificates}
Beyond synthesis quality, we evaluate the numerical tightness of the data-driven deterministic mean-response certificate. For each sample size, we compare
$$
B_{\mathrm{data}}(\rho)
=
\widehat L_S(\mu_\rho)
+
\operatorname{Comp}_{\mathrm{data}}(\rho)
+
\varepsilon_\Sigma
\sum_{k=1}^{p}\sigma_k^2\lVert M_k\rVert_F^2
$$
with the held-out estimate of $L(\mu_\rho)$. We also report a prescient evaluation of the \emph{same learned posterior} using the true disturbance covariance and population curvature. This diagnostic isolates the conservatism introduced by covariance estimation without changing the deployed mean response.

\begin{figure}[t]
    \centering
    \includegraphics[width=0.72\linewidth]{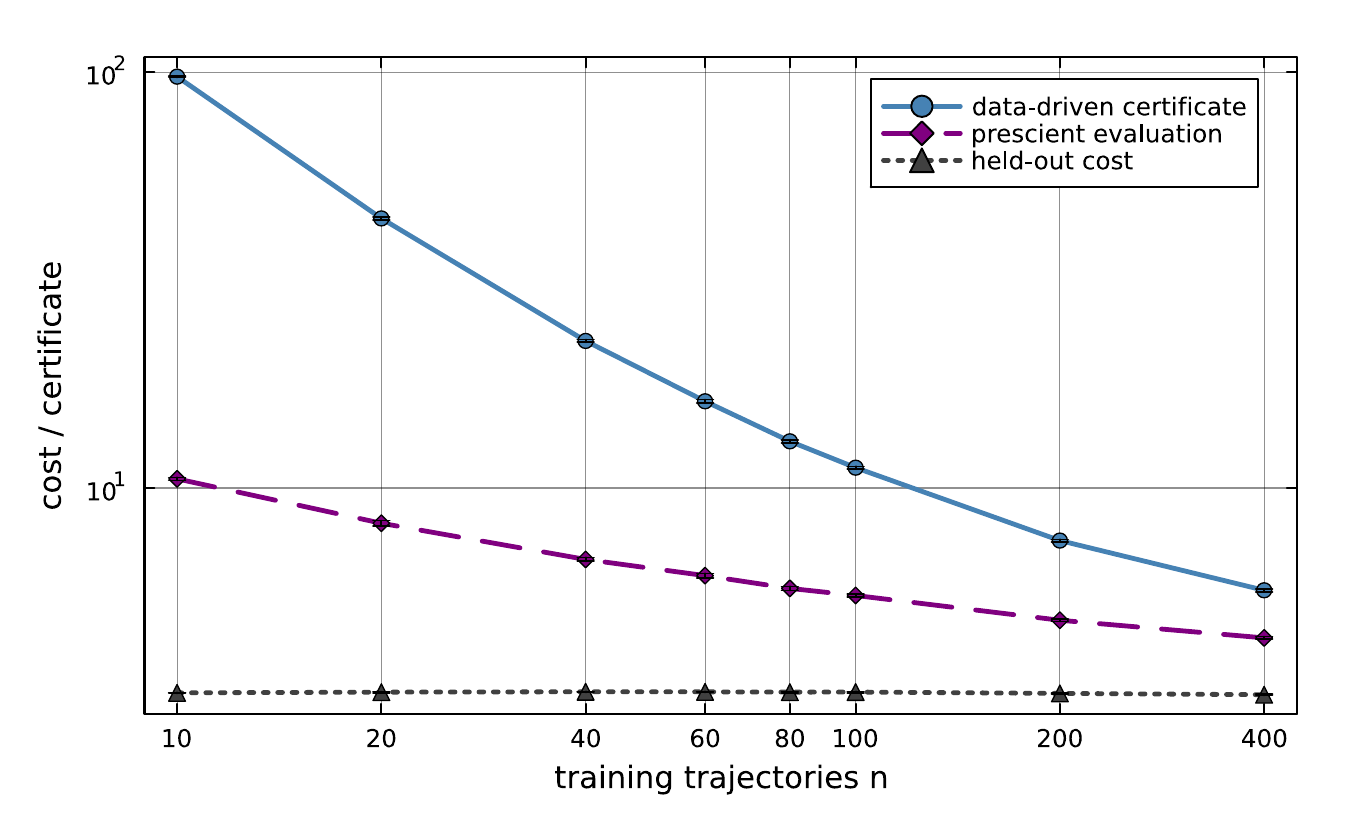}
    \caption{Numerical certificate tightness for the learned data-driven posterior. Curves show means over $20$ seeds. % and error bars are percentile 95\% bootstrap confidence intervals for the mean. 
    The prescient curve evaluates the same learned posterior with population quantities and is not used during training.}
    \label{fig:certificate-tightness}
\end{figure}

\begin{table}[t]
\centering
\caption{Numerical tightness of the data-driven deterministic mean-response certificate. All quantities are evaluated at the posterior learned by the data-driven derandomized objective. Results are mean $\pm$ one standard deviation over 20 seeds. The prescient column evaluates the same learned posterior using the true covariance and population curvature; it is diagnostic and is not available to the learning algorithm.}
\label{tab:certificate-tightness}
\small
\setlength{\tabcolsep}{4pt}
\begin{tabular}{rcccc}
\toprule
$n$ & held-out cost & data certificate & prescient certificate & certificate $-$ held-out \\
\midrule
$10$ & $3.22 \pm 0.02$ & $97.51 \pm 0.83$ & $10.52 \pm 0.20$ & $94.29 \pm 0.83$ \\
$20$ & $3.23 \pm 0.02$ & $44.48 \pm 0.83$ & $8.23 \pm 0.29$ & $41.25 \pm 0.82$ \\
$40$ & $3.24 \pm 0.02$ & $22.57 \pm 0.37$ & $6.73 \pm 0.16$ & $19.34 \pm 0.36$ \\
$100$ & $3.23 \pm 0.03$ & $11.19 \pm 0.19$ & $5.52 \pm 0.11$ & $7.96 \pm 0.19$ \\
$400$ & $3.18 \pm 0.02$ & $5.68 \pm 0.11$ & $4.36 \pm 0.07$ & $2.49 \pm 0.11$ \\
\bottomrule
\end{tabular}
\end{table}

The data-driven approach stays above the prescient one for all data points.
Its average certified gap contracts from $94.29$ at $n=10$ to $2.49$ at $n=400$. At $n=10$, the difference between the data-driven and prescient evaluations is $86.99$, showing that covariance-estimation conservatism dominates in the lowest-data regime. At $n=400$, this difference decreases to $1.32$, while the prescient certificate remains $1.18$ above the held-out cost. Thus, the certificate tightens steadily with data.

\section{Conclusion}
\label{sec:Conclusion}
We developed PAC-Bayesian certificates for finite-horizon quadratic
closed-loop control by placing posterior distributions directly over
feasible SLS response coordinates. This parameterization preserves
closed-loop achievability under unbounded  posteriors and exposes
the native loss as quadratic in both the disturbance trajectory and 
the free response coordinate. 

For Gaussian disturbances with arbitrary
covariance this yields an exact one-sided Chernoff transform and a tractable
sensitivity-based upper bound. The central message is that randomized certificates need
not imply randomized deployment. The posterior serves as a finite-KL certificate-carrying
object during learning, while the deployed controller is the deterministic mean response,
which pays only the empirical-to-population curvature mismatch along the posterior
covariance. Experiments on a double integrator confirm the resulting interpretation as a
sensitivity-aware finite-sample regularizer, with lower held-out cost and sensitivity than
ridge in the low-data regime, and the convergence of all methods as $n$ grows.

The sharpest certificates here rely on Gaussian or bounded disturbances and finite-horizon
linear systems without hard state-input constraints. Extending the approach to constrained
MPC  and model uncertainty is left for future work.

\bibliographystyle{plainnat}  
\bibliography{bib-pb-mpc}  
%%%%%%%%%%%%%%%%%%%%%%%%%%%%%%%%%%%%%%%%%%%%%%%%%%%%%%%%%%%%

\appendix

%$\section{Technical appendices and supplementary material}
%\include{Appendix/controller_dependent_chernoff_radius}
\section{Extended Proofs}
% \subsection{Proof of Proposition~\ref{prop:exact-posterior localized}}
% \label{app-sec:prop:poserior-localized}
% \input{Appendix/app_poof-posterior-localized}

\subsection{Proof of Proposition~\ref{prop:exact-gaussian-integrand}}
\label{app-sec:prop:exact-gaussian-integrand}
\begin{proof}
Since
$$
y_\alpha=M(\alpha)w+m(\alpha)\sim \mathcal N(\mu_y(\alpha),\Sigma_y(\alpha)),
$$
we can write
$$
\ell(\alpha,w)=\|y_\alpha\|_2^2,
\qquad
L(\alpha)=\operatorname{tr}(\Sigma_y(\alpha))+\|\mu_y(\alpha)\|_2^2.
$$
Hence
$$
\mathbb E_w \left[ e^{\lambda(L(\alpha)-\ell(\alpha,w))} \right]
=
e^{\lambda L(\alpha)}
\mathbb E_{y_\alpha} \left[ e^{-\lambda \|y_\alpha\|_2^2}\right].
$$
For a Gaussian vector $y\sim\mathcal N(\mu,\Sigma)$, with $\Sigma \succ 0$ and any $\lambda\ge 0$, 
$$
\mathbb E e^{-\lambda \|y\|_2^2}
=
\det(I+2\lambda \Sigma)^{-1/2}
\exp\!\Big(
-\lambda\,\mu^\top (I+2\lambda\Sigma)^{-1}\mu
\Big)
$$
by~\citep[Corollary 3.2a.2]{mathai1992quadratic}.
However, the same formula extends to case $\Sigma \succeq 0$ almost trivially because of the negative sign in the exponent. 
The key observation is that $I+2\lambda\Sigma$ is always well defined.
Indeed, write the spectral decomposition of $\Sigma\succeq 0$ as
$$
\Sigma
=
\begin{bmatrix}
Q_R & Q_0
\end{bmatrix}
\begin{bmatrix}
\Lambda_R & 0 \\
0 & 0
\end{bmatrix}
\begin{bmatrix}
Q_R^\top \\
Q_0^\top
\end{bmatrix},
$$
where $\Lambda_R\succ0$, $Q_R$ spans $\operatorname{Range}(\Sigma)$, and
$Q_0$ spans $\operatorname{Null}(\Sigma)$. Equivalently,
$$
\Sigma = Q_R\Lambda_RQ_R^\top,
$$
since the covariance has zero eigenvalues on $\operatorname{Null}(\Sigma)$.
% $$
% \Sigma = Q_R\Lambda_RQ_R^\top
% $$
% be the reduced spectral decomposition with $\Lambda_R\succ 0$, and let
% $Q_0$ span $\operatorname{Null}(\Sigma)$. 
Writing
$$
\mu = Q_RQ_R^\top\mu + Q_0Q_0^\top\mu = \mu_R + \mu_0,
$$
we have the representation
\begin{align}
y = \mu_0 + Q_R\left(Q_R^\top\mu + \Lambda_R^{1/2}z\right),
\qquad z\sim\mathcal N(0,I_r).
\end{align}
Thus
$$
\|y\|^2
=
\|\mu_0\|^2
+
\left\|Q_R^\top\mu+\Lambda_R^{1/2}z\right\|^2.
$$
Applying the nonsingular formula on the $r=\operatorname{rank}(\Sigma)$
dimensional range space gives
$$
\mathbb E e^{-\lambda\|y\|^2}
=
\det(I_r+2\lambda\Lambda_R)^{-1/2}
\exp\left(
-\lambda\|\mu_0\|^2
-\lambda\hat\mu_R^\top
(I_r+2\lambda\Lambda_R)^{-1}
\hat\mu_R
\right),
$$
where $\hat\mu_R=Q_R^\top\mu$ is the vector in reduced basis. Since the eigenvalues of
$I+2\lambda\Sigma$ on the nullspace are equal to one, this is equivalent
to the full-space expression
$$
\mathbb E e^{-\lambda\|y\|^2}
=
\det(I+2\lambda\Sigma)^{-1/2}
\exp\left(
-\lambda\mu^\top(I+2\lambda\Sigma)^{-1}\mu
\right).
$$
\end{proof}

\subsection{Proof of Corollary~\ref{cor:gaussian-quadratic}}
\label{app-sec:proof-gaussian-quadratic}
\begin{proof}
We start from eq.~\eqref{eq:exact-gaussian-cgf}
$$
\Lambda_\alpha(\lambda)
=
\lambda \operatorname{tr}(\Sigma_y(\alpha))
-\frac12 \log\det(I+2\lambda \Sigma_y(\alpha))
+
\lambda \|\mu_y(\alpha)\|_2^2
-\lambda\,\mu_y(\alpha)^\top
(I+2\lambda\Sigma_y(\alpha))^{-1}
\mu_y(\alpha),
$$
and group the terms with $\mu_y(\alpha)$ and $\Sigma_y(\alpha)$ separately.
For the mean term, we use \cite[Section 3.25, eq.~(165)]{cookbook} which for our case reads as
$$
I-(I+2\lambda \Sigma_y(\alpha))^{-1}
=
2\lambda \Sigma_y(\alpha)(I+2\lambda\Sigma_y(\alpha))^{-1},
$$
to obtain
\begin{align*}
&\lambda \|\mu_y(\alpha)\|_2^2
-\lambda\,\mu_y(\alpha)^\top
(I+2\lambda\Sigma_y(\alpha))^{-1}
\mu_y(\alpha) 
=
2\lambda^2\,
\mu_y(\alpha)^\top
\Sigma_y(\alpha)(I+2\lambda\Sigma_y(\alpha))^{-1}\mu_y(\alpha)
\end{align*}
Note that 
$$
(I+2\lambda\Sigma_y(\alpha))^{-1} \preceq I
$$
because $\Sigma_y(\alpha)$ is PSD and $\lambda \ge 0$.
Hence
\begin{align*}
2\lambda^2\,
\mu_y(\alpha)^\top
\Sigma_y(\alpha)(I+2\lambda\Sigma_y(\alpha))^{-1}\mu_y(\alpha) \le 2\lambda^2\,
\mu_y(\alpha)^\top
\Sigma_y(\alpha)\mu_y(\alpha) .
\end{align*}
For the covariance term, let $s_1,\dots,s_r$ denote the eigenvalues of $\Sigma_y(\alpha)$.
Then, due to basic properties of the trace and the determinant of a PSD matrix, we have
\begin{align}
\lambda \operatorname{tr}(\Sigma_y(\alpha))
-\frac12 \log\det(I+2\lambda \Sigma_y(\alpha))
=
\sum_{i=1}^r \left(
\lambda s_i - \frac12 \log(1+2\lambda s_i)
\right).
\label{eq:gauss_PSD_trace_and_det-app}
\end{align}

Next, we aim to find a simple bound for the above. It's easy to recognize the first part of the scalar inequality, 
which we bound with a quadratic term, i.e.
\begin{align}
x-\frac12\log(1+2x)\le x^2,
\qquad \forall x\ge 0,
\label{eq:ad-hoc-bound-app}
\end{align}
The validity of the bound can be checked by defining $f(x) = x^2 - x +\frac12\log(1+2x) $.
We have $f(0) = 0$, hence the inequality is tight at zero. Furthermore, $f'(x)  =\frac{4x^2}{1+2x}$, 
which is clearly nonnegative for any $x \ge 0$, proving our ad-hoc bound. It is also clear that the bound is tight 
around zero, but it gets progressively worse for higher values. However, optimization over $\lambda$ keeps it in the "good" regime.
Substituting $x=\lambda s_i$ into~\eqref{eq:ad-hoc-bound-app} and comparing to~\eqref{eq:gauss_PSD_trace_and_det-app}, we obtain
\begin{align}
\lambda \operatorname{tr}(\Sigma_y(\alpha))
-\frac12 \log\det(I+2\lambda \Sigma_y(\alpha))
\le
\lambda^2 \sum_{i=1}^r s_i^2
=
\lambda^2 \|\Sigma_y(\alpha)\|_F^2,
\label{eq:frobenius-bound-from-ad-hoc}
\end{align}
by the definition of the Frobenius norm.
Combining the two bounds proves \eqref{eq:gaussian-quadratic-pointwise}.
Define 
\begin{align}
\psi(\alpha,\lambda) \dfn 
%\Lambda_\alpha(\lambda)
%\le
\lambda^2 \|\Sigma_y(\alpha)\|_F^2
+
2\lambda^2 \mu_y(\alpha)^\top\Sigma_y(\alpha)\mu_y(\alpha)
\label{eq:gaussian-quadratic-pointwise}
\end{align}
Clearly $\psi(\alpha,0) = \psi'(\alpha,0) = 0$ as $\lambda^2$ is the leading term in the expression, and moreover, it is convex. Finally, it is
a valid object for Theorem~\ref{thm:pb_cgf} as stated in the paper which we inherited from~\citep{PAC-Bayes-Chernoff-Bounds-Unbounded-Losses}.
\end{proof}

\section{Detailed affine SLS formulation}
\label{app:sls}

This appendix expands the finite-horizon affine System Level Synthesis~\cite{wang2019_SLS} (SLS) formulation used in
the main text. The purpose is to clarify how the closed-loop response variables are defined 
and what achievability constraints they must satisfy. This is based on the known theory, and we mostly follow the exposition in~\citep{schuepp2025_SLS_affine}
We restrict this appendix to the classical pointwise 
view of SLS, as the main purpose is to get readers unfamiliar with this framework up to speed.
In addition, we provide vectorization of the constraints which are important for placing the posterior distribution over the response variables. 
In the subsequent Appendix we will show why the quadratic control loss becomes a
quadratic function of the disturbance trajectory. 

\subsection{Finite-horizon stacked dynamics}

Consider the finite-horizon linear system 
$$
\bar x_{t+1}=A \bar x_t+B \bar u_t+ \bar w_t,\quad t=0,\ldots,T-1,
$$
where $ \bar x_t\in\mathbb R^{n_x}$, $ \bar u_t\in\mathbb R^{n_u}$, and $\bar w_t\in\mathbb R^{n_x}$. We stack the
state, input, and disturbance trajectories as
\begin{align}
x :=
\begin{bmatrix}
\bar x_0\\
\bar x_1\\
\vdots\\
\bar x_T
\end{bmatrix}
\in\mathbb R^{(T+1)n_x},
\quad
u :=
\begin{bmatrix}
\bar u_0\\
\bar u_1\\
\vdots\\
\bar u_{T-1}
\end{bmatrix}
\in\mathbb R^{Tn_u},
\end{align}
and
\begin{align}
w :=
\begin{bmatrix}
\bar x_0\\
\bar w_0\\
\vdots\\
\bar w_{T-1}
\end{bmatrix}
\in\mathbb R^{(T+1)n_x}.
\label{eq:distrubance_traj_appendix}
\end{align}
Here $w$ contains both the initial condition and the process disturbances. Thus, one sample
$w_i$ in the PAC-Bayes formulation corresponds to one full finite-horizon disturbance
trajectory as defined in~\eqref{eq:distrubance_traj_appendix}.

\paragraph{Lifted system matrices.}
For the stacked trajectories
% $$
% x=\begin{bmatrix}x_0\\x_1\\ \vdots\\ x_T\end{bmatrix}\in\mathbb R^{(T+1)n_x},
% \quad
% u=\begin{bmatrix}u_0\\u_1\\ \vdots\\ u_{T-1}\end{bmatrix}\in\mathbb R^{Tn_u},
% $$
define
$$
\mathcal A:=I_{T+1}\otimes A,
\quad
\mathcal B:=
\begin{bmatrix}
I_T\otimes B\\
0_{n_x\times Tn_u}
\end{bmatrix}.
$$
Let $\mathcal Z\in\mathbb R^{(T+1)n_x\times (T+1)n_x}$ be the block down-shift operator. Then
$$
\mathcal Z \mathcal A x
=
\begin{bmatrix}
0\\
Ax_0\\
Ax_1\\
\vdots\\
Ax_{T-1}
\end{bmatrix},
\quad
\mathcal Z\mathcal B u
=
\begin{bmatrix}
0\\
Bu_0\\
Bu_1\\
\vdots\\
Bu_{T-1}
\end{bmatrix}.
$$
It is useful to define
$$
\mathcal Z_A := \mathcal Z\mathcal A,
\quad
\mathcal Z_B := \mathcal Z\mathcal B,
$$

Therefore, the stacked dynamics are
$$
x=\mathcal Z_A x+\mathcal Z_B u+w,
$$
or equivalently
$$
(I-\mathcal Z_A)x-\mathcal Z_B u=w.
$$
% Let $Z$ denote the block down-shift operator. For a stacked state trajectory $x$, the product
% $Z\mathcal A x$ is
% $$
% Z\mathcal A x
% =
% \begin{bmatrix}
% 0\\
% Ax_0\\
% Ax_1\\
% \vdots\\
% Ax_{T-1}
% \end{bmatrix},
% \quad
% Z\mathcal B u
% =
% \begin{bmatrix}
% 0\\
% Bu_0\\
% Bu_1\\
% \vdots\\
% Bu_{T-1}
% \end{bmatrix}.
% $$

% With this notation, the stacked dynamics are
% $$
% x = Z\mathcal A x + Z\mathcal B u + w,
% $$
% or equivalently,
% $$
% (I-Z\mathcal A)x - Z\mathcal B u = w.
% $$

\subsection{Affine closed-loop responses}
Rather than parameterizing the controller directly, SLS parameterizes the closed-loop maps from
disturbance trajectories to state and input trajectories. In the affine case, we write
$$
x=\clmapx w +\clmapxaff,
\quad
u=\clmapu w+\clmapuaff,
$$
or, equivalently,
$$
\begin{bmatrix}
x\\
u
\end{bmatrix}
=
\begin{bmatrix}
\clmapx\\
\clmapu
\end{bmatrix}w
+
\begin{bmatrix}
\clmapxaff\\
\clmapuaff
\end{bmatrix}.
$$
Here
$$
\clmapx\in\mathbb R^{(T+1)n_x\times (T+1)n_x},
\quad
\clmapu\in\mathbb R^{Tn_u\times (T+1)n_x},
$$
and
$$
\clmapxaff\in\mathbb R^{(T+1)n_x},
\quad
\clmapuaff\in\mathbb R^{Tn_u}.
$$

The matrices $\clmapx$ and $\clmapu$ are required to be block lower triangular, reflecting causality.
The state and input at time $t$ may depend only on the initial condition and disturbances up to the
appropriate time, not on future disturbances.
Substituting the affine response into the stacked dynamics gives
$$
(I-\mathcal Z_A)(\clmapx w + \clmapxaff)
-
\mathcal Z_B(\clmapu w +\clmapuaff)
=
w.
$$
Since this equality must hold for all disturbance trajectories $w$, the linear and affine parts must
match separately. Therefore, the SLS achievability constraints are
$$
(I-\mathcal Z_A)\clmapx-\mathcal Z_B\clmapu=I,
$$
and
$$
(I-\mathcal Z_A)\clmapxaff-\mathcal Z_B\clmapuaff=0.
$$
The first equation states that the linear response from disturbances to trajectories is compatible with
the dynamics. The second equation states that the affine offset is itself dynamically consistent.
More generally, if the stacked dynamics include a deterministic affine offset $d$,
$$
(I-\mathcal Z_A)x-\mathcal Z_B u=w+d,
$$
where $d$ denotes the stacked deterministic offset, then the affine achievability constraint becomes
$$
(I-\mathcal Z_A)\clmapxaff-\mathcal Z_B\clmapuaff=d.
$$
The purely linear dynamics considered above correspond to $d=0$.
Combining the linear and affine components, the affine SLS constraint can be written as
$$
\begin{bmatrix}
I-\mathcal Z_A & -\mathcal Z_B
\end{bmatrix}
\begin{bmatrix}
\clmapx & \clmapxaff\\
\clmapu & \clmapuaff
\end{bmatrix}
=
\begin{bmatrix}
I & d
\end{bmatrix}.
$$
For the nominal linear system without a deterministic offset, $d=0$.

\subsection{Controller recovery}
The SLS response variables describe closed-loop behavior. Under the usual finite-horizon causality
conditions, they also recover an implementable affine feedback controller.
Since $\clmapx$ is causal and has an invertible causal structure~\citep[Eq. 7a]{schuepp2025_SLS_affine}, we may solve
$$
w=\clmapx^{-1}(x-\clmapxaff).
$$
Substituting this into
$$
u=\clmapu w+\clmapuaff
$$
gives
$$
u
=
\clmapu\clmapx^{-1}(x-\clmapxaff)+\clmapuaff.
$$
Therefore, the corresponding affine controller is
$$
u=K x+k,
$$
where
$$
K:=\clmapu\clmapx^{-1},
\quad
k:=\clmapuaff-\clmapu\clmapx^{-1}\clmapxaff.
$$
Thus, any achievable affine SLS response induces an implementable causal affine controller.
Conversely, any causal affine controller applied to the finite-horizon linear system induces closed-loop
responses $(\clmapx,\clmapu,\clmapxaff,\clmapuaff)$ satisfying the SLS achievability constraints. In this sense,
the SLS equations parameterize exactly the achievable finite-horizon closed-loop responses.

\subsection{Weighted trajectory and quadratic loss}
Let the finite-horizon quadratic control cost be defined by positive semidefinite state weights and
positive definite input weights. Define the stacked weighting matrices
$$
\bar Q :=
\operatorname{blkdiag}(Q_0,Q_1,\ldots,Q_T),
\quad
\bar R :=
\operatorname{blkdiag}(R_0,R_1,\ldots,R_{T-1}).
$$
For example, one may take $Q_t=Q$ for $t<T$, $Q_T=P$, and $R_t=R$.
Define the weighted closed-loop trajectory
$$
y(x,u)
:=
\begin{bmatrix}
\bar Q^{1/2}x\\
\bar R^{1/2}u
\end{bmatrix}.
$$
Substituting the affine SLS response gives
$$
y(\theta,w)
=
\begin{bmatrix}
\bar Q^{1/2}(\clmapx w+\clmapxaff)\\
\bar R^{1/2}(\clmapu w+\clmapuaff)
\end{bmatrix}.
$$
Therefore
$$
y(\theta,w)=M(\theta)w+m(\theta),
$$
where
$$
M(\theta)
:=
\begin{bmatrix}
\bar Q^{1/2}\clmapx\\
\bar R^{1/2}\clmapu
\end{bmatrix},
\quad
m(\theta)
:=
\begin{bmatrix}
\bar Q^{1/2}\clmapxaff\\
\bar R^{1/2}\clmapuaff
\end{bmatrix}.
$$
The finite-horizon quadratic trajectory cost is then
$$
\ell(\theta,w)
:=
\|y(\theta,w)\|_2^2
=
\|M(\theta)w+m(\theta)\|_2^2.
$$
Expanding,
$$
\ell(\theta,w)
=
w^\top M(\theta)^\top M(\theta)w
+
2m(\theta)^\top M(\theta)w
+
m(\theta)^\top m(\theta).
$$
Equivalently, defining
$$
M_c(\theta):=M(\theta)^\top M(\theta),
\quad
c(\theta):=M(\theta)^\top m(\theta),
$$
we have
$$
\ell(\theta,w)
=
w^\top M_c(\theta)w
+
2c(\theta)^\top w
+
m(\theta)^\top m(\theta).
$$

This is the key structural property used in the main text: under affine SLS, the closed-loop quadratic
control cost is a quadratic function of the disturbance trajectory $w$.

\subsection{Vectorized SLS constraints}
\label{app:sls_vectorized_constraints}
We vectorize the SLS response variables to expose the affine feasible set as a standard finite-dimensional linear system $H\vartheta=h$. This allows us to parameterize all feasible responses by a particular solution plus a nullspace basis, $\vartheta=\vartheta_0+N\alpha$. Moreover, it allows us to place PAC-Bayesian priors and posteriors directly on the free coordinates $\alpha$.
\label{app:sls-vectorization}

We now describe how the affine SLS achievability constraints can be written as a finite-dimensional
linear system. Recall the affine closed-loop response
$$
x=\clmapx w +\clmapxaff,\quad u=\clmapu w+\clmapuaff,
$$
with achievability constraints
$$
(I-\mathcal Z_A)\clmapx-\mathcal Z_B\clmapu=I,
$$
and
$$
(I-\mathcal Z_A)\clmapxaff-\mathcal Z_B\clmapuaff=d,
$$
where $d=0$ for the nominal linear system without an additional deterministic offset.

Let
$
\mathcal F :=
\begin{bmatrix}
I-\mathcal Z_A & -\mathcal Z_B
\end{bmatrix}.
$
Then the two achievability equations become
$$
\mathcal F
\begin{bmatrix}
\clmapx\\
\clmapu
\end{bmatrix}
=
I,
\quad
\mathcal F
\begin{bmatrix}
\clmapxaff\\
\clmapuaff
\end{bmatrix}
=
d.
$$
Define
$$
\Phi :=
\begin{bmatrix}
\clmapx\\
\clmapu
\end{bmatrix},
\quad
\phi :=
\begin{bmatrix}
\clmapxaff\\
\clmapuaff
\end{bmatrix}
$$
to obtain
$
\mathcal F\Phi=I,
$
and
$
\mathcal F\phi=d.
$
Using the identity
$$
\operatorname{vec}(ABC)
=
(C^\top\otimes A)\operatorname{vec}(B),
$$
the constraint $\mathcal F\Phi=I$ can be vectorized as (see~\citep{cookbook})
$$
% (I^\top\otimes \mathcal F)\operatorname{vec}(\Phi)
% =
% \operatorname{vec}(I) \rightarrow 
(I\otimes \mathcal F)\operatorname{vec}(\Phi)
=
\operatorname{vec}(I).
$$
The affine-offset constraint is already linear
$
\mathcal F\phi=d.
$
Thus, if we define the full vector of response variables
$$
\vartheta
:=
\begin{bmatrix}
\operatorname{vec}(\Phi)\\
\phi
\end{bmatrix},
$$
the achievability constraints can be written compactly as
$
H \vartheta=  h,
$
where
$$
 H :=
\begin{bmatrix}
I\otimes \mathcal F & 0\\
0 & \mathcal F
\end{bmatrix},
\quad
 h :=
\begin{bmatrix}
\operatorname{vec}(I)\\
d
\end{bmatrix}.
$$

The response matrices $\clmapx$ and $\clmapu$ are block lower triangular due to causality.
Rather than vectorizing all entries and imposing additional zero constraints, one may vectorize only the
causal entries. Let $S$ be a selection matrix that maps the reduced vector of causal entries
$\vartheta_\Phi$ to the full vectorization:
$$
\operatorname{vec}(\Phi)=S\vartheta_\Phi .
$$
Then the decision variable becomes
$$
\vartheta :=
\begin{bmatrix}
\vartheta_\Phi\\
\phi
\end{bmatrix},
$$
and the reduced SLS constraints are
$$
H\vartheta=h,
$$
with
$$
H :=
\begin{bmatrix}
(I\otimes \mathcal F)S & 0\\
0 & \mathcal F
\end{bmatrix},
\quad
h :=
\begin{bmatrix}
\operatorname{vec}(I)\\
d
\end{bmatrix}.
$$
This is the finite-dimensional linear system used for the nullspace parameterization. If
$\vartheta_0$ is any solution of $H\vartheta=h$ and $N$ is a basis for $\ker(H)$, then every feasible
causal affine SLS response is represented as
\begin{align}
\vartheta(\alpha)=\vartheta_0+N\alpha.
\label{app:sls_nullspace_eq}
\end{align}
\paragraph{Causality and selection matrices.}
The selection matrix $S$ is only a helper matrix to remove entries of $\clmapx$ and
$\clmapu$ that are structurally zero due to causality. Equivalently, one may vectorize all entries and add
linear equality constraints enforcing the noncausal blocks to be zero. Both approaches lead to a
linear system of the form $H\vartheta=h$, and therefore both admit the same nullspace
parameterization.

\section{Mean-response deployment and quadratic loss structure}
\label{app:sls_mean_reponse}

The goal of this section is to show that the affine nature of $\vartheta$ implies affine dependence in $M(\alpha)$, which 
induces quadratic $L(\alpha)$. Hence, the quadratic structure of $L(\alpha)$ is a \emph{consequence}, not an assumption.
Then we show how this ties into controller derandomization via mean deployment.

We start in a somewhat backward order by first showing a more general statement that a quadratic $L(\alpha)$ 
admits a natural randomization. %Then we show that this is given in our formulation.

\begin{proposition}
\label{prop:quadratic_decomposition}
Let 
$L(\alpha) = \alpha^\top G \alpha + 2g^\top \alpha + \alpha_0$ with $G = G^\top$. 
Assume that $\rho$ has a finite second moment and denote
$\E_\rho[\alpha] = \mu_\rho$, $\operatorname{Cov} (\alpha) = \Sigma_\rho $. 
Then 
$$
\E_{\alpha \sim \rho}[L(\alpha)] = L(\mu_\rho) + \operatorname{tr}(G \Sigma_\rho).
$$
\end{proposition}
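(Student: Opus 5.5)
The plan is a direct second-moment computation: center $\alpha$ at its mean and expand the quadratic. Write $\alpha=\mu_\rho+\xi$ with $\xi\dfn\alpha-\mu_\rho$, so that $\E_\rho[\xi]=0$ and $\E_\rho[\xi\xi^\top]=\Sigma_\rho$. Substituting into $L$ and using $G=G^\top$ gives
\begin{align*}
L(\mu_\rho+\xi)
=
L(\mu_\rho)
+
2(G\mu_\rho+g)^\top\xi
+
\xi^\top G\xi .
\end{align*}
Here $L(\mu_\rho)$ collects the $\alpha_0$, $2g^\top\mu_\rho$ and $\mu_\rho^\top G\mu_\rho$ contributions. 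The symmetry of $G$ is what merges the two cross terms $\mu_\rho^\top G\xi$ and $\xi^\top G\mu_\rho$ into a single linear term in $\xi$.

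Next, take expectations under $\rho$ term by term. Every term is integrable because $\rho$ has a finite second moment. The constant term contributes $L(\mu_\rho)$. The linear term has expectation zero because $\E_\rho[\xi]=0$. For the quadratic term, rewrite the scalar as a trace and exchange trace with expectation by linearity:
\begin{align*}
\E_\rho[\xi^\top G\xi]=\E_\rho[\tr(G\xi\xi^\top)]=\tr\!\left(G\,\E_\rho[\xi\xi^\top]\right)=\tr(G\Sigma_\rho).
\end{align*}
Summing the three contributions gives the stated identity.

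There is essentially no obstacle. The only points needing care are the integrability justification, which is guaranteed by the finite second moment since $|\xi^\top G\xi|\le\|G\|_{\rm op}\|\xi\|^2$, and the use of symmetry of $G$ when combining the cross terms. Note that $G\succeq0$ is not required for this identity; positive semidefiniteness only matters later, where it makes the penalty $\tr(G\Sigma_\rho)$ nonnegative. This identity is exactly what Proposition~\ref{prop:mean_response_certificate} uses, applied once to $L$ with curvature $G$ and once to $\widehat L_S$ with curvature $\widehat G_S$.
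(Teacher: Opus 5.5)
Your proposal is correct and follows essentially the same route as the paper: both reduce the quadratic part to $\tr(G\,\cdot)$ of a second-moment matrix using linearity of trace and expectation. The difference is cosmetic: you center at $\mu_\rho$ so the linear term vanishes by $\E_\rho[\xi]=0$, whereas the paper uses the uncentered identity $\E_\rho[\alpha\alpha^\top]=\Sigma_\rho+\mu_\rho\mu_\rho^\top$ and then reassembles $L(\mu_\rho)$; your explicit integrability check is a small addition the paper leaves implicit.
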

\begin{proof}
Rewrite $L(\alpha) = \operatorname{tr}(\alpha^\top G \alpha) + 2g^\top \alpha + \alpha_0$. Due to the cyclicity of the trace, we have $\operatorname{tr}(\alpha^\top G \alpha) =
\operatorname{tr}(G \alpha\alpha^\top )$. The linearity of both expectation and trace gives 
$\E_\rho[\operatorname{tr}(G \alpha\alpha^\top )] = \operatorname{tr}(G E_\rho[ \alpha \alpha^\top ])$. Next, we characterize $E_\rho[ \alpha \alpha^\top ]$.
By definition, $\operatorname{Cov}(\alpha) \dfn E_\rho[ (\alpha - \mu_\rho)(\alpha - \mu_\rho)^\top] =  E_\rho[ \alpha \alpha^\top ] - \mu_\rho \mu_\rho^\top$ after simple algebraic manipulations. Moreover, it follows that $ E_\rho[ \alpha \alpha^\top ] = \Sigma_\rho + \mu_\rho \mu_\rho^\top$. We have all the ingredients to compute 
$\E_\rho[L(\alpha)] = \operatorname{tr}(G\E_\rho[\alpha\alpha^\top]) +  2g^\top \E_\rho [\alpha] + \alpha_0 = \operatorname{tr}(G( \operatorname{Cov}(\alpha) + \mu_\rho \mu_\rho^\top)) +  2g^\top \mu_\rho + \alpha_0  = \operatorname{tr}(G\Sigma_\rho) +  \operatorname{tr}(G\mu_\rho \mu_\rho^\top)  +  2g^\top \mu_\rho + \alpha_0 $.
Again, using the cyclicity of the trace, we have $\operatorname{tr}(G \mu_\rho \mu_\rho^\top) = \operatorname{tr}( \mu_\rho^\top G\mu_\rho)  = \mu_\rho^\top G \mu_\rho$. The last equality is because  $\mu_\rho^\top G\mu_\rho$ is a scalar. Finally, recognizing the original definition of $L(\alpha)$ and plugging in $\mu_\rho$, we obtain the final form.
\end{proof}

Our next goal is to show that the SLS parametrization naturally results in a convex quadratic function in $\alpha$, and the above proposition is readily applicable.
% Expand~\eqref{app:sls_nullspace_eq}  to 
% \begin{align}
% \vartheta(\alpha)=\vartheta_0+\sum_{k=1}^p\alpha_k N_k.
% \end{align}
% where $N_k \in  \Re^{d_\vartheta}$ are the columns of $N \in \Re^{d_\vartheta \times p}$, and $\alpha_k$ are elements of the vector $\alpha \in \Re^p$.
% This implies the affine structure of $M(\alpha)$ and $m(\alpha)$. Note that we had $M(\theta)$ before, but now $\theta$ is 
% constrained to be a function of $\alpha$. Instead of cluttering the notation, we simply write $M(\alpha)$ directly. The same 
% goes for $\clmapx(\alpha)$ and other relevant variables.

\subsection{From nullspace coordinates to affine cost maps}
\label{app:nullspace_to_cost_maps}

We make explicit how the nullspace parameterization of the SLS constraints induces affine
dependence of the cost maps in the free coordinate $\alpha$.

Let
$$
d_x := (T+1)n_x,\quad d_u := Tn_u,\quad d_w := (T+1)n_x .
$$
Then
$$
x\in\mathbb R^{d_x},\quad u\in\mathbb R^{d_u},\quad w\in\mathbb R^{d_w}.
$$
The SLS response variables have dimensions
$$
\clmapx\in\mathbb R^{d_x\times d_w},\quad
\clmapu\in\mathbb R^{d_u\times d_w},
$$
and
$$
\clmapxaff\in\mathbb R^{d_x},\quad
\clmapuaff\in\mathbb R^{d_u}.
$$
We collect the response variables into the vector
$$
\vartheta
:=
\begin{bmatrix}
\operatorname{vec}(\clmapx)\\
\operatorname{vec}(\clmapu)\\
\clmapxaff\\
\clmapuaff
\end{bmatrix}
\in \R^{d_\vartheta}
$$
where $d_\vartheta = d_x d_w + d_u d_w + d_x + d_u$.
The vectorized SLS constraints can be written as $H\vartheta=h$.
Let $\vartheta_{0}$ be one feasible solution, and let
$$
N=
\begin{bmatrix}
N_1 & \cdots & N_p
\end{bmatrix}
 \in \Re^{d_\vartheta \times p}
$$
be a basis for $\ker(H)$. Then every feasible response can be written as
$$
\vartheta(\alpha)=\vartheta_{0}+N\alpha,
\quad
\alpha\in\mathbb R^p.
$$
Equivalently,
\begin{align}
\vartheta(\alpha)=\vartheta_0+\sum_{k=1}^p\alpha_k N_k.
\end{align}
We partition the particular solution as
\begin{align}
\vartheta_{0}
=
\begin{bmatrix}
\operatorname{vec}(\clmapx^{[0]})\\
\operatorname{vec}(\clmapu^{[0]})\\
\clmapxaff^{[0]}\\
\clmapuaff^{[0]}
\end{bmatrix}.
\end{align}
Similarly, each nullspace direction is partitioned as
\begin{align}
N_k
=
\begin{bmatrix}
\operatorname{vec}(\clmapx^{[k]})\\
\operatorname{vec}(\clmapu^{[k]})\\
\clmapxaff^{[k]}\\
\clmapuaff^{[k]}
\end{bmatrix},
\quad k=1,\ldots,p.
\end{align}
Here $\clmapx^{[k]}$ and $\clmapu^{[k]}$ are the full response matrices with their respective dimensions
$$
\clmapx^{[k]}\in\mathbb R^{d_x\times d_w},
\quad
\clmapu^{[k]}\in\mathbb R^{d_u\times d_w}.
$$
Taking the corresponding blocks of $\vartheta(\alpha)$ gives
\begin{equation}
\begin{split}
\operatorname{vec}(\clmapx(\alpha))
&=
\operatorname{vec}(\clmapx^{[0]})
+
\sum_{k=1}^p \alpha_k \operatorname{vec}(\clmapx^{[k]}),
\\
\operatorname{vec}(\clmapu(\alpha))
&=
\operatorname{vec}(\clmapu^{[0]})
+
\sum_{k=1}^p \alpha_k \operatorname{vec}(\clmapu^{[k]}),
\\
\clmapxaff(\alpha)
&=
\clmapxaff^{[0]}
+
\sum_{k=1}^p \alpha_k \clmapxaff^{[k]},
\\
\clmapuaff(\alpha)
&=
\clmapuaff^{[0]}
+
\sum_{k=1}^p \alpha_k \clmapuaff^{[k]}.
\end{split}
\end{equation}

Applying $\text{vec}^{-1}$ to first two identities yields
\begin{align}
\clmapx(\alpha)
&=
\clmapx^{[0]}
+
\sum_{k=1}^p \alpha_k \clmapx^{[k]},
\\
\clmapu(\alpha)
&=
\clmapu^{[0]}
+
\sum_{k=1}^p \alpha_k \clmapu^{[k]}.
\end{align}
Therefore, each SLS response component is affine in $\alpha$.
Now define the weighted cost maps
$$
M(\alpha)
:=
\begin{bmatrix}
\bar Q^{1/2}\clmapx(\alpha)\\
\bar R^{1/2}\clmapu(\alpha)
\end{bmatrix},
\quad
m(\alpha)
:=
\begin{bmatrix}
\bar Q^{1/2}\clmapxaff(\alpha)\\
\bar R^{1/2}\clmapuaff(\alpha)
\end{bmatrix}.
$$
Let
$
d_y:=d_x+d_u.
$
Then
$$
M(\alpha)\in\mathbb R^{d_y\times d_w},
\quad
m(\alpha)\in\mathbb R^{d_y}.
$$
Substituting the affine response expansion gives
\begin{align}
M(\alpha)
&=
\begin{bmatrix}
\bar Q^{1/2}\clmapx^{[0]}\\
\bar R^{1/2}\clmapu^{[0]}
\end{bmatrix}
+
\sum_{k=1}^p
\alpha_k
\begin{bmatrix}
\bar Q^{1/2}\clmapx^{[k]}\\
\bar R^{1/2}\clmapu^{[k]}
\end{bmatrix},
\\
m(\alpha)
&=
\begin{bmatrix}
\bar Q^{1/2}\clmapxaff^{[0]}\\
\bar R^{1/2}\clmapuaff^{[0]}
\end{bmatrix}
+
\sum_{k=1}^p
\alpha_k
\begin{bmatrix}
\bar Q^{1/2}\clmapxaff^{[k]}\\
\bar R^{1/2}\clmapuaff^{[k]}
\end{bmatrix}.
\end{align}
Define
\begin{align}
M_{0}
:=
\begin{bmatrix}
\bar Q^{1/2}\clmapx^{[0]}\\
\bar R^{1/2}\clmapu^{[0]}
\end{bmatrix},
\quad
M_{k}
:=
\begin{bmatrix}
\bar Q^{1/2}\clmapx^{[k]}\\
\bar R^{1/2}\clmapu^{[k]}
\end{bmatrix},
\end{align}
and
\begin{align}
m_{0}
:=
\begin{bmatrix}
\bar Q^{1/2}\clmapxaff^{[0]}\\
\bar R^{1/2}\clmapuaff^{[0]}
\end{bmatrix},
\quad
m_{k}
:=
\begin{bmatrix}
\bar Q^{1/2}\clmapxaff^{[k]}\\
\bar R^{1/2}\clmapuaff^{[k]}
\end{bmatrix},\end{align}
to obtain
\begin{align}
M(\alpha)
&=
M_{0}
+
\sum_{k=1}^p \alpha_k M_{k},
\\
m(\alpha)
&=
m_{0}
+
\sum_{k=1}^p \alpha_k m_{k}.
\label{eq:affine_M_in_alpha}
\end{align}
Thus $M(\alpha)$ and $m(\alpha)$ are affine functions of the SLS coordinate
$\alpha$.

\subsection{Quadratic loss in the SLS coordinate}
\label{app:quadraticity_alpha}
We now show that the finite-horizon quadratic control loss is a convex quadratic
function of $\alpha$ for each fixed disturbance trajectory $w$.
For fixed $w\in\mathbb R^{d_w}$, define
$$
a_0(w):=M_{0}w+m_{0}\in\mathbb R^{d_y},
$$
and
$$
\mathcal A(w)
:=
\begin{bmatrix}
M_{1}w+m_{1} &
M_{2}w+m_{2} &
\cdots &
M_{p}w+m_{p}
\end{bmatrix}
\in\mathbb R^{d_y\times p}.
$$
Using the affine expansions of $M(\alpha)$ and $m(\alpha)$, we obtain
\begin{align}
M(\alpha)w+m(\alpha)
&=
\left(
M_{0}
+
\sum_{k=1}^p \alpha_k M_{k}
\right)w
+
m_{0}
+
\sum_{k=1}^p \alpha_k m_{k}
\\
&=
\underbrace{M_{0}w+m_{0}}_{a_0(w)}
+
\underbrace{\sum_{k=1}^p
\alpha_k\left(M_{k}w+m_{k}\right)}_{\mathcal A(w) \alpha}
\\
&=
a_0(w)+\mathcal A(w)\alpha.
\end{align}
For a fixed $w$, we have
\begin{align*}
\ell(\alpha,w)
&=
\|M(\alpha)w+m(\alpha)\|_2^2
=
\|\mathcal A(w)\alpha+a_0(w)\|_2^2
\\
&=
\alpha^\top \mathcal A(w)^\top \mathcal A(w)\alpha
+
2a_0(w)^\top\mathcal A(w)\alpha
+
a_0(w)^\top a_0(w).
\end{align*}
Hence, for every fixed $w$, the map $\alpha\mapsto \ell(\alpha,w)$ is a quadratic function. Here we also note a biquadratic structure. The loss $\ell(w,\alpha)$ is quadratic in $\alpha$
for fixed $w$ and quadratic in $w$ for a fixed $\alpha$.
Finally, assuming the relevant expectations are finite, the population risk is
\begin{align}
L(\alpha)
:=
\mathbb E_{w\sim \genDstro}[\ell(\alpha,w)]
=
\alpha^\top G_\alpha\alpha
+
2g_\alpha^\top\alpha
+
c_\alpha,
\end{align}
where
$
G_\alpha
:=
\mathbb E_{w\sim \genDstro}
\left[
\mathcal A(w)^\top\mathcal A(w)
\right]
\in\mathbb R^{p\times p},
$
$
g_\alpha
:=
\mathbb E_{w\sim \genDstro}
\left[
\mathcal A(w)^\top a_0(w)
\right]
\in\mathbb R^p,
$
and
$
c_\alpha
:=
\mathbb E_{w\sim \genDstro}
\left[
a_0(w)^\top a_0(w)
\right]
\in\mathbb R.
$
Moreover, $G_\alpha\succeq0$
% Indeed, for any $v\in\mathbb R^p$,
% \begin{align}
% v^\top G_\alpha v
% &=
% \mathbb E_{w\sim \genDstro}
% \left[
% v^\top \mathcal A(w)^\top\mathcal A(w)v
% \right]
% \\
% &=
% \mathbb E_{w\sim \genDstro}
% \left[
% \|\mathcal A(w)v\|_2^2
% \right]
% \\
% &\ge 0.
% \end{align}
,thus $L(\alpha)$ is a convex quadratic function in the SLS coordinate $\alpha$.

\subsection{Deterministic deployment (derandomization)}
\label{app:deterministic-deployment}
The PAC-Bayes certificate is naturally stated for the randomized posterior risk
$
\mathbb E_{\alpha\sim\rho}[L(\alpha)].
$
In many control applications, it is highly preferable to deploy a single deterministic controller. This usually stems
from certification and testing requirements. In the SLS quadratic setting, the natural choice is to deploy the \emph{posterior mean response}. 
This subsection shows that the mean response is feasible, inherits the PAC-Bayes certificate, and induces an
implementable controller. This is a consequence of the convexity of the cost in response space, but we also  
show how it can be improved beyond the simple application of Jensen's inequality due to  quadratic structure. 
Jensen's gap is exactly $\tr(G_\alpha\Sigma_\rho).$

Let $\rho$ be a posterior distribution over $\alpha$. Assume that $\rho$ has a finite second moment 
and define its mean and covariance by
$$
\mu_\rho:=\mathbb E_{\alpha\sim\rho}[\alpha],
\quad
\Sigma_\rho:=\operatorname{Cov}_{\alpha\sim\rho}(\alpha).
$$
Because $\vartheta(\alpha)$ is affine in $\alpha$, the posterior mean response is exactly the
response obtained by evaluating the SLS parameterization at the posterior mean:
$$
\bar\vartheta
:=
\mathbb E_{\alpha\sim\rho}[\vartheta(\alpha)]
=
\vartheta(\mu_\rho).
$$
Equivalently, denote 
$$
\bar\clmapx=\clmapx(\mu_\rho),
\quad
\bar\clmapu=\clmapu(\mu_\rho),
\quad
\bar\clmapxaff=\clmapxaff(\mu_\rho),
\quad
\bar\clmapuaff=\clmapuaff(\mu_\rho).
$$

Since the SLS achievability constraints are linear, feasibility is preserved under averaging.
The key advantage of the quadratic SLS setting is that the population risk is convex quadratic in
$\alpha$ as shown in the previous subsection.
Therefore,
$$
\mathbb E_{\alpha\sim\rho}[L(\alpha)]
=
L(\mu_\rho)+\tr(G_\alpha\Sigma_\rho).
$$
Since $G_\alpha\succeq0$, the randomization gap (or tax) satisfies
$
\operatorname{tr}(G_\alpha\Sigma_\rho)\ge0.
$
% and hence
% $$
% L(\mu_\rho)
% \le
% \mathbb E_{\alpha\sim\rho}[L(\alpha)].
% $$
Consequently, any PAC-Bayes upper bound on the randomized posterior risk also certifies the
deterministic mean response. This can also be recovered from Jensen as a one-liner~\citep{friendlyPACBayes} , but we 
want to explicitly characterize $\operatorname{tr}(G\Sigma_\rho)$ as a special important structure 
under quadratic costs.
% \begin{corollary}[Mean-response deployment certificate]
% If $G \succeq 0$ then $\operatorname{tr}(G\Sigma_\rho) \ge 0$, so $L(\mu_\rho) \le E_\rho[L(\alpha)]$. 
% Consequently, any high-probability bound $E_\rho[L(\alpha)] \le \mathcal B (\rho,S,\delta)$ also certifies $E_\rho[L(\alpha)] \le \mathcal B (\rho,S,\delta)$.
% \end{corollary}
% \begin{proof}
% Using the quadratic form of $L$,
% $$
% \mathbb E_\rho[L(\alpha)]
% =
% \mathbb E_\rho[\alpha^\top G\alpha]
% +
% 2g^\top\mathbb E_\rho[\alpha]
% +
% \alpha_0.
% $$
% Since
% $$
% \mathbb E_\rho[\alpha^\top G\alpha]
% =
% \mu_\rho^\top G\mu_\rho+\operatorname{tr}(G\Sigma_\rho),
% $$
% we obtain
% $$
% \mathbb E_\rho[L(\alpha)]
% =
% \mu_\rho^\top G\mu_\rho
% +
% 2g^\top\mu_\rho
% +
% \alpha_0
% +
% \operatorname{tr}(G\Sigma_\rho)
% =
% L(\mu_\rho)+\operatorname{tr}(G\Sigma_\rho).
% $$
% Because $G\succeq0$ and $\Sigma_\rho\succeq0$, we have
% $$
% \operatorname{tr}(G\Sigma_\rho)\ge0.
% $$
% Thus
% $$
% L(\mu_\rho)\le \mathbb E_\rho[L(\alpha)].
% $$
% The final statement follows by substitution into the PAC-Bayes bound.
% \end{proof}
The term $\operatorname{tr}(G\Sigma_\rho)$
has a useful interpretation. It is the excess risk incurred by randomized deployment relative to
mean-response deployment. It can also be seen as a curvature-weighted posterior variance term: posterior spread is
expensive in directions where the closed-loop cost has high curvature and cheap in directions where
the cost is flat.

\paragraph{Why the Dirac-posterior issue does not arise?}

At first sight, certifying the deterministic mean response may appear to conflict
with the usual PAC-Bayes requirement that the posterior have finite KL divergence
with respect to the prior. Indeed, if the prior $\pi$ is a continuous
distribution, then the Dirac measure $\delta_{\mu_\rho}$  satisfies
$
\mathrm{KL}(\delta_{\mu_\rho}\|\pi)=\infty.
$
Thus, one cannot usually apply the PAC-Bayes theorem directly to the deterministic
posterior $\delta_{\mu_\rho}$.
The mean-response certificate does not do this. Instead, the PAC-Bayes theorem is
applied to the non-degenerate posterior $\rho$, for which
$\mathrm{KL}(\rho\|\pi)<\infty$. 
%Therefore, with high probability over the
% training sample $S$, we obtain
% $$
% \mathbb E_{\alpha\sim\rho}[L(\alpha)]
% \le
% \mathcal B(\rho,S,\delta).
% $$
% Separately, because the SLS population risk is convex in the response coordinate,
% Jensen's inequality gives
% $$
% L(\mu_\rho)
% =
% L\!\left(\mathbb E_{\alpha\sim\rho}[\alpha]\right)
% \le
% \mathbb E_{\alpha\sim\rho}[L(\alpha)].
% $$
% Combining the two inequalities yields
% $$
% L(\mu_\rho)
% \le
% \mathbb E_{\alpha\sim\rho}[L(\alpha)]
% \le
% \mathcal B(\rho,S,\delta).
% $$
Hence the deterministic mean response is certified indirectly through the
randomized posterior certificate. The Dirac measure $\delta_{\mu_\rho}$ is
never used as the PAC-Bayes posterior.
In the quadratic SLS setting, this transfer is even more explicit. Since
$$
L(\alpha)=\alpha^\top G\alpha+2g^\top\alpha+\alpha_0,
\quad G\succeq0,
$$
we have the exact identity
$$
\mathbb E_{\alpha\sim\rho}[L(\alpha)]
=
L(\mu_\rho)+\operatorname{tr}(G\Sigma_\rho).
$$
% Thus,
% $$
% L(\mu_\rho)
% =
% \mathbb E_{\alpha\sim\rho}[L(\alpha)]
% -
% \operatorname{tr}(G\Sigma_\rho)
% \le
% \mathbb E_{\alpha\sim\rho}[L(\alpha)].
% $$
% The term $\operatorname{tr}(G\Sigma_\rho)$ is the randomization tax: it measures
% the excess risk of randomized deployment relative to deploying the mean response.
% The PAC-Bayes posterior is therefore used as a finite-KL certification device,
% while the actually deployed controller may be the deterministic controller
% associated with the posterior mean SLS response.

\subsection{Recovering the controller for the mean response}
\label{sec:mean-controller-equations}
The deterministic response certified by Proposition~\ref{prop:mean_response_certificate} is the mean
SLS response
$$
(\bar\clmapx,\bar\clmapu,\bar\clmapxaff,\bar\clmapuaff)
=
(\clmapx(\mu_\rho),\clmapu(\mu_\rho),\clmapxaff(\mu_\rho),\clmapuaff(\mu_\rho)).
$$
To implement this response, we recover the corresponding affine controller using the standard SLS
inversion. Since
$$
x=\bar\clmapx w+\bar\clmapxaff,
\quad
u=\bar\clmapu w+\bar\clmapuaff,
$$
and $\bar\clmapx$ is causally invertible, we have
$$
w=\bar\clmapx^{-1}(x-\bar\clmapxaff).
$$
Substituting this into the input response gives
$$
u
=
\bar\clmapu\bar\clmapx^{-1}(x-\bar\clmapxaff)+\bar\clmapuaff.
$$
Therefore, the deterministic affine controller that realizes the mean response is
$$
u=\bar K x+\bar k,
$$
where
$$
\bar K
:=
\bar\clmapu\bar\clmapx^{-1},
\quad
\bar k
:=
\bar\clmapuaff-\bar\clmapu\bar\clmapx^{-1}\bar\clmapxaff.
$$
Equivalently,
$$
\bar K
=
\clmapu(\mu_\rho)\clmapx(\mu_\rho)^{-1},
$$
and
$$
\bar k
=
\clmapuaff(\mu_\rho)
-
\clmapu(\mu_\rho)\clmapx(\mu_\rho)^{-1}\clmapxaff(\mu_\rho).
$$

Thus, the deployment procedure is to compute $\mu_\rho$, then compute, 
$
(\clmapx(\mu_\rho),\clmapu(\mu_\rho),\clmapxaff(\mu_\rho),\clmapuaff(\mu_\rho))
$
which finally give
$
(\bar K,\bar k).
$
This is not the same as averaging the feedback operators themselves. In general,
$$
\mathbb E_{\alpha\sim\rho}
\left[
\clmapu(\alpha)\clmapx(\alpha)^{-1}
\right]
\neq
\mathbb E_{\alpha\sim\rho}[\clmapu(\alpha)]
\,
\mathbb E_{\alpha\sim\rho}[\clmapx(\alpha)]^{-1}.
$$
The correct deterministic controller is obtained by first averaging in the response space and then
recovering the controller.

\section{Data-driven Gaussian certificates}
\label{app:data_driven_gaussian}
This appendix shows how the Gaussian PAC-Bayes certificate can be made data-driven when the disturbance covariance $\Sigma_w$ is unknown, hence transitioning 
from an \emph{oracle} bound to an actually computable one. 
The key idea is to estimate $\Sigma_w$ from the disturbance samples, inflate the estimate, and then use the 
inflated covariance in the Gaussian sensitivity coefficient. 
We also show that the same covariance event controls the curvature-mismatch term appearing in the deterministic mean response certificate.
Finally, our approach maintains the curvature awareness interpretation. 

For simplicity, we focus on the  zero offset setting ($m(\alpha)=0$) and zero mean disturbance, i.e.
$$
w_i \overset{\mathrm{i.i.d.}}{\sim} \mathcal N(0,\Sigma_w),
$$
for every $i \in \{1,\cdots, n\}$.
%The zero-offset assumption is used only for the curvature-mismatch refinement. T
%he covariance-inflation argument for the Gaussian sensitivity coefficient applies more generally to the covariance-dependent part of the certificate.
The loss is
$$
\ell(\alpha,w)=\|M(\alpha)w\|^2,
$$
where
$$
M(\alpha)=M_{0}+\sum_{k=1}^p \alpha_k M_{k}.
$$
Here $\alpha\in\mathbb R^p$, $w\in\mathbb R^{d_w}$, and
$
M_{k}
\in \mathbb R^{d_y\times d_w}
$
as explained in Appendix~\ref{app:nullspace_to_cost_maps} and is the consequence of the nullspace parametrization.

\subsection{Covariance event and inflated data-driven covariance}
\label{app:covariance_inflation}
We can compute the empirical covariance as
\begin{align}
\widehat \Sigma_w
=
\frac1n\sum_{i=1}^n w_iw_i^\top
\label{eq:emprical_covariance_app}
\end{align}
We assume that, for some $\epsilon_\Sigma>0$, the covariance event
\begin{align}
\mathcal E_\Sigma
=
\left\{
\|\widehat\Sigma_w-\Sigma_w\|_{\rm op}\le \epsilon_\Sigma
\right\}
\label{eq:covariance-event-app}
\end{align}
holds. We will discuss how to compute such a radius later on in Appendix~\ref{app:covariance_radius}. Instead, we now focus on the implications of this event.
% Equivalently, on $\mathcal E_\Sigma$,
% $$
% -\epsilon_\Sigma I
% \preceq
% \widehat\Sigma_w-\Sigma_w
% \preceq
% \epsilon_\Sigma I.
% $$
Covariance event given in~\eqref{eq:covariance-event-app} implies
$$
\Sigma_w
\preceq
\widehat\Sigma_w+\epsilon_\Sigma I.
$$
The inflated empirical covariance is then 
\begin{align}
\widehat\Sigma_w+\epsilon_\Sigma I.
\label{eq:inflated-covariance-app}
\end{align}
% \begin{lemma}[PSD covariance inflation]
% \label{lem:psd_covariance_inflation}
Moreover, on the event $\mathcal E_\Sigma$, for every response $\alpha$, it is not hard to see that 
\begin{align}
M(\alpha)\Sigma_wM(\alpha)^\top
\preceq
M(\alpha)(\widehat\Sigma_w+\epsilon_\Sigma I)M(\alpha)^\top.
\end{align}
Consequently 
\begin{align}
\|M(\alpha)\Sigma_wM(\alpha)^\top\|_F^2
\le
\|M(\alpha)(\widehat\Sigma_w+\epsilon_\Sigma I)M(\alpha)^\top\|_F^2.
\label{eq:inflated_sigma_y}
\end{align}
%\end{lemma}

% \begin{proof}
% The covariance event implies
% $$
% \Sigma_w\preceq \widetilde\Sigma_w.
% $$
% Premultiplying and postmultiplying by $M(\alpha)$ and $M(\alpha)^\top$ preserves the Loewner order, hence
% $$
% M(\alpha)\Sigma_wM(\alpha)^\top
% \preceq
% M(\alpha)\widetilde\Sigma_wM(\alpha)^\top.
% $$
% Both matrices are positive semidefinite. If $0\preceq A\preceq B$, then
% $$
% \|B\|_F^2-\|A\|_F^2
% =
% \operatorname{tr}(B^2)-\operatorname{tr}(A^2)
% =
% \operatorname{tr}((B-A)(B+A))
% \ge 0,
% $$
% because $B-A\succeq0$, $B+A\succeq0$, and the trace of the product of two positive semidefinite matrices is nonnegative. Applying this with
% $$
% A=M(\alpha)\Sigma_wM(\alpha)^\top,
% \quad
% B=M(\alpha)\widetilde\Sigma_wM(\alpha)^\top
% $$
% gives the claim.
% \end{proof}

\subsection{Data-driven Gaussian complexity coefficient}
\label{app:data_driven_complexity}
The Gaussian PAC-Bayes certificate involves the oracle sensitivity coefficient, which we denote
\begin{align}
C_\rho
=
\mathbb E_{\alpha\sim\rho}
\left[
\|M(\alpha)\Sigma_wM(\alpha)^\top\|_F^2
\right].
\label{eq:sensitivity-coefficient-app}
\end{align}

Since $\Sigma_w$ is unknown, $C_\rho$ is not directly computable (hence, oracle). 
On the covariance event $\mathcal E_\Sigma$, \eqref{eq:inflated_sigma_y} gives the computable upper bound
$$
C_\rho
\le
\widehat C_\rho,
$$
where the data-driven sensitivity coefficient is 
\begin{align}
\widehat C_\rho
:=
\mathbb E_{\alpha\sim\rho}
\left[
\|M(\alpha)(\widehat\Sigma_w+\epsilon_\Sigma I)M(\alpha)^\top\|_F^2
\right].
\label{eq:data-driven-sensitivity-coefficient-app}
\end{align}

Thus, the unknown covariance is handled by replacing the oracle coefficient $C_\rho$ 
with the inflated empirical coefficient $\widehat C_\rho$.
For example, if the Gaussian PAC-Bayes theorem gives a bound of the form
$$
\mathbb E_{\alpha\sim\rho}[L(\alpha)]
\le
\mathbb E_{\alpha\sim\rho}[\widehat L_S(\alpha)]
+
2\sqrt{
\frac{
C_\rho
\left(
\mathrm{KL}(\rho\|\pi)+\log(n/\delta_{\rm PB})
\right)
}{n-1}
},
$$
then, on the covariance event, it yields the data-driven bound
$$
\mathbb E_{\alpha\sim\rho}[L(\alpha)]
\le
\mathbb E_{\alpha\sim\rho}[\widehat L_S(\alpha)]
+
2\sqrt{
\frac{
\widehat C_\rho
\left(
\mathrm{KL}(\rho\|\pi)+\log(n/\delta_{\rm PB})
\right)
}{n-1}
}.
$$

It is useful to define a data-driven complexity 
$$
\mathrm{Comp}_{\rm data}(\rho)
:=
2\sqrt{
\frac{
\widehat C_\rho
\left(
\mathrm{KL}(\rho\|\pi)+\log(n/\delta_{\rm PB})
\right)
}{n-1}
}
$$
and state the following.
\begin{proposition}[Data-driven randomized Gaussian certificate]
\label{prop:data_driven_randomized_gaussian}
Suppose the PAC-Bayes event holds with a probability of at least $1-\delta_{\rm PB}$ and the covariance event $\mathcal E_\Sigma$ holds. Let $C_\rho$ and $\widehat C_\rho$ 
be as in $\eqref{eq:sensitivity-coefficient-app},\eqref{eq:data-driven-sensitivity-coefficient-app}$.
Then, on the intersection of these events, for all posteriors $\rho \ll \pi$,
\begin{align}
\mathbb E_{\alpha\sim\rho}[L(\alpha)]
\le
\mathbb E_{\alpha\sim\rho}[\widehat L_S(\alpha)]
+
\mathrm{Comp}_{\rm data}(\rho).
\label{eq:data-driven-bound-app}
\end{align}
If $\mathbb P(\mathcal E_\Sigma)\ge 1-\delta_\Sigma$, then the bound~\eqref{eq:data-driven-bound-app} holds with a probability of at least $1-\delta_{\rm PB}-\delta_\Sigma$.
\end{proposition}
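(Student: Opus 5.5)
The plan is to start from the oracle Gaussian certificate of Corollary~\ref{cor:gaussian-quadratic}, specialize it to the zero-mean, zero-offset setting, optimize over $\lambda$ in closed form, and then replace the unknown coefficient by its inflated empirical surrogate using the covariance event.

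\textbf{Step 1 (the oracle bound).} With $\mu_w=0$ and $m(\alpha)=0$ we have $\mu_y(\alpha)=0$, so the mean term of $\phi(\alpha)$ vanishes and $\phi(\alpha)=\|\Sigma_y(\alpha)\|_F^2$. Hence $\mathbb E_\rho[\phi(\alpha)]=C_\rho$ as in \eqref{eq:sensitivity-coefficient-app}.

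\textbf{Step 2 (optimizing over $\lambda$).} Corollary~\ref{cor:gaussian-quadratic} gives, on the PAC-Bayes event and simultaneously for all $\rho\ll\pi$, the bound with complexity
\[
\inf_{\lambda>0}\left\{\frac{K}{\lambda(n-1)}+\lambda C_\rho\right\},
\qquad K:=\mathrm{KL}(\rho\|\pi)+\log(n/\delta_{\rm PB}).
\]
Since $\psi$ is valid with $b=\infty$, the infimum is over all $\lambda>0$. Elementary AM--GM minimization gives the value $2\sqrt{C_\rho K/(n-1)}$. If $C_\rho=0$, the infimum is $0$ and the bound still holds.

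\textbf{Step 3 (passing to the data-driven coefficient).} On $\mathcal E_\Sigma$ we have $\Sigma_w\preceq\widehat\Sigma_w+\epsilon_\Sigma I$. Congruence by $M(\alpha)$ preserves the Loewner order, so
\[
0\preceq A:=M(\alpha)\Sigma_wM(\alpha)^\top\preceq B:=M(\alpha)(\widehat\Sigma_w+\epsilon_\Sigma I)M(\alpha)^\top.
\]
It remains to justify $\|A\|_F^2\le\|B\|_F^2$, as claimed in \eqref{eq:inflated_sigma_y}. This is the one point that needs a short argument. Write
\[
\operatorname{tr}(B^2)-\operatorname{tr}(A^2)=\operatorname{tr}\bigl((B-A)(B+A)\bigr),
\]
which is valid because the cross terms $\operatorname{tr}(BA)$ and $\operatorname{tr}(AB)$ agree by cyclicity. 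The right-hand side is the trace of a product of two PSD matrices, hence nonnegative. Averaging over $\rho$ gives $C_\rho\le\widehat C_\rho$. Monotonicity of the square root then yields \eqref{eq:data-driven-bound-app} on the intersection of the two events.

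\textbf{Step 4 (probability).} By a union bound, the intersection has probability at least $1-\delta_{\rm PB}-\delta_\Sigma$.

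I expect no real obstacle; the only nonroutine step is the Frobenius-norm monotonicity in Step 3. If the intended statement instead keeps the infimum form, Step 2 is replaced by bounding the infimum termwise with $\widehat C_\rho$.
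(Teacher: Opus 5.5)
Your proof is correct and follows the same route as the paper. You start from the oracle Gaussian bound in its optimized form $2\sqrt{C_\rho(\mathrm{KL}(\rho\|\pi)+\log(n/\delta_{\rm PB}))/(n-1)}$, show $C_\rho\le\widehat C_\rho$ on $\mathcal E_\Sigma$ via the inflation $\Sigma_w\preceq\widehat\Sigma_w+\epsilon_\Sigma I$, and finish with a union bound; you also spell out the $\lambda$-optimization and the Frobenius monotonicity step $\operatorname{tr}((B-A)(B+A))\ge 0$, which the paper leaves as ``not hard to see.''
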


\begin{proof}
The oracle Gaussian PAC-Bayes theorem gives, on the PAC-Bayes event,
$$
\mathbb E_{\alpha\sim\rho}[L(\alpha)]
\le
\mathbb E_{\alpha\sim\rho}[\widehat L_S(\alpha)]
+
2\sqrt{
\frac{
C_\rho
\left(
\mathrm{KL}(\rho\|\pi)+\log(n/\delta_{\rm PB})
\right)
}{n-1}
}.
$$
On $\mathcal E_\Sigma$, it holds that $C_\rho\le \widehat C_\rho$. Substitution gives the stated bound. The probability statement follows by the union bound.
\end{proof}

\subsection{Curvature matrices and mean-response decomposition}
\label{app:curvature_matrices}

For fixed $w$, define the sample-dependent design matrix in the response coordinate $\alpha$ by
$$
\mathcal A(w)
:=
\begin{bmatrix}
M_{1}w & \cdots & M_{p}w
\end{bmatrix}
\in\mathbb R^{d_y\times p}.
$$
Then
$$
M(\alpha)w
=
M_{0}w+\mathcal A(w)\alpha.
$$
Let
$$
a_0(w):=M_{0}w.
$$
The loss can be written as
$$
\ell(\alpha,w)
=
\|a_0(w)+\mathcal A(w)\alpha\|^2.
$$
Expanding in $\alpha$ gives
$$
\ell(\alpha,w)
=
\alpha^\top \mathcal A(w)^\top\mathcal A(w)\alpha
+
2a_0(w)^\top\mathcal A(w)\alpha,
+
\|a_0(w)\|^2.
$$

From previous sections (see Appendix~\ref{app:quadraticity_alpha}), we know that 
$$
G_\alpha
:=
\mathbb E\left[\mathcal A(w)^\top\mathcal A(w)\right],
%\in\mathbb R^{p\times p}, 
\quad g_\alpha
:=
\mathbb E\left[\mathcal A(w)^\top a_0(w)\right],
\quad 
c_\alpha
:=
\mathbb E\left[\|a_0(w)\|^2\right]
$$
and we can define the empirical counterparts as 
$$
\widehat G_S
:=
\frac1n\sum_{i=1}^n
\mathcal A(w_i)^\top\mathcal A(w_i),
%\in\mathbb R^{p\times p}, 
\quad \widehat g_S
:=
\frac1n\sum_{i=1}^n
\mathcal A(w_i)^\top a_0(w_i),
\quad
\widehat c_S
:=
\frac1n\sum_{i=1}^n
\|a_0(w_i)\|^2.
$$
% Similarly define
% $$
% g
% :=
% \mathbb E\left[\mathcal A(w)^\top a_0(w)\right],
% \quad
% \widehat g_S
% :=
% \frac1n\sum_{i=1}^n
% \mathcal A(w_i)^\top y_0(w_i),
% $$
% and
% $$
% c
% :=
% \mathbb E\left[\|a_0(w)\|^2\right],
% \quad
% \widehat c_S
% :=
% \frac1n\sum_{i=1}^n
% \|y_0(w_i)\|^2.
% $$
Then
$$
L(\alpha)
=
\alpha^\top G_\alpha\alpha+2g_\alpha^\top\alpha+c_\alpha,
$$
and
$$
\widehat L_S(\alpha)
=
\alpha^\top \widehat G_S\alpha
+
2\widehat g_S^\top\alpha
+
\widehat c_S.
$$

Let
$$
\mu_\rho:=\mathbb E_{\alpha\sim\rho}[\alpha],
\quad
\Sigma_\rho
:=
\mathbb E_{\alpha\sim\rho}
\left[
(\alpha-\mu_\rho)(\alpha-\mu_\rho)^\top
\right].
$$
% For any quadratic function
% $$
% q(\alpha)=\alpha^\top H\alpha+2h^\top\alpha+c,
% $$
% we have
% $$
% \mathbb E_{\alpha\sim\rho}[q(\alpha)]
% =
% q(\mu_\rho)+\operatorname{tr}(H\Sigma_\rho).
% $$
In light of Proposition~\ref{prop:quadratic_decomposition}%, $L$ and $\widehat L_S$ are
$$
\mathbb E_{\alpha\sim\rho}[L(\alpha)]
=
L(\mu_\rho)+\operatorname{tr}(G\Sigma_\rho),
$$
and
$$
\mathbb E_{\alpha\sim\rho}[\widehat L_S(\alpha)]
=
\widehat L_S(\mu_\rho)+\operatorname{tr}(\widehat G_S\Sigma_\rho).
$$

Combining these identities with Proposition~\ref{prop:data_driven_randomized_gaussian} yields
$$
L(\mu_\rho)
\le
\widehat L_S(\mu_\rho)
+
\mathrm{Comp}_{\rm data}(\rho)
+
\operatorname{tr}\left((\widehat G_S-G)\Sigma_\rho\right).
$$
The remaining task is to somehow control the curvature mismatch
$$
\operatorname{tr}\left((\widehat G_S-G)\Sigma_\rho\right).
$$

\subsection{Curvature mismatch as a covariance-estimation error}
\label{app:curvature_mismatch}
We now show that the curvature mismatch can be controlled using the same covariance event $\mathcal E_\Sigma$.
It will be useful to define
$$
r_\rho(w)
:=
\operatorname{tr}\left(
\mathcal A(w)\Sigma_\rho\mathcal A(w)^\top
\right).
$$
By the cyclicity of the trace,
$$
r_\rho(w)
=
\operatorname{tr}\left(
\mathcal A(w)^\top\mathcal A(w)\Sigma_\rho
\right).
$$
Therefore,
$$
\operatorname{tr}(\widehat G_S\Sigma_\rho)
=
\frac1n\sum_{i=1}^n r_\rho(w_i),
$$
and
$$
\operatorname{tr}(G\Sigma_\rho)
=
\mathbb E[r_\rho(w)].
$$
Hence
$$
\operatorname{tr}\left((\widehat G_S-G)\Sigma_\rho\right)
=
\frac1n\sum_{i=1}^n r_\rho(w_i)
-
\mathbb E[r_\rho(w)].
$$

Next, we can expand $r_\rho(w)$ since
$$
\mathcal A(w)
=
\begin{bmatrix}
M_{1}w & \cdots & M_{p}w
\end{bmatrix}.
$$
We have
$$
r_\rho(w)
=
\sum_{i,j=1}^p
(\Sigma_\rho)_{ij}
(M_{i}w)^\top(M_{j}w).
$$
Rewriting
$
(M_{i}w)^\top(M_{j}w)
=
w^\top (M_{i})^\top M_{j}w,
$
gives
$
r_\rho(w)
=
w^\top A_\rho w,
$
where
\begin{align}
A_\rho
:=
\sum_{i,j=1}^p
(\Sigma_\rho)_{ij}
(M_{i})^\top M_{j}
\in \mathbb R^{d_w\times d_w}.
\label{eq:Arho}
\end{align}
Moreover $A_\rho\succeq0$, because
$$
w^\top A_\rho w
=
\operatorname{tr}\left(
\mathcal A(w)\Sigma_\rho\mathcal A(w)^\top
\right)
=
\|\mathcal A(w)\Sigma_\rho^{1/2}\|_F^2
\ge 0
$$
for all $w$.

Using $w^\top A_\rho w=\operatorname{tr}(A_\rho ww^\top)$, we get
$$
\frac1n\sum_{i=1}^n r_\rho(w_i)
=
\operatorname{tr}(A_\rho\widehat\Sigma_w),
$$
and
$$
\mathbb E[r_\rho(w)]
=
\operatorname{tr}(A_\rho\Sigma_w).
$$
Therefore, we obtain the key identity
\begin{align}
\operatorname{tr}\left((\widehat G_S-G)\Sigma_\rho\right)
=
\operatorname{tr}\left(A_\rho(\widehat\Sigma_w-\Sigma_w)\right).
\label{eq:key-identity-Arho}
\end{align}
\begin{lemma}[Curvature mismatch under covariance inflation]
\label{lem:curvature_mismatch_covariance}
On the covariance event $\mathcal E_\Sigma$,
$$
\operatorname{tr}\left((\widehat G_S-G)\Sigma_\rho\right)
\le
\epsilon_\Sigma \operatorname{tr}(A_\rho).
$$
\end{lemma}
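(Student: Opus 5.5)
The plan is to start from the key identity \eqref{eq:key-identity-Arho}, which rewrites the curvature mismatch exactly as
$$
\operatorname{tr}\left((\widehat G_S-G)\Sigma_\rho\right)
=
\operatorname{tr}\left(A_\rho(\widehat\Sigma_w-\Sigma_w)\right).
$$
This reduces the lemma to bounding a trace pairing between a fixed PSD matrix $A_\rho$ and the symmetric error matrix $E:=\widehat\Sigma_w-\Sigma_w$. The one thing we know about $E$ on $\mathcal E_\Sigma$ is $\|E\|_{\rm op}\le\epsilon_\Sigma$, or equivalently $E\preceq\epsilon_\Sigma I$ (and $E\succeq-\epsilon_\Sigma I$).

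The central step is the standard inequality $\operatorname{tr}(AE)\le \|E\|_{\rm op}\operatorname{tr}(A)$ for $A\succeq 0$ and symmetric $E$. I would prove it through the square root of $A_\rho$. Write $\operatorname{tr}(A_\rho E)=\operatorname{tr}(A_\rho^{1/2}EA_\rho^{1/2})$. Since $E\preceq\epsilon_\Sigma I$, congruence by $A_\rho^{1/2}$ preserves the Loewner order, so $A_\rho^{1/2}EA_\rho^{1/2}\preceq\epsilon_\Sigma A_\rho$. Taking traces, which is monotone in the Loewner order, gives $\operatorname{tr}(A_\rho E)\le\epsilon_\Sigma\operatorname{tr}(A_\rho)$.

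The only ingredient this needs is $A_\rho\succeq 0$, and the paper has already shown this via $w^\top A_\rho w=\|\mathcal A(w)\Sigma_\rho^{1/2}\|_F^2\ge0$. One detail to check is that $A_\rho$ is symmetric. This holds because $\Sigma_\rho$ is symmetric, so $\sum_{i,j}(\Sigma_\rho)_{ij}M_i^\top M_j$ equals its own transpose.

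I would also add a remark that $\operatorname{tr}(A_\rho)=\sum_{i,j}(\Sigma_\rho)_{ij}\operatorname{tr}(M_i^\top M_j)=\operatorname{tr}(\Sigma_\rho M_G)$. This connects the lemma to the term $\epsilon_\Sigma\operatorname{tr}(\Sigma_\rho M_G)$ in Proposition~\ref{prop:data_driven_mean_certificate}. For a diagonal $\Sigma_\rho$ it reduces to $\sum_k\sigma_k^2\|M_k\|_F^2$.

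There is no serious obstacle here. The one place to be careful is not to bound $\operatorname{tr}(A_\rho E)$ by $\|A_\rho\|_F\|E\|_F$, which is weaker. The operator-norm and trace duality, equivalently the Loewner argument above, is the step that gives exactly $\epsilon_\Sigma\operatorname{tr}(A_\rho)$.
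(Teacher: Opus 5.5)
Your proof is correct and follows the paper's argument: apply the key identity \eqref{eq:key-identity-Arho}, use $\widehat\Sigma_w-\Sigma_w\preceq\epsilon_\Sigma I$ on $\mathcal E_\Sigma$, and pair this against $A_\rho\succeq 0$. The congruence step through $A_\rho^{1/2}$ makes explicit the trace-monotonicity fact that the paper uses without comment.
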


\begin{proof}
On $\mathcal E_\Sigma$, it holds 
$
\widehat\Sigma_w-\Sigma_w
\preceq
\epsilon_\Sigma I.
$
Since $A_\rho\succeq0$,
$$
\operatorname{tr}\left(A_\rho(\widehat\Sigma_w-\Sigma_w)\right)
\le
\operatorname{tr}\left(A_\rho(\epsilon_\Sigma I)\right)
=
\epsilon_\Sigma\operatorname{tr}(A_\rho).
$$
Using the identity~\eqref{eq:key-identity-Arho} gives the result.
\end{proof}

For diagonal posterior covariance,
$$
\Sigma_\rho
=
\operatorname{diag}(\sigma_1^2,\ldots,\sigma_p^2),
$$
the matrix $A_\rho$ simplifies to
$$
A_\rho
=
\sum_{k=1}^p
\sigma_k^2
(M_{k})^\top M_{k}.
$$
Consequently,
$$
\operatorname{tr}(A_\rho)
=
\sum_{k=1}^p
\sigma_k^2
\|M_{k}\|_F^2.
$$
Thus, the mismatch penalty becomes the computable weighted posterior-variance penalty
$$
\epsilon_\Sigma\operatorname{tr}(A_\rho)
=
\epsilon_\Sigma
\sum_{k=1}^p
\sigma_k^2
\|M_{k}\|_F^2.
$$

\subsection{ Final data-driven deterministic certificate}
\label{app:final_data_driven_certificate}
Finally, we now combine the data-driven randomized PAC-Bayes certificate with the curvature-mismatch bound to round off the story via the following proposition.

\begin{proposition}[Data-driven deterministic certificate]
\label{prop:data-driven-derand}
Assume the zero mean, zero offset Gaussian setting
$w_i \overset{\text{i.i.d.}}{\sim} \mathcal{N}(0,\Sigma_w)$ with $m(\alpha)=0$.
Suppose the PAC-Bayes event holds with probability at least $1-\delta_{\mathrm{PB}}$
and the covariance event $\mathcal{E}_\Sigma$ holds with probability at least
$1-\delta_\Sigma$. Let $\rho \ll \pi$ be a posterior with
$\mathbb{E}_\rho[\|\alpha\|^2] < \infty$, mean $\mu_\rho$ and covariance $\Sigma_\rho$.
Then, with probability at least $1-\delta_{\mathrm{PB}}-\delta_\Sigma$,
$$
L(\mu_\rho) \;\le\; \widehat{L}_S(\mu_\rho)
+ \mathrm{Comp}_{\mathrm{data}}(\rho)
+ \epsilon_\Sigma \,\mathrm{tr}(A_\rho),
$$
where $A_\rho := \sum_{i,j=1}^{p} (\Sigma_\rho)_{ij}\, M_i^\top M_j$ as in \eqref{eq:Arho}.
Equivalently, using $\mathrm{tr}(A_\rho)=\mathrm{tr}(\Sigma_\rho M_{\mathrm{G}})$
with the Gramian $(M_{\mathrm{G}})_{ij}=\mathrm{tr}(M_i^\top M_j)$, the mismatch term
recovers the form of Proposition~\eqref{prop:data_driven_mean_certificate}. 
For a diagonal posterior covariance
$\Sigma_\rho = \mathrm{diag}(\sigma_1^2,\dots,\sigma_p^2)$,
\begin{align}
L(\mu_\rho) \;\le\; \widehat{L}_S(\mu_\rho)
+ \mathrm{Comp}_{\mathrm{data}}(\rho)
+ \epsilon_\Sigma \sum_{k=1}^{p} \sigma_k^2 \,\|M_k\|_F^2 .
\end{align}
\end{proposition}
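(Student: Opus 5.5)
The plan is to chain three earlier results on the intersection of the PAC-Bayes event and the covariance event $\mathcal E_\Sigma$, then finish with a union bound.

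\textbf{Step 1 (randomized certificate).} Invoke Proposition~\ref{prop:data_driven_randomized_gaussian}. On the intersection of the two events it gives, for every $\rho\ll\pi$,
$$
\E_\rho[L(\alpha)]\le \E_\rho[\widehat L_S(\alpha)]+\mathrm{Comp}_{\rm data}(\rho).
$$
The proposition itself rests on the oracle square-root bound. That bound comes from Corollary~\ref{cor:gaussian-quadratic}: with $\mu_y=0$, we have $\phi(\alpha)=\|\Sigma_y(\alpha)\|_F^2$, and minimizing $\frac{K}{\lambda(n-1)}+\lambda C_\rho$ over $\lambda>0$ gives $2\sqrt{C_\rho K/(n-1)}$. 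One detail needs checking: the PAC-Bayes statement holds simultaneously over all $\rho$, while the optimal $\lambda$ depends on $\rho$. This is harmless, because Theorem~\ref{thm:pb_cgf} already contains the infimum over $\lambda$. The replacement $C_\rho\le\widehat C_\rho$ then follows from \eqref{eq:inflated_sigma_y}.

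\textbf{Step 2 (derandomization).} Apply Proposition~\ref{prop:mean_response_certificate} with $\mathrm{Comp}=\mathrm{Comp}_{\rm data}$. It applies because $\rho$ has a finite second moment and $L$ and $\widehat L_S$ are quadratic in $\alpha$ (Appendix~\ref{app:quadraticity_alpha}); the quadratic structure also uses $\E\|w\|^2<\infty$, which holds for Gaussian $w$. This gives
$$
L(\mu_\rho)\le\widehat L_S(\mu_\rho)+\mathrm{Comp}_{\rm data}(\rho)+\tr((\widehat G_S-G)\Sigma_\rho).
$$

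\textbf{Step 3 (curvature mismatch).} Bound the last term with Lemma~\ref{lem:curvature_mismatch_covariance}, which gives $\epsilon_\Sigma\tr(A_\rho)$ on $\mathcal E_\Sigma$. Its proof uses the identity \eqref{eq:key-identity-Arho}, and that identity needs $m(\alpha)=0$ so that $\mathcal A(w)$ is linear in $w$. I expect this identity to be the only delicate point. To confirm it, note that $\widehat\Sigma_w=\frac1n\sum_i w_iw_i^\top$ is the uncentered second moment, and it matches $\Sigma_w=\E[ww^\top]$ precisely because the mean is zero.

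\textbf{Step 4 (Gramian form).} Compute
$$
\tr(A_\rho)=\sum_{i,j}(\Sigma_\rho)_{ij}\tr(M_i^\top M_j)=\tr(\Sigma_\rho M_G),
$$
using the symmetry of $M_G$. For diagonal $\Sigma_\rho$ this reduces to $\sum_k\sigma_k^2\|M_k\|_F^2$.

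\textbf{Probability.} The failure probability of the intersection is at most $\delta_{\rm PB}+\delta_\Sigma$ by the union bound, so the stated bound holds with probability at least $1-\delta_{\rm PB}-\delta_\Sigma$.
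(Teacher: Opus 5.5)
Your proposal is correct and follows the paper's proof essentially step for step. You work on the intersection of the PAC-Bayes and covariance events to get the randomized data-driven certificate, apply the mean-plus-trace quadratic decomposition (reaching it through Proposition~\ref{prop:mean_response_certificate} rather than Proposition~\ref{prop:quadratic_decomposition} directly, which makes no difference), bound the mismatch with the key identity and Lemma~\ref{lem:curvature_mismatch_covariance}, and finish with a union bound. You also spell out steps the paper leaves implicit, and both are correct: you derive $\mathrm{Comp}_{\rm data}$ from Corollary~\ref{cor:gaussian-quadratic} with $\mu_y=0$, and you verify $\tr(A_\rho)=\tr(\Sigma_\rho M_G)$ using the symmetry of the Gramian.
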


\begin{proof}
On the intersection of the PAC-Bayes event and $\mathcal{E}_\Sigma$,
Proposition~\ref{prop:data_driven_randomized_gaussian} gives
\begin{align}
\mathbb{E}_{\alpha\sim\rho}[L(\alpha)]
\;\le\; \mathbb{E}_{\alpha\sim\rho}[\widehat{L}_S(\alpha)]
+ \mathrm{Comp}_{\mathrm{data}}(\rho).
\end{align}
Since $\mathbb{E}_\rho[\|\alpha\|^2]<\infty$, the second-moment identities of
Proposition~\ref{prop:quadratic_decomposition} apply, and with
$G_\alpha = \mathbb{E}[A(w)^\top A(w)]$ and its empirical counterpart $\widehat{G}_S$,
\begin{align}
\mathbb{E}_{\alpha\sim\rho}[L(\alpha)] = L(\mu_\rho) + \mathrm{tr}(G_\alpha \Sigma_\rho),
\quad
\mathbb{E}_{\alpha\sim\rho}[\widehat{L}_S(\alpha)] = \widehat{L}_S(\mu_\rho) + \mathrm{tr}(\widehat{G}_S \Sigma_\rho).
\end{align}
Substituting both identities and rearranging yields
$$
L(\mu_\rho) \;\le\; \widehat{L}_S(\mu_\rho)
+ \mathrm{Comp}_{\mathrm{data}}(\rho)
+ \mathrm{tr}\!\big((\widehat{G}_S - G_\alpha)\Sigma_\rho\big).
$$
By the key identity~\eqref{eq:key-identity-Arho},
$\mathrm{tr}\big((\widehat{G}_S - G_\alpha)\Sigma_\rho\big)
= \mathrm{tr}\big(A_\rho(\widehat{\Sigma}_w - \Sigma_w)\big)$,
 Lemma~\ref{lem:curvature_mismatch_covariance} applies and gives
\begin{align*}
\mathrm{tr}\!\big((\widehat{G}_S - G_\alpha)\Sigma_\rho\big) \;\le\; \epsilon_\Sigma \,\mathrm{tr}(A_\rho)
\quad \text{on } \mathcal{E}_\Sigma.
\end{align*}
Combining the two inequalities gives the first claim. For diagonal $\Sigma_\rho$,
$A_\rho = \sum_{k=1}^{p}\sigma_k^2 M_k^\top M_k$, hence
$\mathrm{tr}(A_\rho) = \sum_{k=1}^{p}\sigma_k^2 \|M_k\|_F^2$, which yields the
diagonal posterior formula. The probability statement follows by the union bound.
\end{proof}

\subsection{Explicit Gaussian covariance radius}
\label{app:covariance_radius}

The preceding results are stated conditionally on a valid covariance event. We now give one explicit choice of $\epsilon_\Sigma$ for Gaussian disturbances.

Assume
$$
w_i \overset{\mathrm{i.i.d.}}{\sim}
\mathcal N(0,\Sigma_w),
\quad
\|\Sigma_w\|_{\rm op}\le \bar\sigma_w^2.
$$
Let
$$
\widehat\Sigma_w
=
\frac1n\sum_{i=1}^n w_iw_i^\top.
$$
Fix $\delta_\Sigma\in(0,1)$ and define
$$
a_{n,\delta}
:=
\sqrt{\frac{d_w}{n}}
+
\sqrt{\frac{2\log(2/\delta_\Sigma)}{n}}.
$$
A standard Gaussian sample covariance concentration bound gives, with probability at least $1-\delta_\Sigma$,
$$
\|\widehat\Sigma_w-\Sigma_w\|_{\rm op}
\le
\|\Sigma_w\|_{\rm op}
\left(
2a_{n,\delta}+a_{n,\delta}^2
\right).
$$
Therefore, using the scale bound $\|\Sigma_w\|_{\rm op}\le \bar\sigma_w^2$, one may take
$$
\epsilon_\Sigma
=
\bar\sigma_w^2
\left(
2a_{n,\delta}+a_{n,\delta}^2
\right).
$$

This explicit choice is conservative and depends on the ambient disturbance dimension $d_w$. Sharper covariance concentration inequalities, for example effective-rank bounds, can be substituted without changing the PSD-inflation and curvature-mismatch arguments above. The certificate only requires a valid event of the form
$$
\|\widehat\Sigma_w-\Sigma_w\|_{\rm op}\le \epsilon_\Sigma.
$$

% \subsection{Proof of mismatch}
% \label{app-sec:proof-covariance_mismatch}
Next we prove that this radius is a sound choice.
\begin{proof}
% Let $r=\operatorname{rank}(\Sigma_w)\le d_w$.  Note that $r$ is generally not known, but we will come back to that later.
% On the range of $\Sigma_w$,
% write
% $$
% w_i=\Sigma_w^{1/2}z_i,
% \qquad
% z_i\sim\mathcal N(0,I_r).
% $$
% Let $Z\in\mathbb R^{n\times r}$ have rows $z_i^\top$. Then
% $$
% \widehat\Sigma_w-\Sigma_w
% =
% \Sigma_w^{1/2}
% \left(
% \frac1n Z^\top Z-I_r
% \right)
% \Sigma_w^{1/2},
% $$
% and therefore
% $$
% \|\widehat\Sigma_w-\Sigma_w\|_{\rm op}
% \le
% \|\Sigma_w\|_{\rm op}
% \left\|
% \frac1n Z^\top Z-I_r
% \right\|_{\rm op}.
% $$
Let $r:=\operatorname{rank}(\Sigma_w)\le d_w$. Write the reduced
spectral decomposition
$$
\Sigma_w = Q_R \Lambda_R Q_R^\top,
$$
where $Q_R\in\mathbb{R}^{d_w\times r}$ has orthonormal columns and
$\Lambda_R\in\mathbb{R}^{r\times r}$ is positive definite. Define 
$$
S:=Q_R\Lambda_R^{1/2}\in\mathbb{R}^{d_w\times r},
$$
so that $\Sigma_w=SS^\top$.
Since $w_i\sim\mathcal N(0,\Sigma_w)$, we may write
$$
w_i = S z_i,
\qquad
z_i\sim\mathcal N(0,I_r).
$$
Let $Z\in\mathbb{R}^{n\times r}$ be the matrix whose $i$-th row is
$z_i^\top$. Then
$$
\widehat\Sigma_w
=
\frac{1}{n}\sum_{i=1}^n w_iw_i^\top
=
S\left(\frac1n Z^\top Z\right)S^\top,
$$
and hence
$$
\widehat\Sigma_w-\Sigma_w
=
S\left(\frac1n Z^\top Z-I_r\right)S^\top.
$$
Therefore,
$$
\begin{aligned}
\|\widehat\Sigma_w-\Sigma_w\|_{\mathrm{op}}
&\leq
\|S\|_{\mathrm{op}}^2
\left\|
\frac1n Z^\top Z-I_r
\right\|_{\mathrm{op}} \\
&=
\|\Sigma_w\|_{\mathrm{op}}
\left\|
\frac1n Z^\top Z-I_r
\right\|_{\mathrm{op}}.
\end{aligned}
$$

Let $s_j(Z)$ be the $j$-th singular value of the matrix $Z$, and let the minimum and maximum values be denoted by $s_{\min}$ and $s_{\max}$. 
By the Gaussian extreme singular-value inequality~\citep{davidson2001local}\citep[Theorem~2.6,eq~(2.3)]{rudelson2010non-singular_values}, with probability
at least 
$
1-2\exp{(-\tfrac{t^2}{2})}
$ for $t \ge 0$
\begin{align}
s_{\max}(Z)\le \sqrt n+\sqrt r+t
\end{align}
and
\begin{align}
s_{\min}(Z)\ge \sqrt n-\sqrt r-t.
\end{align}
Setting $t =\sqrt{2\log(2/\delta_\Sigma)}$ gives the probability of $1 - \delta_\Sigma $ and
\begin{align}
s_{\max}(Z)\le \sqrt n+\sqrt r+\sqrt{2\log(2/\delta_\Sigma)}
\end{align}
and
\begin{align}
s_{\min}(Z)\ge \sqrt n-\sqrt r-\sqrt{2\log(2/\delta_\Sigma)}.
\end{align}
Divide both inequalities by $\sqrt{n}$ and define
$$
a(n,\delta_\Sigma,r)
=
\sqrt{\frac{r}{n}}
+
\sqrt{\frac{2\log(2/\delta_\Sigma)}{n}},
$$
we have
$$
\left\|
\frac1n Z^\top Z-I_r
\right\|_{\rm op}
\le
\max\left\{
(1+a(n,\delta_\Sigma,r))^2-1,\,
1-[1-a(n,\delta_\Sigma,r)]_+^2
\right\}.
$$
The clamping to one is necessary because if $a>1$ the $Z$ is rank deficient and has zero singular values, but 
due to the identity matrix, the overall matrix must have a singular value of one.
Since
$$
(1+a)^2-1=2a+a^2
$$
and
$$
1-[1-a]_+^2\le 2a+a^2,
$$
we combine both into
$$
\left\|
\frac1n Z^\top Z-I_r
\right\|_{\rm op}
\le
2a(n,\delta_\Sigma,r)+(a(n,\delta_\Sigma,r))^2.
$$
Finally, a safe (and conservative) choice for $r$ is $r =  d_w$, so $a(n,\delta_\Sigma,r)\le a(n,\delta_\Sigma,d_w)$, and
$\|\Sigma_w\|_{\rm op}\le\bar\sigma_w^2$. This proves the claim.
\end{proof}

\subsection{Remark on affine offsets}
\label{app:affine_offsets_data_driven}

The clean curvature-mismatch formula above assumes $m(\alpha)=0$. If affine offsets are present, then
$$
M(\alpha)w+m(\alpha)
=
\bar M(\alpha)\bar w,
\quad
\bar w:=
\begin{bmatrix}
w\\
1
\end{bmatrix},
$$
where
$$
\bar M(\alpha)
=
\begin{bmatrix}
M(\alpha) & m(\alpha)
\end{bmatrix}.
$$
The same algebra can be repeated with the augmented second-moment matrix
$$
\mathbb E[\bar w\bar w^\top]
=
\begin{bmatrix}
\Sigma_w & 0\\
0 & 1
\end{bmatrix}
$$
in the zero-mean case. However, if the mean is unknown or the offset terms are estimated from data, 
then an additional concentration event for the mean or the augmented second moment is needed. 
For this reason, the main data-driven curvature-mismatch refinement is stated in the zero-mean, zero-offset setting.

%\input{Appendix/distro_inversion}
%\input{Appendix/app_extended_experiments}
%\include{Appendix/app_deterministic_deployment}
%\input{Appendix/quadratic_SLS}
%\include{Appendix/appendix_why_not_K}
%\include{Appendix/appendix_experiments}
%\section{Old}
%\include{old/storage}
%%%%%%%%%%%%%%%%%%%%%%%%%%%%%%%%%%%%%%%%%%%%%%%%%%%%%%%%%%%%

%TODO: DON'T FORGET TO INCLUDE CHECKLIST>
\newpage
%\input{checklist.tex} %INCLUDE CHECKLIST!!!!!!!!!!!!!!!!!!!!!!!!!!!!!!!

    % uses references.bib
\end{document}